\renewcommand{\kbldelim}{[} 
\renewcommand{\kbrdelim}{]}
\newcommand{\Fi}{\mathcal{F}^{-1}}
\newcommand{\F}{\mathcal{F}}
\newcommand{\xx} {\mathbf{X}}
\newcommand{\yy} {\mathbf{Y}}
\newcommand{\zz} {\mathbf{Z}}
\begin{document}
\title[Time Independent Universal Computing with Spin Chains]{Time Independent Universal Computing with Spin Chains: Quantum Plinko Machine}

\author{K F Thompson$^1$, C Gokler$^2$, S Lloyd$^3$, 
and P W Shor$^4$}

\address{$^1$ School of Engineering and Applied Sciences, Harvard University, Cambridge MA 02138, USA}
\address{$^2$ School of Engineering and Applied Sciences, Harvard University, Cambridge MA 02138, USA}
\address{$^3$ Department of Mechanical Engineering, Room 3-160
     Massachusetts Institute of Technology
     77 Massachusetts Ave.
     Cambridge, MA 02139, USA}
\address{$^4$ Department of Mathematics, Room 2-375
     Massachusetts Institute of Technology
     77 Massachusetts Ave.
     Cambridge, MA 02139, USA}
\ead{kft@seas.harvard.edu}

\begin{abstract}
We present a scheme for universal quantum computing using XY Heisenberg spin chains.  Information is encoded into packets propagating down these chains, and they interact with each other to perform universal quantum computation.  A circuit using g gate blocks on m qubits can be encoded into chains of length $O(g^{3+\delta} m^{3+\delta})$ for all $\delta >0$ with vanishingly small error.
\end{abstract}
\pacs{03.65.Aa, 03.67.Ac, 03.67.Hk}

\maketitle

\section{Introduction}
Heisenberg spin chains have many applications in quantum computing.  Primarily they are used as quantum ``wires'' that can communicate a state from one part of a quantum computer to another \cite{Bayat2010, Benjamin2001, Burgarth2005_2, Burgarth2005, Fitzsimons2006, Gong2007, Osborne2004, Wojcik2005, Matthias2004}.  This is an attractive approach to quantum communication becuase it could avoid the problem of converting ``stationary'' qubits into ``flying'' qubits.  Certain types of quantum systems are easy to control and manipulate (NMR), while other types of quantum systems are more difficult to control (photons), but are less susceptible to error during transmission.  Mechanisms are designed to convert ``stationary'' qubits and ``flying'' qubits for purposes of communication between different parts of a quantum computer \cite{Kosaka2008}.  The benefit of having a spin chain as a quantum wire is that it allows one to directly communicate the information, without converting it to another form.  Of particular interest are schemes where information can be transmitted passively, with no control on the overall system, or control only over a small portion of the system\cite{Burgarth2005, Burgarth2005_2, Burgarth2007, Matthias2004, Osborne2004, Wojcik2005}.  For a comprehensive review of this area, please see \cite{Sougato2007}.  

We are particularly interested in the results of \cite{Osborne2004}.  In this paper, the author shows that we can initialize the quantum state of a spin chain to form a Gaussian packet.  If the ``momentum'' of these packets are chosen carefully, they will propagate around the chain in a near dispersion free fashion.  Then, up to some approximation, the spin chain acts like a wire for quantum information.  We will leverage this result in our scheme, as well as provide a new proof that these packets are dispersion free.

There are a number of researchers interested in using these spin chains to implement universal quantum computation \cite{Burgarth2009, Burgarth2010, Fitzsimons2006, Kempe2002, Childs2013, Janzing2007}.  In many of these papers, time dependent control on the system is needed to implement computations.  Taking a few examples, Burgarth has shown that any unitary on a spin chain can be implemented using controls on only the first two qubits\cite{Burgarth2010}, while Benjamin has given a scheme for universal computing using some ``always on'' interactions, and a time independent ``switch''\cite{Benjamin2001}.  Of particular interest to us is the result \cite{Childs2013}.  In this paper, Childs et al. have given a time independent Hamiltonian (for a multiparticle quantum walk) that implements any quantum computation with asymptotically vanishing error.

We provide a construction similar to that of Childs et al. using Heisenberg spin chains.  This approach provides an encoding of any standard qubit space $\left[\mathbb{C}^2 \right]^{\otimes n}$ into a superposition of wavepackets propagating on periodic spin chains of size N.  A time independent Hamiltonian can then be used to implement any circuit (up to some small error dependent on N) on the encoded information.  In contrast with \cite{Childs2013}, we use very large but very weak gates.  This is advantageous because it allows a simplified perturbative analysis as well as better error scaling.  If we let $g$ be the number of gate blocks and $m$ be the number of qubits, we achieve vanishing error for $N=O(m^{3+\delta} g^{3+\delta})$ for all $\delta > 0$.  Our result is not directly comparable with \cite{Childs2013}, since our controlled phase gate cannot be written as a multiparticle walk interaction, but we believe that this result shows a perturbative approach may be more useful for the problem of universal computation with a multiparticle quantum walk.
\section{Background}
In this work, we have set coupling constants and $\hbar$ to 1.  All of our results hold with these constants included, they have been removed for notational simplicity.  Also note that the number of encoded qubits will always be denoted m, and the number of gate blocks will always be denoted g.  

Consider a periodic chain of distinguishable spin 1/2 particles.  Numbering the sites from 0 to N-1, the Hilbert space is the span of all possible spin configurations of the system.  All such vectors have the form $\ket{a_0, ..., a_{N-1}}$ where $a_i \in \{ \downarrow, \, \uparrow\}$.  The first arrow in the string represents the state of spin $0$, the second represents the state of spin 1, the third represents the state of spin 2, etc.  Define the jth excitation subspace to be the subspace spanned by vectors with exactly j $\uparrow$ while the rest are $\downarrow$.  The 0th excitation subspace is 1 dimensional $\ket{\downarrow \downarrow ... \downarrow}$, and we will call this the vacuum state.  For the single excitation subspace, define $\ket{x}$ to be the vector where all spins are down except at position $x$ which is up: $\ket{\downarrow \downarrow ... \downarrow \uparrow \downarrow ... \downarrow}$.

On this ring, we are interested in the nearest neighbor XY Hamiltonian.  Letting the Pauli matrices be defined as they usually are:
\begin{equation}
\xx=\frac{1}{2}
\kbordermatrix{
    \mbox{} & \uparrow & \downarrow\\
    \uparrow & 0 & 1 \\ 
   \downarrow & 1 & 0
}\,\,
\yy=\frac{1}{2}
\kbordermatrix{
    \mbox{} & \uparrow & \downarrow\\
    \uparrow & 0 & -i \\ 
   \downarrow & i & 0
}\,\,
\zz=\frac{1}{2}
\kbordermatrix{
    \mbox{} & \uparrow & \downarrow\\
    \uparrow & 1 & 0 \\ 
   \downarrow & 0 & -1
}
\end{equation}

we define the nearest neighbor Hamiltonian to be:
\begin{equation}
H=H_{{ \rm ring}}^1=2\sum_{j=0}^{N-1} \mathbf{X}_{j} \mathbf{X}_{j+1} +\mathbf{Y}_j\mathbf{Y}_{j+1}=\sum_j \kbordermatrix{
\mbox{} & \uparrow \uparrow & \uparrow \downarrow & \downarrow \uparrow & \downarrow \downarrow \\
\uparrow \uparrow & 0 & 0 & 0 & 0\\
\uparrow\downarrow & 0 & 0 & 1 & 0\\
\downarrow \uparrow & 0 & 1 & 0 & 0\\
\downarrow \downarrow & 0 & 0 & 0 & 0
}
\end{equation}
where $\mathbf{X}_j$ is acting on the jth site, meaning that $\xx_j=\mathbb{I} \otimes ... \otimes \mathbb{I} \otimes \xx \otimes \mathbb{I} \otimes... \otimes \mathbb{I}$, and where the $Nth$ site is understood to be the same as site 0 (periodicity).

The first thing to notice is that the vacuum state $\ket{{\rm vac}}$ has energy $0$.  It is not actually the ground state of the system, some states have negative energy.  The second thing to notice is that the total number of excitations is conserved.  More formally, the operator $\mathbf{Z}_{{ \rm tot}}=\sum_{j=0}^{N-1} \mathbf{Z}_j$ commutes with the Hamiltonian, so the Hamiltonian can be diagonalized with eigenvectors of $\mathbf{Z}_{{ \rm tot}}$.  In particular, the Hamiltonian can be diagonalized in subspaces with some constant number of excitations (eigenspaces of $\mathbf{Z}_{{ \rm tot}}$).  The full set of eigenstates has been well studied \cite{Bethe1931}, however we will be interested in the single excitation eigenvectors.  For $p \in \{0, 1, ..., N-1 \}$ they are of the form:
\begin{equation}\label{momentum}
\ket{p}_{m}=\frac{1}{\sqrt{N}}\sum_{x=0}^{N-1} e^{\frac{2 \pi i p x}{N}}\ket{x}
\end{equation}
with eigenvalue $E(p)=2\cos\left(\frac{2 \pi p}{N}\right)$.  Since these states are orthonormal, and since there are $N$ of them, they constitute a basis for the single excitation subspace.  Consequently, we can expand any vector in this space as a linear combination of this basis or a linear combination of the standard computational basis:
\begin{equation}
\ket{\psi}=\sum_{x=0}^{N-1} A_x \ket{x}=\sum_{p=0}^{N-1} a_p \ket{p}_m
\end{equation}
It is easy to see that the coefficients are related via the the finite Fourier transform:
\begin{equation}
\fl a_p=\F[A](p)=\frac{1}{\sqrt{N}}\sum_{x=0}^{N-1} A_x e^{-\frac{2 \pi i p x}{N}} { \rm   and  } A_x= \Fi[a](x)=\frac{1}{\sqrt{N}}\sum_{p=0}^{N-1} a_p e^{\frac{2 \pi i p x}{N}}
\end{equation}

We can interpret these states as momentum (scattering) states in the discrete setting, and take the derivative of the energy function to obtain the group velocity for a packet narrowly distributed around some value $p_0$.  We think about these chains in the continuous limit, so $\frac{2 \pi p}{N}\approx k$.   In this case:
\begin{equation}
v_g(p)=\frac{dE}{dk}=-2\sin \left( \frac{2 \pi p}{N}\right)
\end{equation}
There are two things to notice about this expression.  The first is that packets travel at constant velocity, and the second is that near $p_0=N/4$ the group velocity is constant up to second order, so we expect these packets to propagate around the chain in a near dispersion free fashion\cite{Osborne2004}.

For our waveforms, we will use a finite (N-periodic) Gaussian\cite{Cotfas2012}.  On our chain this is defined as:
\begin{equation}
G_{\Delta x}=\frac{1}{\sqrt{\Delta x\sqrt{\pi}}}\sum_{x=0}^{N-1} \sum_{\alpha=-\infty}^\infty e^{-\frac{(\alpha N+x-x_0)}{2\Delta x^2}} \ket{x}
\end{equation}
Let $\Delta x$ and $\Delta p$ be positive real numbers that satisfy $2 \pi \Delta x \Delta p=N$.  The packets we have defined (approximately) satisfy a discrete Heisenberg relation:
\begin{eqnarray}
 \fl \frac{1}{\sqrt{\Delta x\sqrt{\pi}}}\sum_{x=0}^{N-1} \sum_{\alpha=-\infty}^\infty e^{\frac{2 \pi i p_0 x}{N}}e^{-\frac{(\alpha N+x-x_0)^2}{2\Delta x^2}} \ket{x}= \nonumber \\
 \frac{1}{\sqrt{\Delta p \sqrt{\pi}}} \sum_{p=0}^{N-1} \sum_{\alpha=-\infty}^\infty e^{-\frac{2 \pi i p x_0}{N}} e^{-\frac{(\alpha N+p-p_0)^2}{2\Delta p^2}}\ket{p}_m
\end{eqnarray}

This state represents a Gaussian packet centered at position $x_0 \,\,{\rm mod}\,\, N$ on the chain and with momentum centered at $p_0 \,\,{\rm mod}\,\, N$.  We expect this packet to approximately translate on the ring with speed $v_g(p_0)$.  After some time $t$, we expect the packets position to be approximately $x_0-v_g(p_0)t$.  For technical reasons, we will assume throughout the paper that $p_0=N/4$ is an integer, while $x_0 \in \,[0, N-1]$ may not be an integer.

We will also make use of some well known results on universality.  Suppose we have a system of n qubits.  It is known that if we can perform an arbitrary single qubit gate on any of the qubits and a controlled phase between any two qubits, then we can implement any unitary on the system \cite{NielsenChuang2011}.  It is also known \cite{NielsenChuang2011} that (up to an overall phase) we can write any single qubit unitary on a system in the form:
\begin{equation}
U=e^{i \theta_3 \zz} e^{i \theta_2 \xx } e^{i \theta_1 \zz}
\end{equation}
We will use these two facts to show our scheme is universal.  

\section{Description of Scheme}
To illustrate our basic idea, consider a Z gate on a single encoded qubit, simulating a quantum circuit on one qubit.  We will encode our circuit using the ``dual rail'' encoding.  In this encoding each logical qubit corresponds to two rails, a 0 rail and a 1 rail.  We will encode the $\ket{0}$ state into a packet propagating down chain 0 and the vacuum state on the other chain, and encode the $\ket{1}$ state into a packet propagating down chain 1 with a vacuum on the other chain.  Any superposition of these states is then realized as a superposition of packets propagating along the rails.

Now imagine the Hamiltonian on the chain 1 has an additional term with adds a small extra phase per unit time.  Let the Hamiltonian for this chain be:
\begin{equation}
\sum_{j=0}^{N-1} \xx_j \xx_{j+1} +\yy_j \yy_{j+1} + \Phi \left(\zz_j+\frac{1}{2}\mathbb{I}_j \right)
\end{equation}
where $\mathbb{I}$ is the standard identity matrix, and $\Phi << 1$.  Now an encoded $\ket{0}$ state will be unaffected by the extra term, since the extra term acts as $0$ on the vacuum state.  On chain 1, however, the momentum states are still diagonal with eigenvalues $2\cos\left(\frac{2 \pi p}{N} \right)+\Phi$. The encoded $\ket{1}$ state will gain an extra phase of $\Phi$ per unit time.  After some time $t$, the $\ket{1}$ state has gained an extra phase $e^{i\Phi t}$.  Effectively, we are implementing an encoded $e^{i\theta \zz}$ operation on the encoded qubit.

\begin{wrapfigure}{r}{0.5\textwidth}
  \begin{center}
    \includegraphics[width=0.48\textwidth]{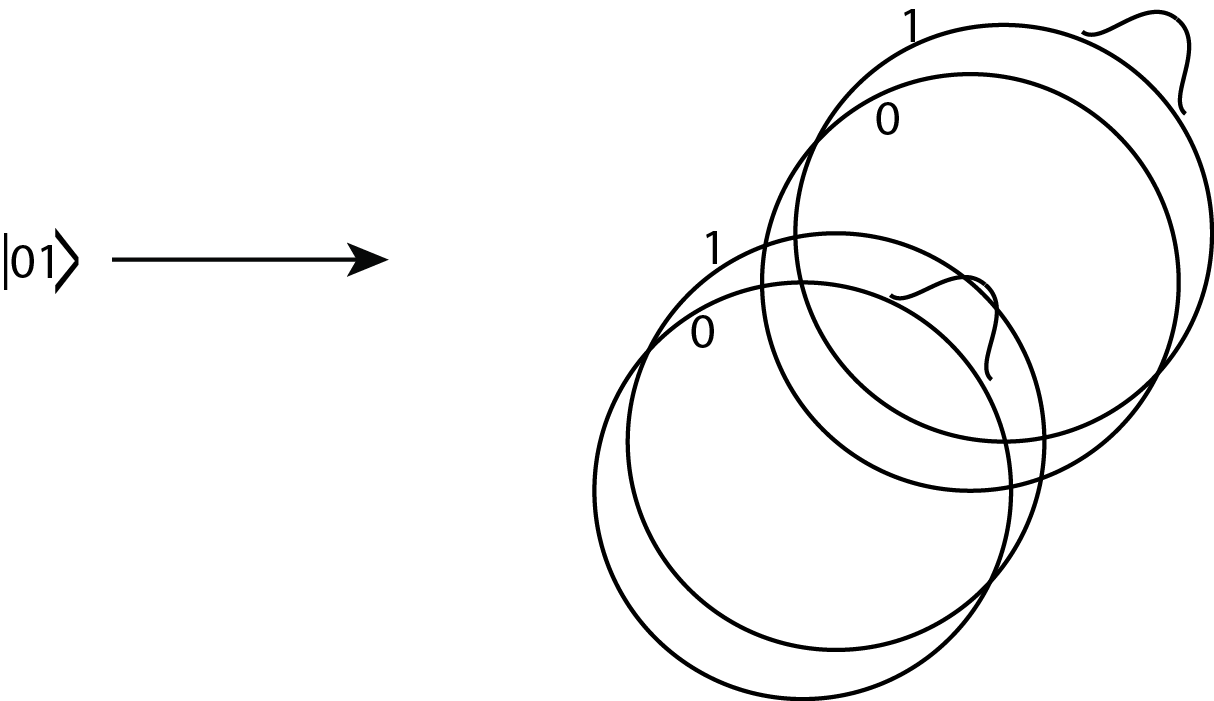}
  \end{center}
  \caption{Dual rail encoding of a computational basis state}\label{dualrail1}
\end{wrapfigure}

Suppose now that the chain is very long and that only a portion of the chain has the added term causing the encoded $\zz$ gate.  We expect that when the packets are well localized outside the gate, they will not ``see'' the added potential.  So, up to some approximation, they will propagate along the chain without gaining extra phase until they reach the edge of the virtual gate.  Now, if the gate is much larger than the packet and the packet is well localized inside the gate, we expect it will appear to the packet as though the gate spans the whole chain (again up to some approximation).  The remaining issue is the ``transient'' regime where the packet is transitioning from outside the gate to being inside the gate.  Fortunately, provided the gate strength is weak ($\Phi <<1$) we can employ a bound from \cite{Loan1977} to show that the transient regime has a very small effect on the propagating packets.

Note that for this gate, and for the gates that follow, the gate Hamiltonian only commutes with the ring Hamiltonian when the gate spans the entire ring.  When the gate is located in someregion of the qubit ring, $[H_{\rm ring}, H_{\rm gate}]$ is some small operator supported at the edges of the gate.

\subsection{Encoding}

Now we generalize some of the previous discussion to the simulation of many qubits.  Much like in the previous section, a dual rail encoding \cite{NielsenChuang2011} \cite{Childs2013} is employed.  Consider a computation with g gates on m qubits.  Given 2 rails for each qubit let one correspond to the ``0'' rail and the other to the ``1'' rail.  Let us index the rails with two numbers.  The first number is a 0 or 1 depending on whether we are talking about the 0 or the 1 rail.  The second number indicates identifies the specific qubit.  The 0 rail for the jth qubit would be denoted $(0, j)$.  For Pauli operators, we will use the superscript to denote which rail the Pauli operator acts on, and a subscript to denote which site on the rail the Pauli operator acts on.  A Pauli operator acting on the 0th rail of the 3rd qubit at the 5th site will be denoted $\mathbf{X}_{5}^{(0, 3)}$.  The nearest neighbor Hamiltonian for all $2m$ chains can then be written as:
\begin{equation}
H=H_{{ \rm rings}}^{2m} = \sum_{a=0}^1 \sum_{b=1}^m \sum_{c=0}^{N-1} \mathbf{X}^{(a, b)}_c \mathbf{X}^{(a, b)}_{c+1} + \mathbf{Y}^{(a, b)}_c \mathbf{Y}^{(a, b)}_{c+1}
\end{equation}

Now, let us encode computational basis states into Gaussian packet on the appropriate rails centered around $p_0=N/4$.  The state $\ket{01}$ would be encoded as (see \cref{dualrail1}):
\begin{eqnarray}
\ket{\psi}=\left(\sum_{p_1=0}^{N-1}\sum_{\alpha=-\infty}^\infty e^{-\frac{2 \pi i p_1 x_0}{N}}e^{-\frac{(\alpha N +p_1-p_0)^2}{2\Delta p^2}}  \ket{p_1}_m\right) \otimes \ket{{ \rm vac}} \otimes \ket{{ \rm vac}} \otimes  \nonumber \\
 \left( \sum_{p_2=0}^{N-1} \sum_{\alpha=-\infty}^\infty e^{-\frac{2 \pi i p_2 x_0}{N}}e^{-\frac{(\alpha N +p_2-p_0)^2}{2\Delta p^2}}\ket{p_2}_m\right)
\end{eqnarray}
A superposition of $\{\ket{00}, \ket{01}, \ket{10} \ket{11}\}$ would be encoded as a superposition of the packets previously described.  Packets will propagate down the chains and interact weakly with sets of gates and with each other.  We will describe methods for performing encoded CPHASE operations, as well as $e^{i \theta \zz}$ and $e^{i \theta \xx}$ in the following sections, thereby giving a scheme for universal computation.  

\subsection{Z Gate}

The $\zz$ gate has already been described, we will recap the discussion here briefly.  Our $\zz$ gate consists of a very long but very weak interaction placed on the 1 rail.  For each point inside the gate we add an extra term $\mathbf{Z} +\frac{1}{2}\mathbb{I}$.  Locally, the momentum eigenstates are still diagonal, so this has the effect of adding an additional phase of $\Phi$ per unit time.  If the gate is extended to the whole chain, we would be able to write the energy as $2\cos \left(\frac{2 \pi p}{N} \right)+\Phi$.

The packets travel at constant velocity (the group velocity of the packet), so the time a packet spends inside a gate is roughly proportional to the size of the gate.  If we ignore the transient regime, and we assume that a localized packet does not ``see'' the outside of the gate, then the overall phase gained from a gate is $\approx { \rm gate\, length} * \Phi$.  For computational universality, we need to show how to add any phase to our packet.  So, we need the gate strength $\Phi$ to be at least $\Omega \left(\frac{1}{{ \rm gate\, length}}\right)$.

\subsection{X Gate}
Again consider one encoded qubit and 2 rails.  Connect each point $j$ on the 0 rail to its adjacent point on the 1 rail with a coupling of the form $2\Phi( \xx^0_j \xx^1_j +\yy^0_j \yy^1_j)$, where the qubit has not explicitly been designated since there is only one.  The full Hamiltonian will then have the form:
\begin{equation}
H=H_{{ \rm ring}}^2+\sum_{j=0}^{N-1} 2 \Phi ( \xx^0_j \xx^1_j +\yy^0_j \yy^1_j)
\end{equation}
The extra interaction term commutes with the ring Hamiltonian, so there is a basis which diagonalizes both.  The relevant (single excitation) states are $\ket{\pm, p}=\ket{p}_m \otimes \ket{{ \rm vac}}\pm \ket{{ \rm vac}}\otimes \ket{p}_m$ with energy $E(\ket{\pm, p})=2 \cos \left(\frac{2 \pi p}{N} \right) \pm \Phi$.  We can construct encoded $\ket{\pm}$ states by taking appropriate linear combinations of the aforementioned states.  Assuming that the packets are centered at 0 these have the form:
\begin{equation}
\ket{\pm}=\sum_{\alpha, p} e^{-\frac{(\alpha N+p-p_0)^2}{2\Delta p^2}} \ket{\pm, p}
\end{equation}

An encoded $\ket{+}$ state gains an extra phase of $\Phi$ per unit time, and an encoded $\ket{-}$ state gains an extra phase of $-\Phi$ per unit time.  This type of gate creates a relative phase between encoded $\ket{\pm}$ states and thus implements an encoded $e^{i \theta \xx}$ gate.  The same scaling analysis applies.  We need $({ \rm gate \, strength}) \geq ({ \rm gate \, length})$ to implement some constant phase.

\subsection{Controlled Z Gate}
Now imagine we have two encoded qubits and we are interested in performing a controlled phase operation.  Connect each site on the $(1, 1)$ rail with every other site on the $(1, 2)$ rail with an interaction of the form 
\begin{equation}
H_{{ \rm int}}=\Phi(\zz +\frac{1}{2}\mathbb{I}) \otimes (\zz+\frac{1}{2}\mathbb{I})=
\kbordermatrix{\mbox{} & \uparrow \uparrow & \uparrow \downarrow & \downarrow \uparrow & \downarrow \downarrow\\
\uparrow \uparrow & 1 & 0 & 0 & 0\\
\uparrow \downarrow & 0 & 0 & 0 & 0\\
\downarrow \uparrow & 0 & 0 & 0 & 0\\
\downarrow \downarrow & 0 & 0 & 0 & 0
} 
\end{equation}
In other words, the new Hamiltonian is:
\begin{equation}
H=H_{{ \rm ring}}^4+ \Phi\sum_{i \leq j =0}^{N-1} (\zz_i^{(1, 1)} +\mathbb{I}_i^{(1, 1)}) \otimes (\zz_j^{(1, 2)} +\mathbb{I}_j^{(1, 2)})
\end{equation}

Just as in the phase gates, the momentum states are still diagonal.  Ordering the rails $(0, 1), (1, 1), (1, 2), (0, 2)$, momentum states of the form $\ket{{ \rm vac}} \otimes \ket{p_1} \otimes \ket{p_2} \otimes \ket{{ \rm vac}}$ have eigenvalues $2 \cos \left( \frac{2 \pi p_1}{N} \right)+2 \cos \left( \frac{2 \pi p_2}{N}\right) + \Phi $.  The other momentum states are unchanged.  In other words, the encoded $\ket{11}$ state gains an extra phase of $\Phi$ per unit time, and nothing happens to the other encoded states.  We can thus perform an encoded controlled phase operation on our qubits.  

\begin{figure}
\centering
\begin{subfigure}{.5\textwidth}
  \centering
  \includegraphics[width=0.8\linewidth]{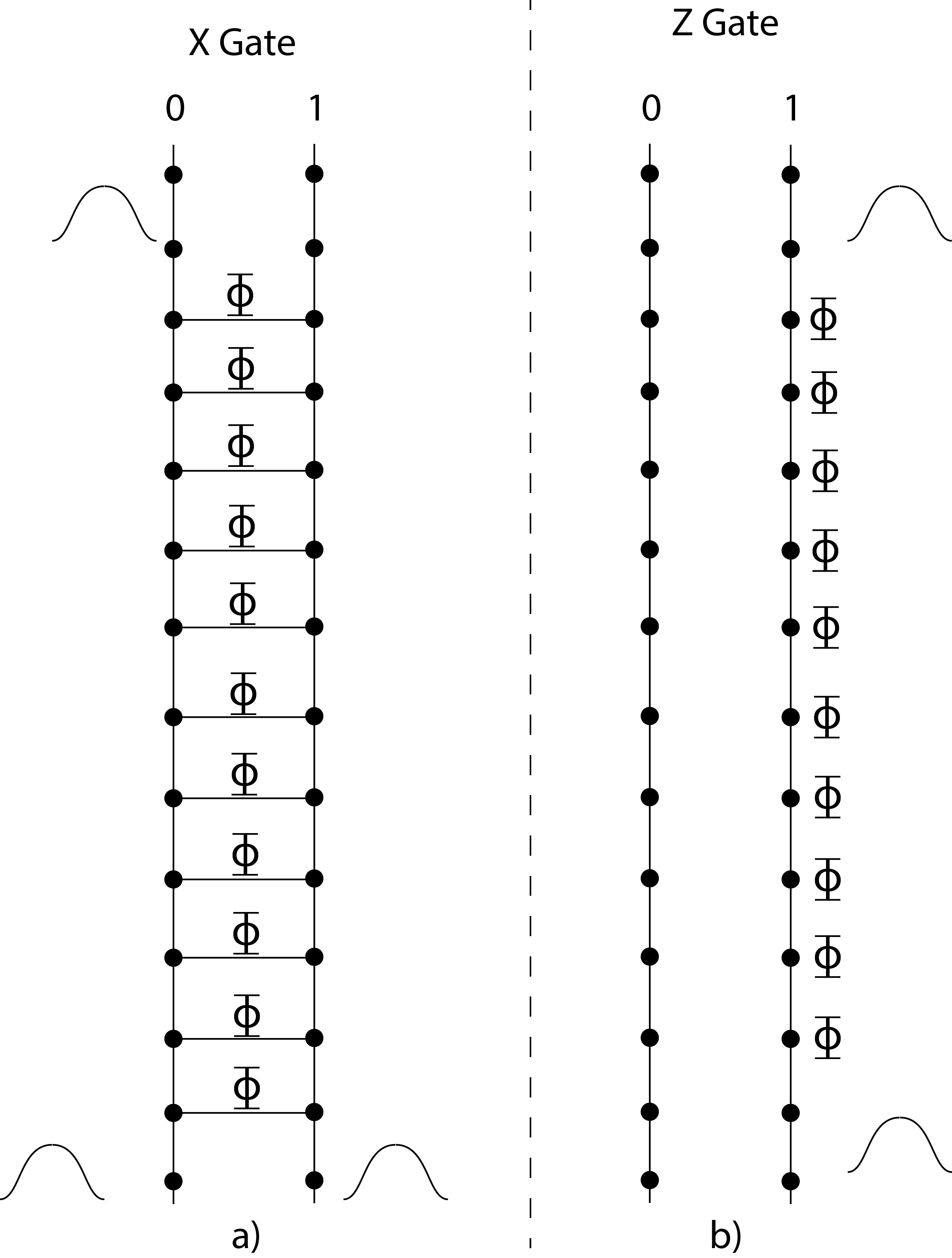}
  \caption{Our proposed $\xx$ and $\zz$ gates}
\end{subfigure}%
\begin{subfigure}{.5\textwidth}
  \centering
  \includegraphics[width=.9\linewidth]{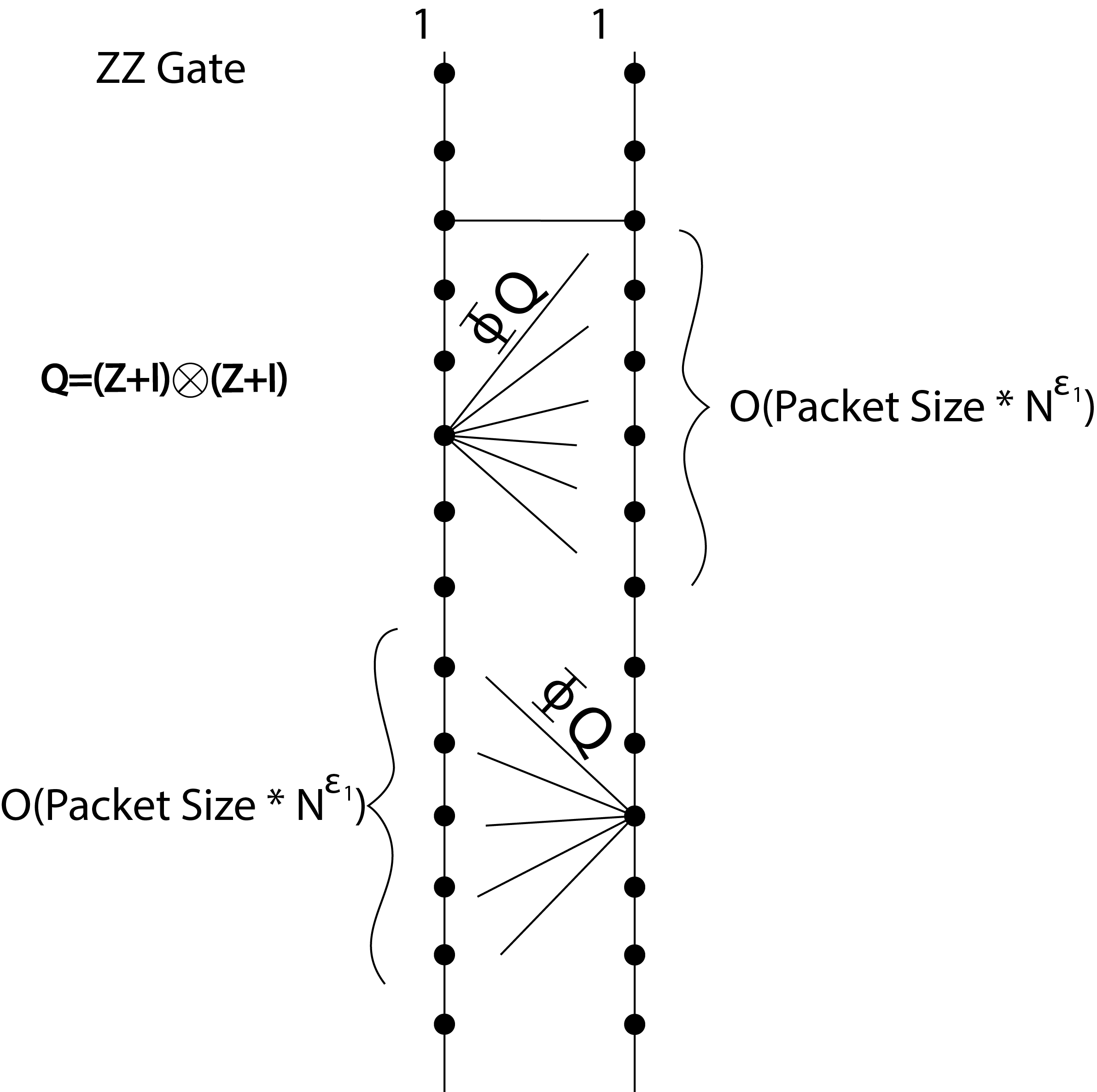}
  \caption{Our proposed CPHASE gate}
  \label{cphase}
\end{subfigure}
\caption{Proposed Gates}
\end{figure}

It may be unclear how exactly to implement a gate like this.  After all, needing to connect every point on one rail to every other point on the other rail is a very non local interaction.  It is hard to see exactly how to truncate the Hamiltonian so that a localized packet will only ``see'' this interaction.  For this we use the following (see \cref{cphase}).  Connect each point on the $(1, 1)$ rail with the point adjacent to it and to the points on the rail $(1, 2)$ that are a distance {\em (packet size)*(some fractional power of N away)}.  If the packets are well localized, they should not ``see'' any missing interactions.

\subsection{Gate Blocks}
Our gate blocks will be defined in the following way.  Given a description of some circuit using phase and single qubit gates on m qubits, partition the gates into operations that can be done simultaneously.  If the circuit is on four qubits and the first gates are an x gate on the first qubit, a controlled phase gate on qubits 2 and 3 and a z gate on qubit 4, we would group these operations together, since they can  be done simultaneously.  Our first gate block then would consist of these virtual operations on the rails (see \cref{block1}).  Immediately after this, place the second group of simultaneous operations, block 2 \cref{blocks}, and proceed until the entire circuit has been applied.

\begin{figure}
\centering
\begin{subfigure}{.5\textwidth}
  \centering
  \includegraphics[width=1\linewidth]{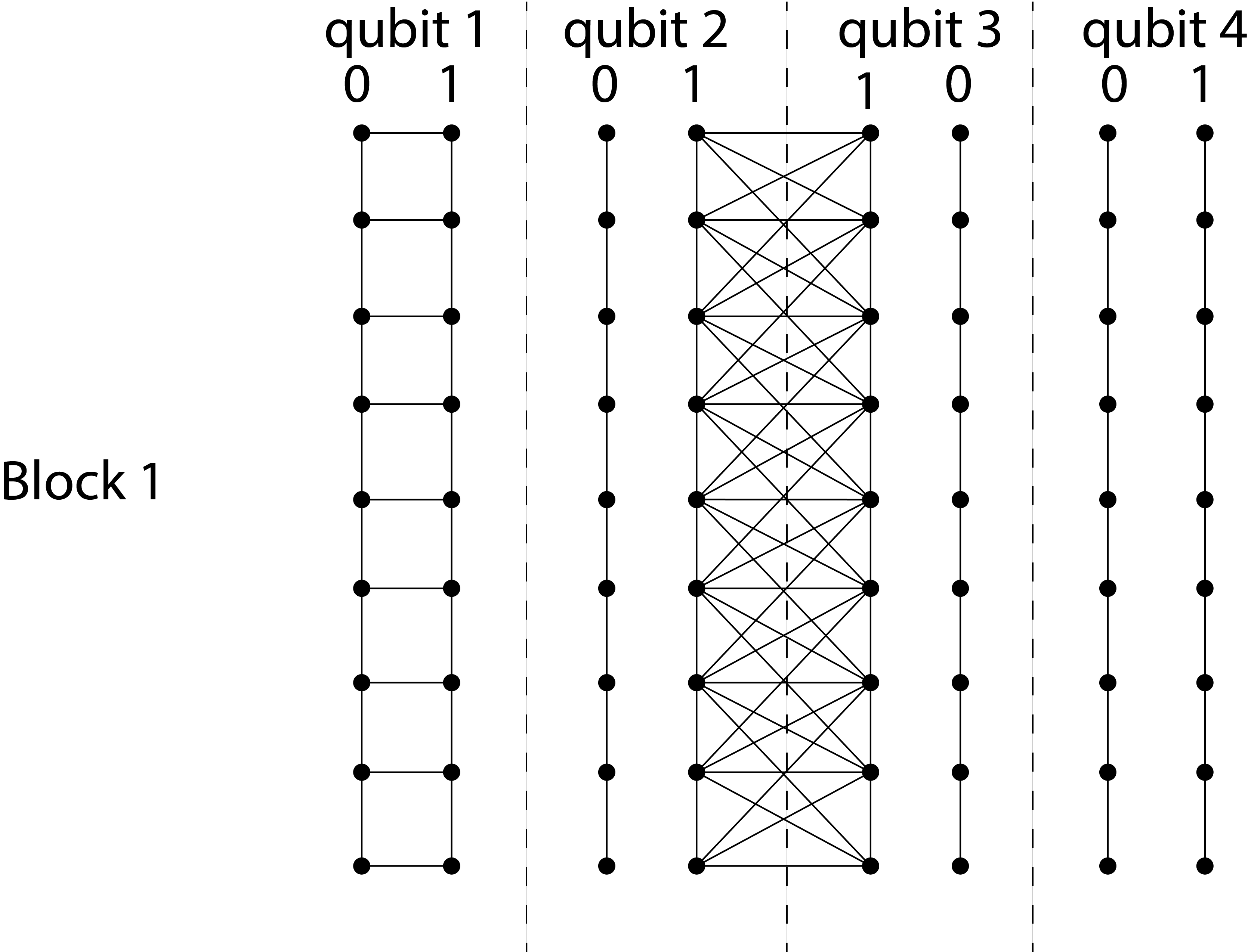}
  \caption{Block 1: the first group of operations}
  \label{block1}
\end{subfigure}%
\begin{subfigure}{.5\textwidth}
  \centering
  \includegraphics[width=.4\linewidth]{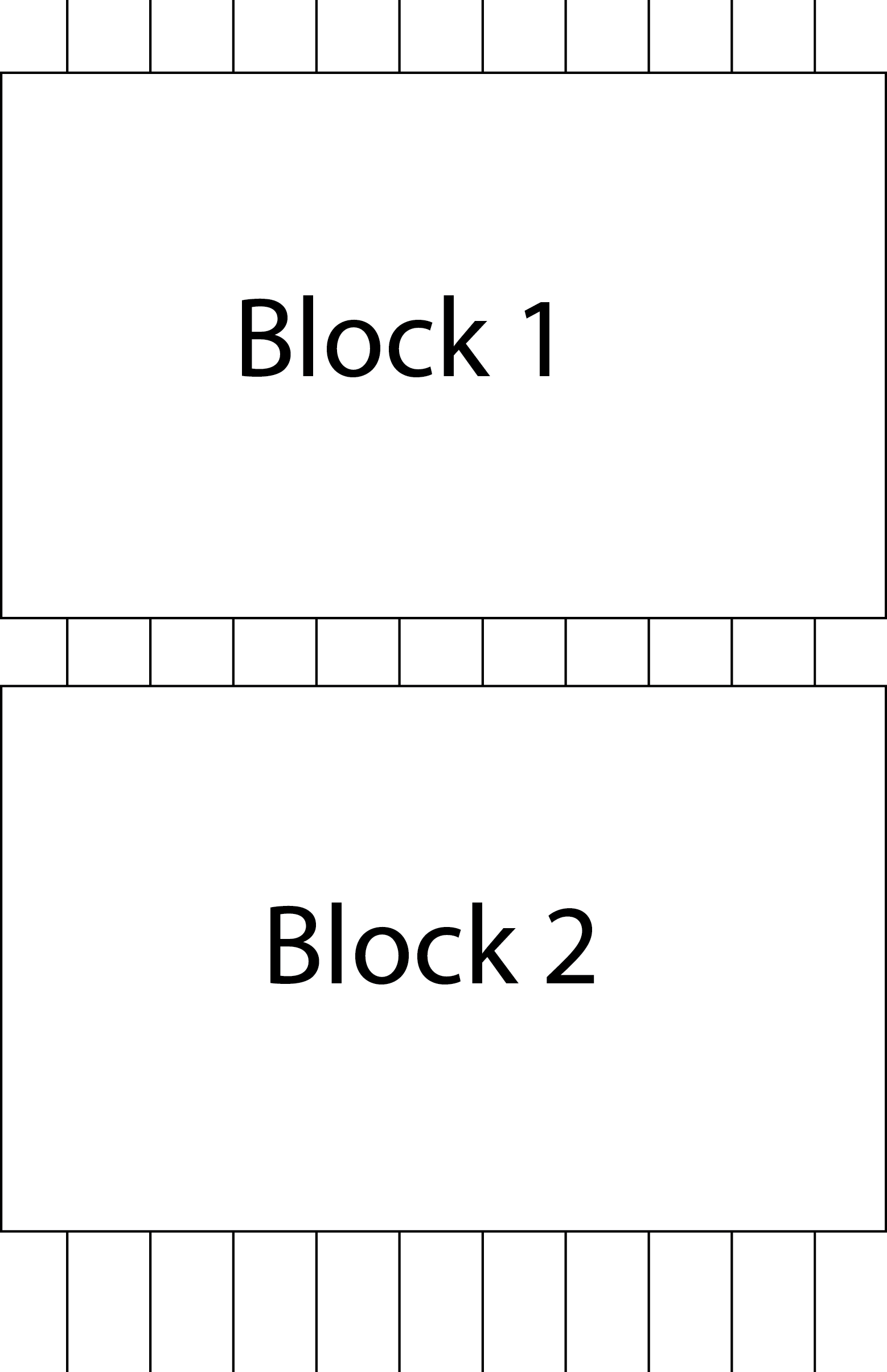}
  \caption{The composition of two blocks}
  \label{blocks}
\end{subfigure}
\caption{Gate Blocks}
\end{figure}

By varying $\Phi$ we can have the gates discussed implement encoded $e^{i \theta \xx}$, $e^{i \theta \zz}$, and $e^{i \theta (\zz+\frac{1}{2} \mathbb{I})\otimes (\zz+\frac{1}{2}\mathbb{I})}$ operations for any $\theta$.  We can therefore accomplish universal computing in this way.

\section{Description of Bounds}
In this section we will provide a statement of our bounds, as well as a brief description of our strategy for obtaining them.  Formal proofs can be found in the appendix.  

As was mentioned already, there are three aspects of our scheme to worry about.  The first is that our packets are designed to be dispersion free up to second order, but still do in fact have some small higher order dispersion.  In our final scaling, we will take $\Delta p \approx N^{2/3}$.  Since the full range of $p$ is $\{0, ..., N-1 \}$ in the limit of large $N$, the spread in momentum will be a vanishingly small fraction of the allowed values of $p$.  So, in the limit of large $N$ we expect the group velocity to get closer and closer to being constant, and thus the packets to get closer and closer to being dispersion free.  The bound we obtain is:

\begin{restatable}[Dispersion Result]{thm}{disp} \label{thm:disp}
Given a single ring with nearest neighbor XY couplings $\displaystyle H_{{ \rm ring}}^1=2\sum_{j=0}^{N-1}\mathbf{X}_j \mathbf{X}_{j+1} +\mathbf{Y}_j \mathbf{Y}_{j+1}$.  Let $U_{2t}$ translate the the packets a distance 2t (time multiplied by the group velocity of the packet).  Then if $\ket{\psi}$ is a wavepacket centered at  $x_0$ in x space and $p_0=N/4$ in p space, 
$$
\ket{\psi}=\frac{1}{\sqrt{\Delta p \sqrt{\pi}}}\sum_{p=0}^{N-1} \sum_{\alpha=-\infty}^\infty e^{-\frac{2 \pi i p x_0 }{N}} e^{-\frac{(\alpha N +p-p_0)^2}{2\Delta p^2}} \ket{p}_m
$$
we have:
$$
\left| \left| (e^{-i H_{{ \rm ring}} t}-U_{2t})\ket{\psi}\right| \right| =O\left(\frac{t \Delta p^3}{N^3}\right)
$$
\end{restatable}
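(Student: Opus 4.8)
The plan is to carry out the entire estimate in the momentum basis $\{\ket{p}_m\}$ of \cref{momentum}, within the single-excitation subspace (which is invariant under $H_{{\rm ring}}^1$ because total excitation number is conserved); in that basis both the true evolution and the idealized translation are diagonal. First I would make $U_{2t}$ precise: it is the unitary that advances each momentum component by the phase predicted by the \emph{linearized} dispersion relation about $p_0=N/4$, i.e.\ $U_{2t}\ket{p}_m=e^{-i\tilde E(p)t}\ket{p}_m$ where $\tilde E(p)=E(p_0)+v_g(p_0)\,\frac{2\pi}{N}(p-p_0)$, $E(p)=2\cos(\frac{2\pi p}{N})$ is the true dispersion, and $p-p_0$ is taken with representative in $(-N/2,N/2]$. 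Since $E(p_0)=0$ and $v_g(p_0)=-2$, this is exactly ``rigidly propagate the packet at the group velocity,'' carrier phase included; note $e^{-iH_{{\rm ring}}^1 t}\ket{p}_m=e^{-iE(p)t}\ket{p}_m$.

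Because $E$ and $\tilde E$ are real functions of the same commuting observable (the momentum), $e^{-iH_{{\rm ring}}^1 t}$ and $U_{2t}$ commute, so $e^{-iH_{{\rm ring}}^1 t}-U_{2t}=U_{2t}\big(e^{-i(H_{{\rm ring}}^1-\tilde H)t}-\mathbb{I}\big)$ with $\tilde H\ket{p}_m=\tilde E(p)\ket{p}_m$, and unitarity of $U_{2t}$ together with the elementary bound $|e^{-ia}-1|\le|a|$ give
\begin{equation}
\big\|\,(e^{-iH_{{\rm ring}}^1 t}-U_{2t})\ket{\psi}\,\big\|\;\le\;t\,\big\|\,(H_{{\rm ring}}^1-\tilde H)\ket{\psi}\,\big\| .
\end{equation}
(The same bound follows from Duhamel's formula if one prefers not to use commutativity.) It then remains to prove the time-independent estimate $\|(H_{{\rm ring}}^1-\tilde H)\ket{\psi}\|=O(\Delta p^3/N^3)$. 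Writing $\ket{\psi}=\sum_p c_p\ket{p}_m$ with the periodized Gaussian coefficients $c_p$ given in the statement (so $|c_p|^2$ is a periodized Gaussian of width $\Delta p$ in $p$, independent of $x_0$), orthonormality of the $\ket{p}_m$ collapses everything to one sum,
\begin{equation}
\big\|\,(H_{{\rm ring}}^1-\tilde H)\ket{\psi}\,\big\|^2=\sum_{p=0}^{N-1}|c_p|^2\,|R(p)|^2 ,\qquad R(p):=E(p)-\tilde E(p).
\end{equation}

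The step that uses the special value $p_0=N/4$ is the Taylor bound on $R$. At $k_0=\frac{2\pi p_0}{N}=\frac{\pi}{2}$ we have $E(k_0)=0$, $E'(k_0)=-2$, and crucially $E''(k_0)=-2\cos(\pi/2)=0$, so not only the linear but also the quadratic term of the Taylor expansion is reproduced by $\tilde E$. Taylor's theorem with Lagrange remainder together with $|E'''(k)|=|2\sin k|\le 2$ then yields $|R(p)|\le\frac13\big(\frac{2\pi}{N}\big)^3|p-p_0|^3$ (using the representative of $p-p_0$, which keeps $|k-k_0|\le\pi$ so the bound on $E'''$ is uniform). Substituting, $\|(H_{{\rm ring}}^1-\tilde H)\ket{\psi}\|^2\le\frac19\big(\frac{2\pi}{N}\big)^6\sum_p|c_p|^2|p-p_0|^6$, and the remaining sum is, up to a universal constant, the sixth moment of the momentum probability distribution $|c_p|^2$, which I would bound by $O(\Delta p^6)$. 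Plugging back and taking square roots gives $\|(H_{{\rm ring}}^1-\tilde H)\ket{\psi}\|=O(\Delta p^3/N^3)$ and hence the claimed $\|(e^{-iH_{{\rm ring}}^1 t}-U_{2t})\ket{\psi}\|=O(t\Delta p^3/N^3)$.

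The main obstacle is precisely this last estimate: that the discrete, \emph{periodized} Gaussian really has sixth moment $O(\Delta p^6)$. One has to show that replacing $\sum_\alpha e^{-(\alpha N+p-p_0)^2/(2\Delta p^2)}$ by a single continuous Gaussian, and the sum over $p\in\{0,\dots,N-1\}$ by an integral over $\mathbb{R}$, costs only errors that are super-exponentially small in $(N/\Delta p)^2$ — the same Poisson-summation and tail bookkeeping that underlies the approximate discrete Heisenberg relation quoted earlier, and valid because $\Delta p\ll N$ in the regime of interest ($\Delta p\approx N^{2/3}$). The only other care needed is minor: using the correct branch of $k-k_0$ in the Taylor bound, and noting that on the exponentially small tails where $|p-p_0|$ is of order $N$ the crude bounds $|R(p)|\le 2+2\pi$ and $|e^{-iR(p)t}-1|\le 2$ already render those terms negligible at the stated order, so no further case analysis is required.
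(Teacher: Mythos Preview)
Your proposal is correct and follows essentially the same route as the paper's proof: diagonalize in the momentum basis, Taylor expand $E(p)=2\cos(2\pi p/N)$ about $p_0=N/4$ so that the second-order term vanishes and the remainder $R(p)$ is cubic, apply $|e^{-ia}-1|\le|a|$ (the paper writes it as $\sin^2\theta\le\theta^2$), and reduce to the sixth moment of the momentum-space Gaussian. The only cosmetic differences are that you bound $R(p)$ via Taylor--Lagrange while the paper uses the alternating-series first-term bound, and you defer the discrete-versus-integral bookkeeping for the sixth moment whereas the paper invokes its \cref{sum_bound} explicitly; neither changes the argument.
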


We can see from this result that if the packet has a spread $N^{2/3-\varepsilon}$ in momentum space ($N^{1/3+\varepsilon}$ in x space), the packet will make it all the way around the chain with vanishingly small dispersion in the limit of large N.  If $\Delta p=N^{2/3-\varepsilon}$ and the packet completes a revolution of the chain ($2t=N$), then the packet is dispersion free up to an error $O\left( \frac{(N^{2/3-\varepsilon})^3}{N^3}\right)=O \left( \frac{1}{N^{3\varepsilon}}\right)$.  We will need the following theorem proven in \cite{Kitaev2002} and \cite{Childs2013} to extend this analysis to multiple chains:
\begin{restatable}[Linear Propagation of Error]{thm}{kitaev}\label{thm:kitaev}
Let $U_1, ..., U_n$ and $V_1, ..., V_n$ be unitary operators.  Then for any $\ket{\psi}$,
\begin{equation}
\left| \left| \left(\prod_{i=n}^1 U_i-\prod_{i=n}^1 V_i \right)\ket{\psi}\right| \right| \leq \sum_{j=1}^n \left| \left| (U_j-V_j)\prod_{i=j-1}^1 U_i \ket{\psi}\right| \right|
\end{equation}
\end{restatable}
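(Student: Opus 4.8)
The plan is to prove \cref{thm:kitaev} by a standard hybrid (telescoping) argument: interpolate between the product of the $U_i$ and the product of the $V_i$ one factor at a time, and use the fact that a product of unitaries is again unitary — hence norm preserving — to discard the prefixes that have already been substituted.

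First I would introduce, for each $j \in \{0,1,\dots,n\}$, the hybrid operator
\[
W_j = \left(\prod_{i=n}^{j+1} V_i\right)\left(\prod_{i=j}^{1} U_i\right),
\]
with the convention that a product over an empty (descending) range is the identity, so that $W_n = \prod_{i=n}^{1} U_i$ and $W_0 = \prod_{i=n}^{1} V_i$. The operator appearing in the statement then telescopes,
\[
\prod_{i=n}^{1} U_i - \prod_{i=n}^{1} V_i \;=\; W_n - W_0 \;=\; \sum_{j=1}^{n}\bigl(W_j - W_{j-1}\bigr).
\]

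Next I would read off a single term of this sum. Since $W_j$ and $W_{j-1}$ have the same left factor $\prod_{i=n}^{j+1} V_i$ and the same right factor $\prod_{i=j-1}^{1} U_i$, differing only in whether the $j$-th slot holds $U_j$ or $V_j$, one gets
\[
W_j - W_{j-1} \;=\; \left(\prod_{i=n}^{j+1} V_i\right)(U_j - V_j)\left(\prod_{i=j-1}^{1} U_i\right).
\]

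Finally I would apply the telescoped identity to $\ket{\psi}$, invoke the triangle inequality, and then note that each prefix $\prod_{i=n}^{j+1} V_i$ is a product of unitaries and therefore preserves the norm, so that the contribution of the $j$-th term is exactly $\bigl\|(U_j - V_j)\prod_{i=j-1}^{1} U_i\ket{\psi}\bigr\|$. Summing over $j$ gives the stated bound. There is no substantive obstacle here; the only point requiring care is the index bookkeeping for the ordered products and the empty-product convention at the endpoints $j=0$ and $j=n$, so that the hybrid sequence genuinely begins at $\prod_{i=n}^{1}U_i$ and ends at $\prod_{i=n}^{1}V_i$.
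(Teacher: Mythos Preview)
Your proof is correct; this is exactly the standard hybrid/telescoping argument. The paper itself does not give a proof of this statement but simply cites \cite{Childs2013, Kitaev2002}, and your argument is the one found in those references.
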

Using the two above theorems, we can find the effect of dispersion on m rings.  
\begin{restatable}{cor2}{mult}\label{disp_multiring}
Suppose that we have $m$ rings with Hamiltonian $H_{{ \rm ring}}^m$ and a tensor product $\ket{\psi}$ of $m$ packets all centered at some position $x_0$ on the rings.  Suppose $p_0=N/4$ for each packet and let $U_{2t}$ be the unitary that translates all the packets a distance $2t$.
Then we have that, 
\begin{equation}
\left|\left| \left(e^{-iH_{{ \rm ring}}^m t}-U_{2t}\right)\ket{\psi}\right| \right| =O\left(\frac{mt \Delta p^3}{N^3} \right)
\end{equation}

\end{restatable}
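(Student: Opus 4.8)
The plan is to deduce the $m$-ring bound from $m$ applications of the single-ring \cref{thm:disp}, gluing them together with the telescoping estimate of \cref{thm:kitaev}. Since the $m$ rings are mutually uncoupled, both evolutions tensorize: $e^{-iH_{{\rm ring}}^m t}=e^{-iH_{{\rm ring}}^1 t}\otimes\cdots\otimes e^{-iH_{{\rm ring}}^1 t}$ and $U_{2t}=U_{2t}^{(1)}\otimes\cdots\otimes U_{2t}^{(m)}$, where $U_{2t}^{(j)}$ translates only the packet living on ring $j$. For $j=1,\dots,m$ define unitaries on the full Hilbert space $U_j:=\mathbb{I}\otimes\cdots\otimes e^{-iH_{{\rm ring}}^1 t}\otimes\cdots\otimes\mathbb{I}$, with the nontrivial factor on ring $j$, and let $V_j$ be the same operator with $U_{2t}^{(j)}$ in place of $e^{-iH_{{\rm ring}}^1 t}$. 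Because the $U_i$ (and the $V_i$) act on distinct tensor factors they commute, so $\prod_{i=m}^1 U_i = e^{-iH_{{\rm ring}}^m t}$ and $\prod_{i=m}^1 V_i = U_{2t}$. \cref{thm:kitaev} then yields
$$
\left\|\left(e^{-iH_{{\rm ring}}^m t}-U_{2t}\right)\ket{\psi}\right\|\ \le\ \sum_{j=1}^m \left\|(U_j-V_j)\prod_{i=j-1}^1 U_i\,\ket{\psi}\right\|.
$$

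Next I would evaluate the $j$-th summand. Write $\ket{\psi}=\ket{\psi_1}\otimes\cdots\otimes\ket{\psi_m}$ with each $\ket{\psi_j}$ a packet centered at $x_0$ with $p_0=N/4$. The operator $\prod_{i=j-1}^1 U_i$ evolves rings $1,\dots,j-1$ and acts as the identity on rings $j,\dots,m$; in particular the $j$-th tensor factor of $\prod_{i=j-1}^1 U_i\ket{\psi}$ is still the original packet $\ket{\psi_j}$. Since $U_j-V_j$ equals $(e^{-iH_{{\rm ring}}^1 t}-U_{2t})$ on ring $j$ tensored with the identity on the other rings, and since those other factors are acted on by unitaries and $\ket{\psi}$ is a unit vector, the $j$-th summand is exactly $\|(e^{-iH_{{\rm ring}}^1 t}-U_{2t})\ket{\psi_j}\|$, which \cref{thm:disp} bounds by $O(t\Delta p^3/N^3)$. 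All the packets $\ket{\psi_j}$ share the same $\Delta p$, the same $p_0=N/4$, and the same $N$, so this bound holds with one common implied constant for every $j$; summing the $m$ terms gives $O(mt\Delta p^3/N^3)$.

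The only step that needs genuine care is the bookkeeping in the second paragraph: one must check that $U_j-V_j$ really is supported on ring $j$ alone and that the partially evolved state $\prod_{i=j-1}^1 U_i\ket{\psi}$ still carries the untouched packet $\ket{\psi_j}$ there, so that \cref{thm:disp} can be invoked verbatim with no leftover cross-ring contribution. Given that, the rest is the routine facts that uncoupled dynamics factor through the tensor product and that unitary maps preserve the norm.
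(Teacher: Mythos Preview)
Your proof is correct and follows essentially the same route as the paper: decompose $H_{\rm ring}^m$ into commuting single-ring Hamiltonians, factor both $e^{-iH_{\rm ring}^m t}$ and $U_{2t}$ as products of single-ring operators, apply \cref{thm:kitaev}, and invoke \cref{thm:disp} on each of the $m$ terms. The paper's own proof is terser and does not spell out the tensor bookkeeping you took care to verify, but the argument is the same.
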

\noindent This easily extends to a superposition in the dual rail encoding.  In this case exactly the same scaling holds with $H_{{ \rm ring}}^m$ replaced with $H_{{ \rm ring}}^{2m}$

The next important bound concerns the ``transient'' regime, where a superposition of packets is entering or leaving a gate.  In this regime, a packet is not really localized within any particular gate block.  Formally, we treat this situation by showing that we can ignore these interactions.  If the gate strength is weak enough, we can translate the packets from outside to inside the gate without picking up much error.  Note that by skipping over a portion of the gate, we are losing out on some of the phase implemented by that particular gate block.  However, the error associated with the lost phase is the same order as the error accumulated from skipping over the transient region.  As a result, it will be negligible in the final scaling (see appendix).  For the analysis, we employ a sensitivity result \cite{Loan1977}.  
\begin{restatable}[Matrix Exponential]{thm}{matexp}\label{loan_thing}
Let $A$ and $E$ be skew Hermitian complex matrices, and let t be a real scalar.  It holds that:
\begin{equation}
\left| \left|e^{(A+E)t}-e^{At} \right| \right| \leq ||E||t e^{||E||t}
\end{equation}
where the norm used is the standard operator norm (largest eigenvalue)
\end{restatable}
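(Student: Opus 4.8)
The plan is to reduce the claim to the Duhamel (variation-of-parameters) identity for the two propagators and then estimate the resulting integral, using the fact that the exponential of a skew-Hermitian matrix is unitary and hence has operator norm $1$. Throughout I would take $t\ge 0$, which is the only case needed in the sequel; the general statement follows by writing $|t|$.

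Fix $t$ and consider the matrix-valued function $s\mapsto e^{(A+E)(t-s)}e^{As}$ on $[0,t]$. Differentiating with the product rule, and using that $A+E$ commutes with its own exponential, gives
\[
\frac{d}{ds}\left(e^{(A+E)(t-s)}e^{As}\right)=e^{(A+E)(t-s)}\bigl(A-(A+E)\bigr)e^{As}=-\,e^{(A+E)(t-s)}\,E\,e^{As}.
\]
As the integrand is smooth in $s$, the fundamental theorem of calculus applies entrywise; evaluating the primitive at $s=0$ (which gives $e^{At}$) and at $s=t$ (which gives $e^{(A+E)t}$) produces the Duhamel representation
\[
e^{(A+E)t}-e^{At}=\int_0^t e^{(A+E)(t-s)}\,E\,e^{As}\,ds.
\]
Verifying this identity is routine; the only point is smoothness of the matrix-valued integrand.

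Taking operator norms under the integral sign and using submultiplicativity,
\[
\left\|e^{(A+E)t}-e^{At}\right\|\le\int_0^t\left\|e^{(A+E)(t-s)}\right\|\,\|E\|\,\left\|e^{As}\right\|\,ds.
\]
Here skew-Hermiticity of $A$ makes $e^{As}$ unitary, so $\|e^{As}\|=1$; this is where the hypothesis is used, and it is what removes the $e^{\|A\|t}$ factor that appears in Van Loan's bound for general (non-normal) generators. For the remaining factor I would either note that $A+E$ is skew-Hermitian as well, so $\|e^{(A+E)(t-s)}\|=1$ and in fact $\|e^{(A+E)t}-e^{At}\|\le\|E\|\,t$, which is strictly stronger than the stated bound, or, following Van Loan, bound $\|e^{(A+E)r}\|\le e^{\|E\|r}$ by a Gr\"onwall argument applied to $e^{(A+E)r}=e^{Ar}+\int_0^r e^{A(r-u)}E\,e^{(A+E)u}\,du$. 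In either case, using $\|e^{(A+E)(t-s)}\|\le e^{\|E\|t}$ for $0\le s\le t$ and integrating the constant $\|E\|$ over $[0,t]$ gives the asserted bound $\|E\|\,t\,e^{\|E\|t}$.

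I do not anticipate a genuine obstacle here: once the Duhamel identity is in place the estimate is immediate, and the only real decision is how crudely to bound $\|e^{(A+E)(t-s)}\|$. Exploiting that $A+E$ is skew-Hermitian yields a better constant, but since the stated form suffices for the transient-regime estimates in the appendix I would simply record the bound as above.
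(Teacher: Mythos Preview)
Your argument is correct. The paper itself does not prove this theorem: its ``proof'' is simply a citation to Kato and Van Loan, so there is no internal argument to compare against. Your Duhamel/variation-of-parameters derivation is the standard route and is exactly what one finds behind those citations.

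One remark worth keeping: you correctly observe that because $A$ and $E$ are both skew-Hermitian, so is $A+E$, and hence $\|e^{(A+E)(t-s)}\|=1$ as well. This immediately gives the sharper inequality $\|e^{(A+E)t}-e^{At}\|\le \|E\|\,t$, with no exponential factor. The stated form $\|E\|\,t\,e^{\|E\|t}$ is Van Loan's bound for general (not necessarily normal) $A$; in the skew-Hermitian setting it is strictly weaker than what your own computation yields. Since everywhere the paper invokes this theorem it only needs the $O(\|E\|t)$ behaviour for small $\|E\|t$, either version suffices, but your tighter bound is the natural one here and you could simply state it. Your handling of the sign of $t$ is also appropriate: the paper uses $t$ as a positive evolution time, and the bound as written is only meaningful for $t\ge 0$.
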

\noindent The result we obtain is:
\begin{restatable}[Bound for Transient Regime]{cor2}{transregime} \label{transregime}
Let $H_{{ \rm ring}}^{2m}$ be the ring Hamiltonian, and let $\hat{H}$ be the gate Hamiltonian.  Then if $\ket{\psi}$ is a superposition of packets in the dual rail encoding, it holds that:
\begin{equation}
 \left| \left|\left( e^{-i(H_{{ \rm ring}}^{2m}+\hat{H})t}-U_{2t}\right)\ket{\psi} \right| \right| =O\left(mt\Phi+\frac{m t \Delta p^3}{N^3}\right)
\end{equation}
\end{restatable}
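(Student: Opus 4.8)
The plan is to insert the bare ring evolution $e^{-iH_{\rm ring}^{2m}t}$ as an intermediate step and control the two resulting pieces separately. First I would write, by the triangle inequality,
\begin{equation}
\left\|\left(e^{-i(H_{\rm ring}^{2m}+\hat H)t}-U_{2t}\right)\ket\psi\right\|\le\left\|\left(e^{-i(H_{\rm ring}^{2m}+\hat H)t}-e^{-iH_{\rm ring}^{2m}t}\right)\ket\psi\right\|+\left\|\left(e^{-iH_{\rm ring}^{2m}t}-U_{2t}\right)\ket\psi\right\|.
\end{equation}
The second term is precisely the dispersion error of a dual-rail superposition under the bare ring Hamiltonian, so \cref{disp_multiring} (in the dual-rail form noted directly after it) bounds it by $O\!\left(mt\Delta p^3/N^3\right)$ with no further work. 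Everything then comes down to the first term, which measures the error made by simply deleting the gate Hamiltonian for the duration of the transient passage.

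For that first term I would invoke \cref{loan_thing}. Taking $A=-iH_{\rm ring}^{2m}$ and $E=-i\hat H$, both of which are skew-Hermitian since they are $-i$ times Hermitian operators, we have $e^{(A+E)t}=e^{-i(H_{\rm ring}^{2m}+\hat H)t}$ and $e^{At}=e^{-iH_{\rm ring}^{2m}t}$, so
\begin{equation}
\left\|e^{-i(H_{\rm ring}^{2m}+\hat H)t}-e^{-iH_{\rm ring}^{2m}t}\right\|\le\|\hat H\|\,t\,e^{\|\hat H\|t},
\end{equation}
and since the operator norm dominates the norm on the unit vector $\ket\psi$, this bounds the first term as well.

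The remaining task is a uniform bound $\|\hat H\|=O(m\Phi)$. Here I would use the explicit form of the gate blocks restricted to the physical excitation sector (one excitation on each occupied rail): each $\zz$-type gate term is $\Phi$ times a projector onto ``the excitation lies in the gate region'' on its rail, each $\xx$-type gate term is $\Phi$ times the involution that swaps the two rails of its qubit, and the controlled-$Z$ term is $\Phi$ times an operator bounded above by the identity on that sector, so every summand of $\hat H$ has norm $O(\Phi)$. Distinct blocks along one rail are supported on disjoint position regions, hence their projectors are mutually orthogonal and do not add up in norm, while gates on distinct qubits act on disjoint rails; summing over the at most $O(m)$ qubits therefore yields $\|\hat H\|=O(m\Phi)$. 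Plugging this in, the first term is $O\!\left(m\Phi t\,e^{O(m\Phi t)}\right)$, and in the weak-gate regime used in the construction $m\Phi t=O(1)$ — indeed the corollary is invoked over transient intervals whose length is of order the packet width $\Delta x$, on which $m\Phi t\to0$ — so $e^{O(m\Phi t)}=O(1)$ and the first term is $O(mt\Phi)$. Combining the two contributions gives the claimed $O\!\left(mt\Phi+mt\Delta p^3/N^3\right)$.

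The main obstacle is exactly this control of $\|\hat H\|$ together with keeping the exponential factor from \cref{loan_thing} benign: one must check that the highly nonlocal controlled-$Z$ coupling genuinely behaves as a bounded-by-identity operator once projected onto the one-excitation-per-rail sector, and that the time scales on which this estimate is applied (or the smallness of $\Phi$) keep $\|\hat H\|t$ of order $mt\Phi$ rather than letting $e^{\|\hat H\|t}$ blow up. An alternative that sidesteps the exponential altogether is Duhamel's formula, $e^{-i(H_{\rm ring}^{2m}+\hat H)t}-e^{-iH_{\rm ring}^{2m}t}=-i\int_0^t e^{-i(H_{\rm ring}^{2m}+\hat H)(t-s)}\hat H\,e^{-iH_{\rm ring}^{2m}s}\,ds$, which gives the first term $\le\int_0^t\|\hat H\|\,ds=O(mt\Phi)$ directly. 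Finally, I would record the observation (needed for the final scaling but not for the corollary itself) that the phase discarded by skipping the transient region is also only $O(mt\Phi)$, so this bookkeeping loses nothing of the order being tracked.
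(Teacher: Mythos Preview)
Your proof is correct and follows the paper's route: split by the triangle inequality, bound the first piece via \cref{loan_thing} together with $\|\hat H\|=O(m\Phi)$, and bound the second piece via the dual-rail dispersion corollary. The only cosmetic difference is that the paper derives $\|\hat H\|_V\le m\Phi$ by observing that on any computational basis state of the dual-rail subspace $V$ at most $m$ local terms of $\hat H$ act nontrivially and then invoking \cref{basis_bound}, rather than through your gate-by-gate projector/involution analysis; your closing remark on the discarded phase also matches the paper's treatment.
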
 
\begin{proof}
see appendix.  
\end{proof}

The main result of this paper is an error bound on the gates themselves.  We show that, up to some approximation, the gates translate the packets, as well as add a phase.  We show this by approximating the circuit Hamiltonian with a Trotterized Hamiltonian \cite{Lloyd1996}, and bound each of the elements in the Trotterization that correspond to gates that are ``far away'' from the packets.  With only the remaining (current) gates left, we can again approximate with a Trotterization and show that the Hamiltonian effectively translates the packets and adds a phase.  The main result is:

\begin{restatable}{thm}{Trotter} \label{steve}
Suppose we have a superposition of packets localized inside some gate block.  Suppose the packets remain localized to a distance $d$ for all $2 t'$ such that $0 \leq 2 t' \leq 2 t$.  Suppose in addition to the current gates, there are $k=O(poly(N))$ additional gates corresponding to gates before or after the current ones with Hamiltonians $H_1$, $H_2$, ..., $H_k$.  Then for sufficiently high degree polynomials $n(N)=N^{q_1}$ and $n'(N)=N^{q_2}$:
\begin{eqnarray}
\left|\left|\left( e^{-i(H_{{ \rm current}}+H_1+...+H_k)t}-U_{2t}^p\right)\ket{\psi}\right|\right|=O\bigg{(}poly(N)e^{-\frac{d^2}{2\Delta x^2}}+\\
\frac{m t \Delta p^3}{N^3}+\frac{m^2 t^2}{nn'} + \frac{t^2 m^2}{n'}\bigg{)}
\end{eqnarray}
\end{restatable}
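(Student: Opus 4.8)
The plan is to approximate the full circuit evolution by a doubly Trotterized one, to peel off the gate blocks that are ``far away'' from the current packet using the fact that a Gaussian packet is exponentially well localized, and then to check that what survives is exactly a translation of the packets together with the encoded phase of the current block, up to the dispersion error of \cref{disp_multiring} and the Trotter errors. Write $H_{\rm current}=H_{\rm ring}^{2m}+\hat H_{\rm cur}$, where $\hat H_{\rm cur}$ is the interaction of the current block (a sum of at most $O(m)$ pieces, each a one-rail phase term or a rail-pair coupling, each of $O(1)$ operator norm once we tune $\Phi$ so that $\Phi$ times a gate length is constant). Let $U_{2t}^{p}$ denote translation of all packets by $2t$ followed by multiplication of the relevant packet(s) by the current block's encoded phase (the superscript $p$ distinguishing it from the bare translation $U_{2t}$ of \cref{transregime}); I would build $U_{2t}^p$ as an $nn'$-fold product of single-substep ``translate a little, phase a little'' maps so that it can be compared with the Trotter factors one substep at a time, with \cref{thm:kitaev} keeping the errors from compounding.

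First I would split $e^{-i(H_{\rm current}+H_1+\cdots+H_k)t}$ into $n=N^{q_1}$ slices by first-order Trotter \cite{Lloyd1996}, isolating each far-gate factor $e^{-iH_i t/n}$. The Trotter error is controlled by $t^2/n$ times a sum of commutator norms $\|[H_a,H_b]\|$; the point is that two gate terms on disjoint sites, or on different rails, commute, so only commutators supported on the $O(1)$-site interface between a gate and its neighbour survive, and because those interfaces are disjoint their contributions do not accumulate a factor of $poly(N)$. Next I would show that each far-gate factor is essentially the identity on the state it acts on: by hypothesis the ideal trajectory $U_{2t'}^{p}\ket{\psi}$ stays a distance $\ge d$ from every far gate for $0\le 2t'\le 2t$, so (assuming for the moment that the Trotterized trajectory stays $O(e^{-d^2/2\Delta x^2})$-close to being $d$-localized) the state entering $e^{-iH_i t/n}$ has amplitude $O(e^{-d^2/2\Delta x^2})$ on $\mathrm{supp}(H_i)$; since $H_i$ annihilates any configuration with no excitation on its support, $\|(e^{-iH_i t/n}-\mathbb{I})\ket{\phi}\|=O(\Phi\,(t/n)\,e^{-d^2/2\Delta x^2})$ for such $\ket\phi$. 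Summing over the $k=O(poly(N))$ far gates and the $n$ slices, and using \cref{thm:kitaev}, the total cost of dropping the far gates is $O(poly(N)\,e^{-d^2/2\Delta x^2})$; the same estimate disposes of the tail of $\hat H_{\rm cur}$ lying outside the packet and of the (same order) phase lost by skipping the transient pieces of the current gate, exactly as in \cref{transregime}.

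What remains is $e^{-iH_{\rm current}t/n}=e^{-i(H_{\rm ring}^{2m}+\hat H_{\rm cur})t/n}$ acting on a packet well inside the current block, which I would Trotterize a second time into $n'=N^{q_2}$ substeps, separating $H_{\rm ring}^{2m}$ from $\hat H_{\rm cur}$. On each substep $e^{-iH_{\rm ring}^{2m}t/(nn')}$ translates all the packets a distance $2t/(nn')$ up to dispersion error $O\!\left(\frac{m\,(t/nn')\,\Delta p^3}{N^3}\right)$ by the dual-rail form of \cref{disp_multiring}, and these telescope over all $nn'$ substeps to $O\!\left(\frac{mt\,\Delta p^3}{N^3}\right)$; meanwhile $e^{-i\hat H_{\rm cur}t/(nn')}$ multiplies the relevant packet by the block phase $e^{-i\Phi t/(nn')}$ up to the same $e^{-d^2/2\Delta x^2}$ tail, because a packet lying entirely inside a gate ``sees'' only the constant term $\Phi$. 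The Trotter errors from the two levels, once all the relevant commutators are bounded (again $O(\Phi)$, or smaller, on the single-excitation-per-rail subspace, with most pairs commuting), assemble into $\frac{m^2t^2}{nn'}+\frac{t^2m^2}{n'}$. Chaining the per-substep estimates against the corresponding factors of $U_{2t}^p$ by \cref{thm:kitaev}, and choosing $q_1,q_2$ large enough that $n$ and $n'$ dominate the relevant polynomials in $N$, gives the stated bound.

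The hard part is the bootstrapping that I glossed over in the second paragraph: the far-gate estimate needs the \emph{Trotterized} trajectory to stay $d$-localized, but localization is only assumed for the \emph{ideal} trajectory $U_{2t'}^{p}\ket{\psi}$. I would run this as an induction over the $nn'$ substeps in which the accumulated error bound of the earlier substeps (controlled by \cref{thm:kitaev}) is fed back, together with the elementary observation that a state $\varepsilon$-close to a $d$-localized packet is localized up to amplitude $\varepsilon+e^{-d^2/2\Delta x^2}$; the real content is verifying that this growing error never overtakes the localization margin before the packet leaves the block, which is where the hypothesis $0\le 2t'\le 2t$ and the weakness of $\Phi$ get used. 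A secondary chore is the bookkeeping that keeps every $poly(N)$ factor attached only to the super-exponentially small $e^{-d^2/2\Delta x^2}$ and never to a Trotter error, for which the disjointness of the gate supports --- so that operator norms of sums of interface commutators do not add up --- must be tracked throughout.
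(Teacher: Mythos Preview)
Your two-level Trotter strategy is correct and is essentially the paper's, but two points deserve comment.

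First, your inner decomposition differs from the paper's. You split $H_{\rm current}=H_{\rm ring}^{2m}+\hat H_{\rm cur}$ and argue that $e^{-i\hat H_{\rm cur}\,dt}$ is a pure phase on a packet sitting inside the gate. The paper instead \emph{extends} the current gate to the whole ring, writing $H_{\rm current}=H+\tilde H$ where $H$ is ring-plus-extended-gate and $\tilde H$ is a correction supported away from the packet; since the extended gate commutes with the ring Hamiltonian, $e^{-iH\,dt}$ is \emph{exactly} translation-plus-phase on momentum packets up to the dispersion of \cref{thm:disp}, and all approximation is pushed into $\tilde H$, which is then handled uniformly with the far gates $H_1,\dots,H_k$ via the ``far-away Hamiltonian'' lemma. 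Both decompositions work, but for the CPHASE block your formulation needs an extra word: the truncated CPHASE interaction is not literally a scalar on a two-packet state---it becomes one only after you restore the missing long-range links, which is precisely the paper's $\tilde H$. The paper's choice makes this step automatic.

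Second, and more importantly, your ``hard part'' is a non-issue. \Cref{thm:kitaev} bounds $\bigl\|\bigl(\prod V_i-\prod U_i\bigr)\ket\psi\bigr\|$ by $\sum_j\bigl\|(V_j-U_j)\prod_{i<j}U_i\ket\psi\bigr\|$: the per-step discrepancy is always evaluated on the \emph{ideal} trajectory $\prod_{i<j}U_i\ket\psi$, never on the Trotterized one. Since the hypothesis is exactly that $U_{2t'}^p\ket\psi$ remains $d$-localized for $0\le 2t'\le 2t$, no bootstrapping or induction is needed; the paper simply applies the lemma and never confronts the issue you raise. (Two minor remarks: the paper uses the crude $O(t^2\|H\|_V^2/n)$ Trotter bound with $\|H\|_V=O(m)$ rather than your commutator estimate, which already yields the $m^2t^2/n'$ and $m^2t^2/(nn')$ terms; and your roles of $n$ and $n'$ are swapped relative to the theorem statement---in the paper $n'$ is the outer splitting.)
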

\begin{proof}
See appendix
\end{proof}
Note that the dominant term is the $\frac{m t \Delta p^3}{N^3}$ dispersion.  Essentially we show that for localized packets, we can completely ignore the terms that are far away from the packet center.

\section{Circuit Error Bound}
The total error scaling from the whole circuit must be analyzed.  In each step, \cref{thm:kitaev} will be implicitly applied.  Suppose there are two adjacent gate blocks, and the first one applies a unitary (phases on some basis) on the encoded information and a translation across the gate up to an error $e_1$, while the other one applies a unitary and a translation up to $e_2$.  Then, the combination of the gate blocks applies a the net unitary and the full translation, up to an error $e_1+e_2$ according to \cref{thm:kitaev}.

\begin{figure}[h!]
  \centering
    \includegraphics[width=0.6\textwidth]{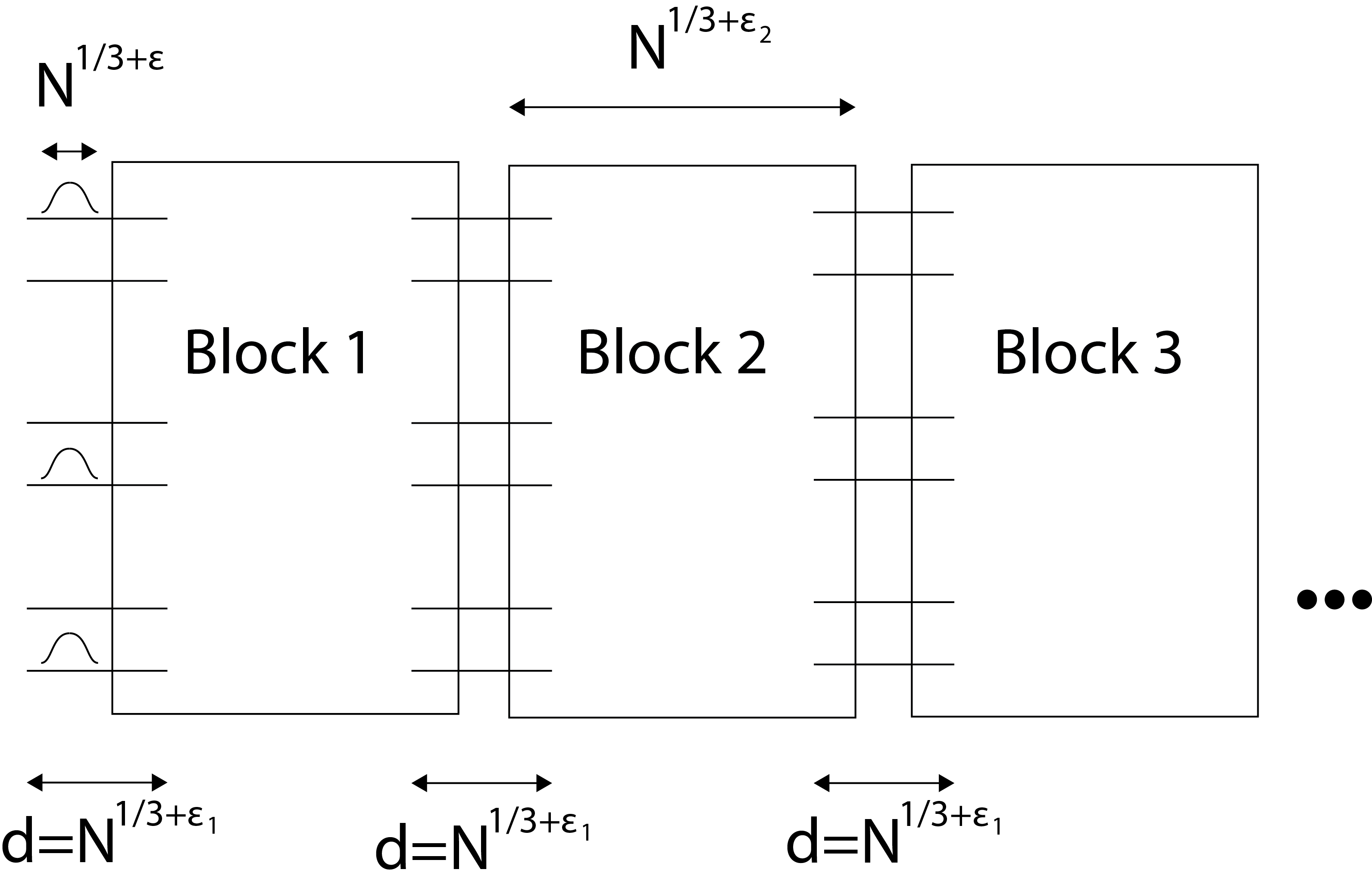}
    \caption{Relevant Length Scales}
\end{figure}

The basic strategy is as follows.  We will initialize our packets outside the first gate block, some distance $\Theta (d)$ away from the entrance.  We will then use the transient bound to move the packets a distance $\Theta(d)$ inside the gate block.  Next, we will use our Trotter bound to show that the packets approximately propagate through the gate region and pick up the appropriate phase.  From here, we will use the transient bound to translate the packets into the next block and repeat.

There are several important parameters that need to be specified in our analysis.  Let the length of the packets be $\Theta(N^{1/3+\varepsilon})$, the truncation length $d=\Theta(N^{1/3+\varepsilon_1})$, the Trotterization $n$, and the length of the gate $\Theta(N^{1/3+\varepsilon_2 })$.  In order for our analysis to make sense, it must be true that $\varepsilon< \varepsilon_1 < \varepsilon_2$, since the packet must fit inside the truncated region.  It will be argued that sections of the size of the truncation region can be ignored, so it must be that $\varepsilon_1 < \varepsilon_2$ otherwise we would be arguing that our gates can be ignored.  In the scaling analysis, let $\Phi=\Theta \left(\frac{1}{N^{1/3+\varepsilon_2}}\right)$, for reasons already discussed.

We will apply the transient bound over an interval $O(N^{1/3+\varepsilon_1})$ at most $O(g)$ times.  The Trotter bound is needed once per gate.  The time interval (assuming the gate is $O(N^{1/3+\varepsilon_2})$ long) is $O(N^{1/3+\varepsilon_2})$, and we will need it $O(g)$ times as well.  Assuming n and $n'$ are large enough polynomials is $N$, we can ignore terms with denominator $n$ and $n'$.
\begin{equation}
\fl O\left(poly(N) e^{-\frac{d^2}{2\Delta x^2}}+\frac{N^{1/3+\varepsilon_2}mg\Delta p^3}{N^3}+m g N^{1/3+\varepsilon_1}\Phi+\frac{m g N^{1/3+\varepsilon_1}\Delta p^3}{N^3}\right)
\end{equation}

Since we are assuming that $\varepsilon_1 > \varepsilon$, the first term will be exponentially small, independent of how big the polynomial in front of it is.  Once we substitute $\Phi=\Theta\left( \frac{1}{N^{1/3+\varepsilon_2}} \right)$, the remaining terms are:
\begin{equation}
O\left(\frac{mg}{N^{2/3+3\varepsilon-\varepsilon_2}}+\frac{mg}{N^{\varepsilon_2-\varepsilon_1}}\right)
\end{equation}
The best scaling with respect to both $m$ and $g$ is easily seen to be is obtained when $\varepsilon=0$, $\varepsilon_1 > 0$ and $\varepsilon_2=1/3$.  In this case, we can obtain error $O\left(\frac{1}{(mg)^{\delta/3-\varepsilon_1}} \right)$ with $N=\Omega(m^{3+\delta} g^{3+\delta})$ for any $\delta > 0$.

\section{Conclusion}

There are a number of features to notice in our scheme.  First, note from our bounds that we are effectively only limited by the dispersion of the packets.  It is conceivable that if we choose waveforms with even less dispersion that the result could be improved.  Second, there is a close analogy between our scheme and standard optical quantum computing approaches\cite{Ralph2011}.  Our CPHASE gate is effectively a nonlinear material that adds phase onto the overall quantum state if two photons are present.  We even use dual rail encoding, which is standard in many of these optical schemes.

We have provided a scheme for universal quantum computation using spin chains.  Given any quantum circuit written as single qubit and CPHASE gates, we have provided a time independent Hamiltonian that simulates the circuit, with asymptotically negligible error.  In contrast to other known schemes \cite{Childs2013}, we use weak gates that slowly effect our propagating packets over a long period of time.  This allows a simple perturbative analysis to succeed in analyzing the error scaling. 

It would be interesting to analyze the scattering of our Gaussian packets.  If we could scatter the packets off one another to implement a controlled phase, we would be able to use our scheme as a universal quantum computer without resorting to our complicated CPHASE gates.  If our analysis carried through, we could explicitly demonstrate better scaling for universal computation using multiparticle quantum walks.

Of independent interest is the approximate Heisenberg relation, and the approximately dispersion free result.  While these concepts have been noted by other authors \cite{Osborne2004, Cotfas2012}, we have provided a powerful new application for them.  These may allow for the construction of other interesting schemes using propagating Gaussian packets on qubit rings.  

\section*{References}

\nocite{*}
\bibliography{bibliography}{}
\bibliographystyle{iopart-num}

\section*{Appendix}

The appendix will be structured as follows.  In section \cref{subsec_6_1} we will go over the basic theorems and definitions needed for our bounds.  These are all well known theorems already, we are merely stating them for convenience and completeness.  \cref{subsec_6_2} contains a number of useful properties of the Gaussian packets we have defined.  Namely, we show that they satisfy a discrete approximate Heisenberg relation, and that they are approximately normalized.  \cref{subsec_6_3} will cover our dispersion results.  In this section we prove some of the theorems of the paper, using mostly theorems from \cref{subsec_6_1}.  In \cref{subsec_6_4} we will leverage our dispersion result, and the well known technique of Trotterization to get our final bounds.

\subsection{Basic Theory}\label{subsec_6_1}
\subsubsection{Conventions}

$\,$\\
It is important to start with a few conventions that will be followed throughout this paper.  We will always use the standard 2-norm for vectors ($||\,\,\ket{\psi}||=\sqrt{\braket{\psi|\psi}}$), and the standard operator norm for matrices:
\begin{equation}
||A||=\max_{\ket{\psi} \in \mathbb{C}^n}\frac{||A\ket{\psi}||}{||\ket{\psi}||}
\end{equation} 
If $A$ maps a particular subspace $W$ to itself, we can define 
\begin{equation}
||A||_W=\max_{\ket{\psi} \in W} \frac{||A\ket{\psi}||}{||\ket{\psi}||}
\end{equation}
It is clear that each operator norm is compatible with the 2-norm.  In other words, 
$$
||A\ket{\psi}|| \leq ||A|| \,\,||\,\,\ket{\psi}||
$$
and if $\ket{\psi} \in W$
$$
||A\ket{\psi}|| \leq ||A||_W \,\,\,||\,\,\ket{\psi}||
$$
\subsubsection{Well Known Theorems}
$\,$ \\

Several theorems will be needed in the proofs that follow.  First, we will need a simple result bounding the sum of a function defined over a discrete set of values:
\begin{restatable}{thm}{sum_bound}\label{sum_bound}
Let f be a positive smooth function defined on $\mathbb{R}$ with one global maximum $f_{{ \rm max}}$ and no other local maximum.  Then:
\begin{equation}
\sum_{j=0}^{N-1} f(j) \leq \int_{-\infty}^\infty f(x) dx+f_{{ \rm max}}
\end{equation}
and 
\begin{equation}
\sum_{j=0}^{N-1} f(j) \geq \int_{-\infty}^\infty f(x) dx - \int_{-\infty}^{-1} f(x) dx - \int_N^\infty f(x)dx -  f_{{ \rm max}}
\end{equation}
\end{restatable}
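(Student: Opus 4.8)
The plan is to exploit that the hypotheses make $f$ \emph{unimodal}: since $f$ is smooth with a single global maximum $f_{\max}$ attained at some point $c \in \mathbb{R}$ and no other local maximum, $f$ is nondecreasing on $(-\infty, c]$ and nonincreasing on $[c, \infty)$. Both inequalities then follow by comparing each term $f(j)$ with the integral of $f$ over a suitably chosen adjacent unit interval, on which $f$ is monotone and on which $f(j)$ is the appropriate extreme value, and then telescoping.

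For the upper bound I would first dispose of the easy case $c \notin (0, N-1)$: then $f$ is monotone on all of $[0,N-1]$, each $f(j)$ bounds the unit integral on the side pointing toward $c$, and the telescoped sum is at most $\int_{-\infty}^\infty f$ (the $f_{\max}$ term is not even needed). Otherwise set $k = \lfloor c \rfloor \in \{0,\dots,N-2\}$. For $0 \le j \le k-1$ the interval $[j,j+1]$ lies in the increasing region, so $f(j) = \min_{[j,j+1]} f \le \int_j^{j+1} f$; for $k+2 \le j \le N-1$ the interval $[j-1,j]$ lies in the decreasing region, so $f(j) \le \int_{j-1}^{j} f$. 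The one delicate point is the pair of central terms $f(k)$ and $f(k+1)$ straddling the peak: I would treat them together, using that $f(x) \ge \min(f(k), f(k+1))$ for every $x \in [k,k+1]$ (nondecreasing on $[k,c]$, nonincreasing on $[c,k+1]$), hence $\int_k^{k+1} f \ge \min(f(k), f(k+1))$, while $\max(f(k), f(k+1)) \le f_{\max}$; adding these gives $f(k) + f(k+1) \le \int_k^{k+1} f + f_{\max}$. Summing all contributions telescopes to $\int_0^{N-1} f + f_{\max} \le \int_{-\infty}^\infty f + f_{\max}$.

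For the lower bound the roles of the intervals are swapped: on the increasing side $f(j) = \max_{[j-1,j]} f \ge \int_{j-1}^{j} f$, and on the decreasing side $f(j) \ge \int_j^{j+1} f$. Again with $k = \lfloor c \rfloor$, summing $f(j) \ge \int_{j-1}^{j} f$ for $0 \le j \le k$ and $f(j) \ge \int_{j}^{j+1} f$ for $k+1 \le j \le N-1$ covers every unit subinterval of $[-1, N]$ except $[k,k+1]$, so $\sum_{j=0}^{N-1} f(j) \ge \int_{-1}^{N} f - \int_{k}^{k+1} f \ge \int_{-1}^{N} f - f_{\max}$; rewriting $\int_{-1}^{N} f = \int_{-\infty}^{\infty} f - \int_{-\infty}^{-1} f - \int_{N}^{\infty} f$ yields the claim. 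The cases $c \le 0$ and $c \ge N-1$ are again easier, the lost tail mass being absorbed into $f_{\max}$.

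The routine content is the case analysis. The one genuinely nontrivial micro-step is the handling of the central terms in the upper bound: a naive split would forfeit two terms to $f_{\max}$, but the pointwise bound $f \ge \min(f(k),f(k+1))$ on $[k,k+1]$ recovers a single $f_{\max}$. I expect that, together with the boundary cases where the peak lies near or outside the endpoints $0$ and $N-1$, to be the only places requiring care.
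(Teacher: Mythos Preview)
Your proposal is correct and follows essentially the same strategy as the paper: exploit unimodality to bound each term $f(j)$ by the integral over an adjacent unit interval, and pay at most one $f_{\max}$ for the peak. The only minor difference is the choice of pivot: the paper takes $k$ to be the first integer with $f(k+1)<f(k)$, i.e.\ the discrete maximizer, so only the single term $f(k)$ must be sacrificed to $f_{\max}$; you take $k=\lfloor c\rfloor$ and therefore have two central terms $f(k),f(k+1)$ to handle, which you recover with the $\min/\max$ observation on $[k,k+1]$. Both routes yield the same bound; the paper's pivot choice is slightly more economical, while yours is more explicit about the boundary cases.
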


\begin{proof}
\underline{For the upper bound}, let $k$ be the first integer such that $f(k+1)<f(k)$.  Write the sum as:
\begin{equation}
\sum_{j=0}^{k-1} f(j) +\sum_{j=k+1}^{N-1}f(j) + f(k)=\sum_{j=0}^{k-1} f(j) +\sum_{j=k}^{N-2}f(j+1)+f(k)
\end{equation}

\noindent The sum is bounded above by the function, and $f(k) \leq f_{max}$ so
\begin{equation}
\leq \int_{-\infty}^\infty f(x) dx + f_{{ \rm max}}
\end{equation}

\noindent For the \underline{lower bound}, let $k$ be defined similarly.
\begin{equation}
\sum_{j=0}^{N-1} f(j) = \sum_{j=-1}^{k-1} f(j+1)+ \sum_{j=k}^{N-1} f(j)+f_{max}-f_{max}
\end{equation}
The sum combined with $f_{max}$ must be greater than the integral of the function, minus the tails, so we have the desired inequality.
\begin{figure}[h!]
  \centering
    \includegraphics[width=0.7\textwidth]{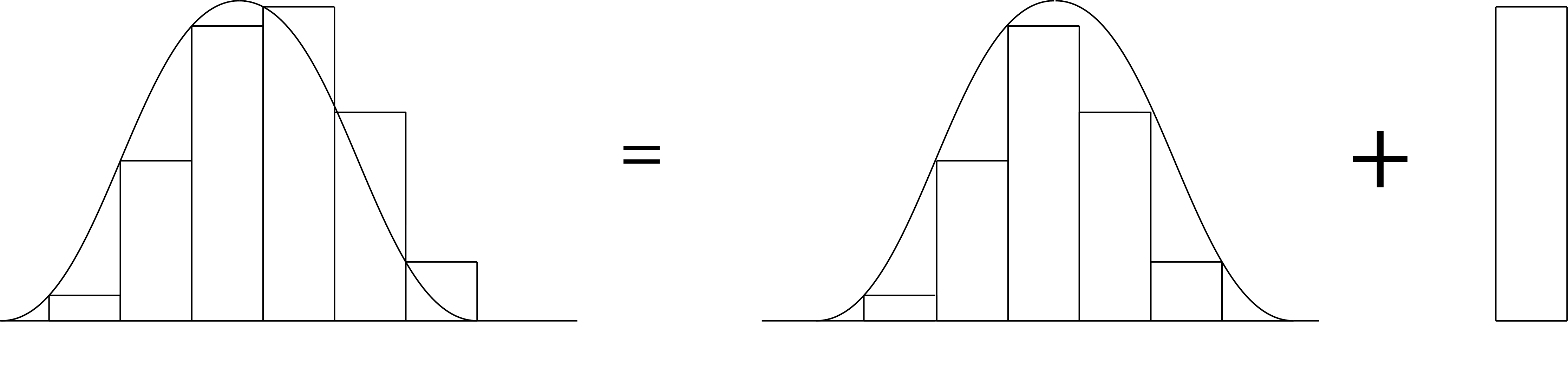}
    \caption{Upper Bound}
\end{figure}
\begin{figure}[h!]
  \centering
    \includegraphics[width=0.7\textwidth]{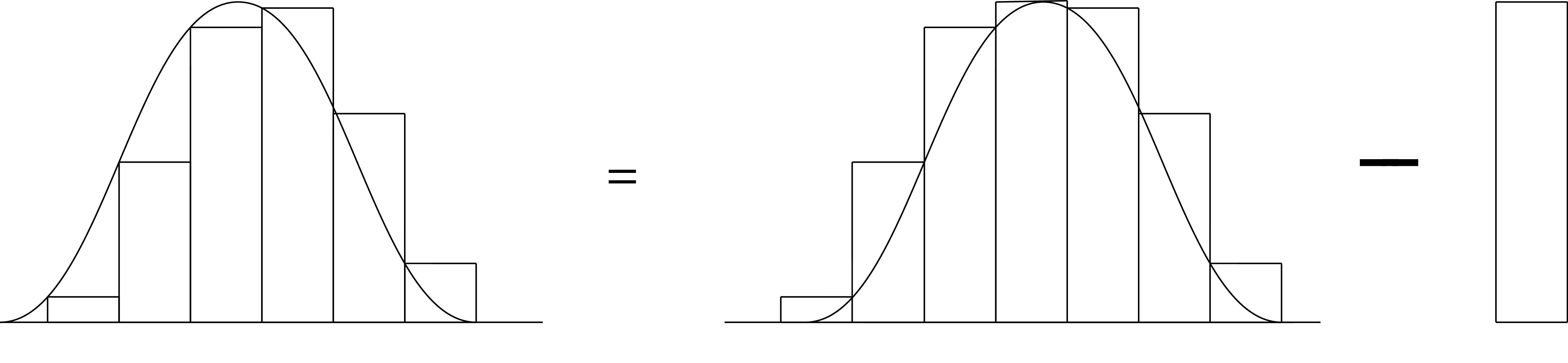}
    \caption{Lower Bound}
\end{figure}
\end{proof}

\noindent We will use this theorem to bound the dispersion present when our packets propagate around the chain in section 6.3.

The following simple theorem will be an important tool in much of our analysis.  It can be used to bound the superposition  over orthogonal quantum states, when a bound on each basis state is known.
\begin{restatable}{thm}{basis_bound} \label{basis_bound}
Let $\{\ket{x_i} \}$ be a set of states, containing p elements, with the same norm.  Let $\braket{x_i|x_j}\leq\delta_{ij} e^2$.

Then if $\sum_{j}|\beta_j|^2=1$ we have:
\begin{equation}
\left|\left| \sum_{j} \beta_j \ket{x_j}\right|\right| \leq e 
\end{equation}   

\end{restatable}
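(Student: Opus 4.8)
The plan is to reduce everything to the Pythagorean identity. First I would square the quantity to be bounded and expand it over the given family: $\left\|\sum_j \beta_j \ket{x_j}\right\|^2 = \sum_{i,j}\overline{\beta_i}\beta_j\braket{x_i|x_j}$. Reading the hypothesis $\braket{x_i|x_j}\le\delta_{ij}e^2$ as the statement that the $\ket{x_j}$ are pairwise orthogonal with $\braket{x_j|x_j}\le e^2$, every term with $i\neq j$ vanishes and the sum collapses to $\sum_j|\beta_j|^2\braket{x_j|x_j}$. Bounding each diagonal inner product by $e^2$ and using $\sum_j|\beta_j|^2=1$ gives $\le e^2$; taking a square root finishes it.

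An equivalent packaging, matching the operator-norm conventions set up in \cref{subsec_6_1}, is to let $M$ be the matrix whose columns are the $\ket{x_j}$, so that $\sum_j\beta_j\ket{x_j}=M\beta$. Then $\|M\beta\|\le\|M\|\,\|\beta\|=\|M\|$ since $\|\beta\|=1$, and $\|M\|^2=\|M^\dagger M\|$ is the largest eigenvalue of the Gram matrix $G_{ij}=\braket{x_i|x_j}$; under the hypothesis $G$ is diagonal with entries at most $e^2$, so $\|M\|\le e$. I would probably present the first version in the text and leave this one as a remark.

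The one point that genuinely matters is the interpretation of the hypothesis. If the $\ket{x_j}$ were merely approximately orthogonal, the cross terms $\overline{\beta_i}\beta_j\braket{x_i|x_j}$ could reinforce one another for a bad choice of complex phases $\beta_j$, and the clean bound $e$ would degrade to something like $e$ plus an error growing with $p$. So the proof leans on exact orthogonality of the family — which is exactly what is available in the applications, where the $\ket{x_j}$ are either momentum eigenstates $\ket{p}_m$ or packets supported on disjoint rails. Granting that, there is no real obstacle: the theorem is orthonormal-expansion bookkeeping, and the reason to isolate it is that it will be invoked repeatedly to pass from per-basis-state error bounds to bounds on superpositions in the dispersion and Trotterization arguments.
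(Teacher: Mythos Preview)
Your proposal is correct and matches the paper's own proof essentially line for line: expand $\left\|\sum_j \beta_j \ket{x_j}\right\|^2$ as $\sum_{k,j}\beta_k^*\beta_j\braket{x_k|x_j}$, use the hypothesis to kill the off-diagonal terms and bound the diagonal ones by $e^2$, then invoke $\sum_j|\beta_j|^2=1$. Your reading of the hypothesis as exact orthogonality plus a diagonal bound is exactly how the paper uses it, and your remark about why approximate orthogonality would not suffice is a useful clarification the paper does not make explicit.
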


\begin{proof}
$\,$\\
\begin{equation}
\left(\sum_k \beta_k^* \bra{x_k} \right)\left(\sum_j \beta_j \ket{x_j} \right)\leq \sum_k |\beta_k|^2 e^2 =e^2
\end{equation}
\end{proof}
The most important concept for us is the so called linear propagation of error.  It is used repeatedly in almost every theorem in this paper:
\begin{restatable}[Linear Propagation of Error]{thm}{kitaev}\label{thm_kitaev}
Let $U_1, ..., U_n$ and $V_1, ..., V_n$ be unitary operators.  Then for any $\ket{\psi}$,
\begin{equation}
\left| \left| \left(\prod_{i=n}^1 U_i-\prod_{i=n}^1 V_i \right)\ket{\psi}\right| \right| \leq \sum_{j=1}^n \left| \left| (U_j-V_j)\prod_{i=j-1}^1 U_i \ket{\psi}\right| \right|
\end{equation}
\end{restatable}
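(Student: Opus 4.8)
The plan is to prove the bound by a standard telescoping (``hybrid'') argument, interpolating between the product of the $U_i$'s and the product of the $V_i$'s one factor at a time. Write $P = \prod_{i=n}^1 U_i = U_n U_{n-1}\cdots U_1$ and $Q = \prod_{i=n}^1 V_i = V_n V_{n-1}\cdots V_1$. For $0 \le j \le n$ I would define the hybrid operator
\begin{equation}
W_j = V_n V_{n-1}\cdots V_{j+1}\, U_j U_{j-1}\cdots U_1,
\end{equation}
with the conventions $W_n = P$ and $W_0 = Q$ (empty products being the identity). Then $P - Q = \sum_{j=1}^n (W_j - W_{j-1})$, and since $W_j$ and $W_{j-1}$ agree in every factor except that $W_j$ carries $U_j$ exactly where $W_{j-1}$ carries $V_j$, a direct computation gives
\begin{equation}
W_j - W_{j-1} = V_n\cdots V_{j+1}\,(U_j - V_j)\,U_{j-1}\cdots U_1 .
\end{equation}

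Next I would apply the triangle inequality to this telescoped difference acting on $\ket{\psi}$, yielding
\begin{equation}
\left\| (P - Q)\ket{\psi} \right\| \le \sum_{j=1}^n \left\| V_n\cdots V_{j+1}\,(U_j - V_j)\,U_{j-1}\cdots U_1 \ket{\psi} \right\| .
\end{equation}
Finally, each $V_i$ is unitary, so the prefix $V_n\cdots V_{j+1}$ is unitary and hence norm-preserving; the $j$-th summand therefore equals $\left\| (U_j - V_j) U_{j-1}\cdots U_1 \ket{\psi} \right\| = \left\| (U_j - V_j)\prod_{i=j-1}^1 U_i\,\ket{\psi}\right\|$, which is exactly the asserted bound. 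An equivalent route is induction on $n$: peel off the outermost factors $U_n$ and $V_n$, write $U_n\cdots U_1 - V_n\cdots V_1 = U_n(U_{n-1}\cdots U_1 - V_{n-1}\cdots V_1) + (U_n - V_n)V_{n-1}\cdots V_1$, use that $U_n$ is norm-preserving on the first term and the induction hypothesis; the telescoping version simply records all $n$ steps at once.

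There is essentially no hard step here — the statement is elementary. The only thing to be careful about is the ordering convention in $\prod_{i=n}^1$: the interpolation must replace factors from the ``outside in'' so that the leftover prefix of $V$'s can be discarded by unitarity while the remaining suffix $U_{j-1}\cdots U_1$ matches the expression appearing on the right-hand side. Note also that unitarity of the $U_i$'s is not actually needed for this inequality (only norm-nonexpansiveness of the $V_i$'s is used), but the theorem is stated with all operators unitary because that is the regime in which it will be invoked throughout the paper.
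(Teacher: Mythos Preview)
Your telescoping/hybrid argument is correct and is precisely the standard proof of this inequality. The paper itself does not give a proof at all: it simply states the theorem and refers the reader to \cite{Childs2013, Kitaev2002}, where the argument you wrote is what appears. So there is nothing to compare against beyond noting that you have supplied exactly the proof the cited references contain.
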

\noindent Proof of this can be found in \cite{Childs2013, Kitaev2002}.  The following sensitivity result will be useful in the transient regime:
\begin{restatable}[Matrix Exponential]{thm}{matexp}
Let $A$ and $E$ be skew Hermitian complex matrices, and let t be a real scalar.  It holds that:
\begin{equation}
\left| \left|e^{(A+E)t}-e^{At} \right| \right| \leq ||E||t e^{||E||t}
\end{equation}
\end{restatable}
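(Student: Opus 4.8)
The plan is to derive the bound from the variation-of-parameters (Duhamel) identity for matrix exponentials, combined with the elementary observation that the exponential of a skew-Hermitian matrix is unitary.

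First I would establish the identity
\[
e^{(A+E)t}-e^{At}=\int_0^{t} e^{A(t-s)}\,E\,e^{(A+E)s}\,ds .
\]
To prove it, fix $t$ and consider the matrix-valued function $\Phi(s)=e^{A(t-s)}e^{(A+E)s}$ on $[0,t]$, which satisfies $\Phi(0)=e^{At}$ and $\Phi(t)=e^{(A+E)t}$. Since matrix exponentials are entire, $\Phi$ is differentiable, and using that $A$ commutes with $e^{A(t-s)}$ one finds
\[
\Phi'(s)=-e^{A(t-s)}A\,e^{(A+E)s}+e^{A(t-s)}(A+E)e^{(A+E)s}=e^{A(t-s)}\,E\,e^{(A+E)s}.
\]
Integrating $\Phi'$ over $[0,t]$ gives the identity. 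This is the one place where one must watch the order of the (generally non-commuting) factors.

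Next I would take operator norms, using the triangle inequality for integrals together with submultiplicativity of the operator norm:
\[
\| e^{(A+E)t}-e^{At} \|\le\int_0^{t}\| e^{A(t-s)} \|\,\|E\|\,\| e^{(A+E)s} \|\,ds .
\]
Because $A$ and $A+E$ are skew-Hermitian, $e^{A(t-s)}$ and $e^{(A+E)s}$ are unitary, hence have norm $1$, so the right-hand side is at most $\|E\|\,t$, which is bounded by $\|E\|\,t\,e^{\|E\|t}$ since $e^{\|E\|t}\ge 1$. If one prefers not to invoke unitarity of $e^{(A+E)s}$ (as is needed for the general, non-skew-Hermitian Van Loan estimate), one can instead bound $\| e^{(A+E)s} \|\le e^{\|E\|s}$ by a Gr\"onwall argument: the matrix $Y(s)=e^{-As}e^{(A+E)s}$ satisfies $Y(0)=I$ and $Y'(s)=e^{-As}E\,e^{As}\,Y(s)$, whence $\frac{d}{ds}\| Y(s) \|\le\|E\|\,\| Y(s) \|$ and therefore $\| Y(s) \|\le e^{\|E\|s}$; substituting $e^{\|E\|s}\le e^{\|E\|t}$ into the integral reproduces exactly the stated bound $\|E\|\,t\,e^{\|E\|t}$.

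The argument is short and presents no genuine obstacle; the one step demanding care is the verification of the Duhamel identity — choosing the interpolating family $\Phi(s)=e^{A(t-s)}e^{(A+E)s}$ and differentiating it correctly, since $A$ and $E$ need not commute and the order of the factors in $\Phi'$ matters.
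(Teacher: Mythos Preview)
Your proof is correct. The paper does not actually prove this statement; it simply cites Van Loan \cite{Loan1977} and Kato, so there is no in-paper argument to compare against. Your Duhamel/variation-of-parameters derivation is precisely the standard route (and is essentially the argument in Van Loan's paper), so in that sense you have reproduced the intended proof rather than found an alternative.

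One small remark: in the skew-Hermitian setting your first pass already gives the sharper bound $\|E\|\,t$, and the factor $e^{\|E\|t}$ is only needed for the general (non-skew-Hermitian) version of Van Loan's estimate, which your Gr\"onwall variant handles. Either way the stated inequality follows.
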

\begin{proof}
Found in \cite{kato} and \cite{Loan1977}
\end{proof}

We will make heavy use of the finite Fourier transform, as it provides a link between the position and momentum bases for states in the discrete Hilbert space.  Given some $N$ periodic sequence $\{A_x\,\, |\,\, x \in [0, N-1] \}$, we define its discrete finite Fourier transform as another $N$ periodic sequence $a_p=\F[A](p)$ satisfying:
\begin{equation}\label{f_trans_eq1}
a_p=\F[A](p)=\frac{1}{\sqrt{N}}\sum_{x=0}^{N-1} A_x e^{-\frac{2 \pi i p x}{N}}
\end{equation}
This transformation acts as a bijection between $x$ sequences and $p$ sequences, the inverse can be written as:
\begin{equation}\label{f_trans_eq2}
A_x= \Fi[a](x)=\frac{1}{\sqrt{N}}\sum_{p=0}^{N-1} a_p e^{\frac{2 \pi i p x}{N}}
\end{equation}

For us, this transformation is important because it provides a way to express a particular quantum state in either the position ($\ket{x}$) or the momentum basis ($\ket{p}_m$).  If we suppose that a particular quantum state has some expression in terms of either basis:
\begin{equation}
\ket{\psi}=\sum_{x=0}^{N-1} A_x \ket{x}=\sum_{p=0}^{N-1} a_p \ket{p}_m
\end{equation}
then we can see that the coefficients are related by a discrete Fourier transform.  If we take the inner product of either side with $\bra{p'}_m$ for some fixed $p'$, then we can derive \cref{f_trans_eq1}.  If we instead take $\bra{x'}$ for some fixed $x'$, then we can derive \cref{f_trans_eq2}.

\subsubsection{Definitions}
$\,$\\
Our proposed scheme takes place in the dual rail subspace $V$.  Suppose the rails are listed bit by bit, always starting with the rail corresponding to the 0th bit: $0_1 1_1 0_2 1_2... 0_m 1_m$.  The dual rail subspace $V$ is then the span of all the basis states that are consistent with the dual rail encoding.  So each pair of rails has exactly one excitation, either on the $0$ rail or on the $1$ rail.  We would write one of these states as $\ket{\psi_1^0}\ket{\psi_1^1}\ket{\psi_2^0}\ket{\psi_2^1}...\ket{\psi_m^0}\ket{\psi_m^1}$ where either $\ket{\psi_j^0}=\ket{x}$ and $\ket{\psi}=\ket{{ \rm vac}}$ or $\ket{\psi_j^0}=\ket{{ \rm vac}}$ and $\ket{\psi_j^1}=\ket{x}$.  All of our gates will maintain this subspace, and we will start in a superposition of states in this space.  Operator norms are by default $|| \,\,||_V$.

It is important to describe a relevant basis decomposition, denoted in this paper as RBD.  At some point we will have a superposition of packets inside a particular gate block.  Say we have four qubits, and we are performing an $\mathbf{X}$ gate on the first one, a $\mathbf{Z}$ gate on the second one, and a CPHASE on the last two.  This means that the packets are well localized inside a gate block that is implementing these operations on the respective qubits.  The RBD is a way to write our superposition of packets:
\begin{equation}
\ket{\psi}=\sum_i c_i \ket{\psi_i}
\end{equation}
such that each $\ket{\psi_i}$ is diagonal given the current gates.  In other words, of the current gates were extended to the whole chain, and the other gates were removed, then each $\ket{\psi_i}$ would be some set of packets that move around our qubit rings and acquire a phase.  In the case we are considering, this basis would include the encoded state $\ket{+}\ket{0}\ket{1}\ket{1}$, which would have the form:
\begin{equation}
\fl \bigg(\ket{{ \rm packet}}\ket{{ \rm vac}}+\ket{{ \rm vac}}\ket{{ \rm packet}}\bigg)\bigg(\ket{{ \rm packet}}\ket{{ \rm vac}}\bigg)\bigg(\ket{{ \rm vac}}\ket{{ \rm packet}}\bigg)\bigg(\ket{{ \rm vac}}\ket{{ \rm packet}}\bigg)
\end{equation}
The $\mathbf{X}$ gate would add some phase to this state per unit time, as well as the CPHASE gate.  

We will also need to make use of the standard computational basis decomposition (CBD).  In this case we are merely writing the state $\ket{\psi}$ as some superposition over the encoded computational basis.  We will write $\ket{\psi}=\sum_i c_i \ket{\psi_i}$ where each $\ket{\psi_i}$ is some encoded computational basis state.  In our running example, one such state would be the encoded $\ket{0}\ket{1}\ket{1}\ket{0}$ state, which would have the form:
\begin{equation}
\bigg(\ket{{ \rm packet}}\ket{{ \rm vac}}\bigg)
\bigg(\ket{{ \rm vac}}\ket{{ \rm packet}}\bigg)
\bigg(\ket{{ \rm vac}}\ket{{ \rm packet}}\bigg)
\bigg(\ket{{ \rm packet}}\ket{{ \rm vac}}\bigg)
\end{equation}

One more concept is needed before we begin our proofs.  We need to describe what it means for our set of packets to be localized to at least a distance $d$.  Suppose we have some undesirable interaction Hamiltonian $H_b$.  $H_b$ will be made up of 1-local and 2-local interactions corresponding to one and two qubit gates (CPHASE). We say the our superposition of packets ``is localized to at least a distance $d$'' if all 1-local interactions are at least a distance $d$ from our packets (mod $N$), and all undesirable 2-local interactions have at least one leg that is a distance $d$ away from our packets (mod $N$). 

Said more explicitly, suppose the packets are centered at some position $x_0 \in [0, N-1]$ on the chain, suppose that the undesirable 1-local interactions occur at some positions $A=\{x_i\} \in [0, N-1]$ on the chain, and that the undesirable 2-local interactions occur at positions $B=\{(x_i, x_j)\} \in [0, N-1]$ on the chain.  The packets ``being localized to a distance $d$'' means that for all $x_i \in A$, 
\begin{equation}
x_i-x_0 \,\,{\rm mod}\,\, N \geq d
\end{equation}
and that for all $(x_i, x_j) \in B$  
\begin{eqnarray}
(x_i-x_0) \,\, { \rm mod} \,\, N \geq d \,\,\,\,\,\,\,\,\,\,\,\, { \rm or} \,\,\,\,\,\,\,\,\,\,\,\, (x_j-x_0) \,\, { \rm mod} \,\, N \geq d
\end{eqnarray}

\subsection{Uncertainty and Normalization}\label{subsec_6_2}
\subsubsection{Approximate Heisenberg}
$\,$\\
We can switch back and forth between the momentum representation and the position representation for our packets and will show not that they are exactly the same, but that they are the same up to an exponentially small error.  Suppose at some point in the computation there is some superposition of packets in the dual rail encoding and that they are written in the CBD.  We show below that a particular basis state in the decomposition can be switched from momentum to position or vice versa, and then use \cref{basis_bound} from the last section to apply this to the whole superposition.  To prove that we can switch between position and momentum, we will first show that we can switch for a single packet on a single qubit ring.  Then, we can extend this to a tensor product of $m$ packets and finally to a superposition in the dual rail encoding.  Note that the following proof is contained in \cite{Cotfas2012}.  We provide it here for completeness, and to verify that the proof still works for the case we are interested in ($x_0 \in[0, N-1]$ a real number and $p_0=N/4$ an integer).

\begin{restatable}{thm}{packets}
Let $p_0=N/4$ (an integer), and let $x_0 \in [0, N-1]$ (not necessarily an integer).  Let $\Delta x$, $\Delta p$ be positive real numbers satisfying $2 \pi \Delta x \Delta p=N$.  It holds (up to exponentially small in N corrections) that:
\begin{eqnarray}
\fl \frac{1}{\sqrt{\Delta x\sqrt{\pi}}}\sum_{x=0}^{N-1} \sum_{\alpha=-\infty}^\infty e^{\frac{2 \pi i p_0 x}{N}}e^{-\frac{(\alpha N+x-x_0)^2}{2\Delta x^2}} \ket{x}=\nonumber \\
\frac{1}{\sqrt{\Delta p \sqrt{\pi}}} \sum_{p=0}^{N-1} \sum_{\alpha=-\infty}^\infty e^{-\frac{2 \pi i p x_0}{N}} e^{-\frac{(\alpha N+p-p_0)^2}{2\Delta p^2}}\ket{p}_m
\end{eqnarray}

\end{restatable}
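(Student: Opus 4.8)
The plan is to compute the finite Fourier transform of the periodized Gaussian directly and recognize the result as another periodized Gaussian, controlling the errors that arise from (i) the fact that the position sum runs only over $x \in \{0,\dots,N-1\}$ rather than over all of $\mathbb{Z}$ while the $\alpha$-sum restores periodicity, and (ii) the fact that $e^{2\pi i p_0 x/N}$ with $p_0=N/4$ introduces a pure phase shift that, under the Fourier transform, becomes a translation by $p_0$ in momentum space. First I would note that the double sum $\sum_{x=0}^{N-1}\sum_{\alpha=-\infty}^\infty f(\alpha N + x - x_0)$ is, for any Schwartz-class $f$, equal to $\sum_{y \in \mathbb{Z}} f(y - x_0)$ — the periodization collapses to a sum over all integers. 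So the left-hand state, up to the normalization constant, is $\sum_{y\in\mathbb{Z}} e^{2\pi i p_0 y/N} e^{-(y-x_0)^2/2\Delta x^2}\,\ket{y \bmod N}$, and its $\ket{x}$-coefficient $A_x$ is $\sum_{\alpha} e^{2\pi i p_0 (\alpha N + x)/N} e^{-(\alpha N + x - x_0)^2/2\Delta x^2}$ (the phase is $N$-periodic in the exponent since $p_0 \in \mathbb{Z}$, so it pulls out cleanly).

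Next I would apply $\F$ to the sequence $\{A_x\}$ using \cref{f_trans_eq1}. Writing $a_p = \frac{1}{\sqrt N}\sum_{x=0}^{N-1} A_x e^{-2\pi i p x/N}$ and substituting the $\alpha$-sum form of $A_x$, the combined $x$- and $\alpha$-sums again telescope into a single sum over $y\in\mathbb{Z}$: $a_p \propto \frac{1}{\sqrt N}\sum_{y\in\mathbb{Z}} e^{2\pi i (p_0 - p) y/N} e^{-(y-x_0)^2/2\Delta x^2}$. This is a theta-type sum; I would evaluate it by Poisson summation, $\sum_{y\in\mathbb{Z}} g(y) = \sum_{k\in\mathbb{Z}} \hat g(k)$, with $g(y) = e^{2\pi i(p_0-p)y/N} e^{-(y-x_0)^2/2\Delta x^2}$. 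The Fourier transform of a shifted, modulated Gaussian is a shifted, modulated Gaussian, so $\hat g(k)$ produces (after absorbing the $\sqrt{2\pi}\,\Delta x$ factor and using $2\pi\Delta x\Delta p = N$) exactly $e^{-2\pi i p x_0/N} e^{-(kN + p - p_0)^2/2\Delta p^2}$ times the constant that turns $\frac{1}{\sqrt{\Delta x\sqrt\pi}}$ into $\frac{1}{\sqrt{\Delta p\sqrt\pi}}$. Re-indexing $k \to \alpha$ gives the claimed right-hand side coefficient $a_p$, so the two states agree.

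The errors enter at two points, and the final claim is that each is exponentially small in $N$. One source is the phase factor: for the phase $e^{2\pi i p_0 x/N}$ to pull out of the periodized sum exactly we need $p_0 \in \mathbb{Z}$, which is assumed ($p_0 = N/4$); with $x_0$ real but non-integer there is no issue since $x_0$ only appears inside the smooth Gaussian and in the benign overall phase $e^{-2\pi i p x_0/N}$. The genuinely lossy step is that the left-hand side as literally written has the $x$-sum truncated to $[0,N-1]$ before the $\alpha$-sum is taken; collapsing $\sum_{x=0}^{N-1}\sum_\alpha$ to $\sum_{y\in\mathbb{Z}}$ is exact, but the symmetric manipulation on the momentum side (pulling a single Gaussian back out of the $k$-sum to match a truncated $p$-range) is what costs the tail of the theta function, i.e.\ terms like $e^{-c N^2/\Delta p^2}$ and $e^{-c N^2/\Delta x^2}$; since $\Delta x,\Delta p = O(N^{2/3})$ these are $e^{-\Omega(N^{2/3})}$. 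The main obstacle, then, is bookkeeping: carefully tracking which rearrangements are exact identities and which merely hold up to these theta-tail remainders, and bounding those remainders cleanly — most likely via \cref{sum_bound} to compare the discrete tail sums to Gaussian integrals — rather than anything conceptually deep, since the identity itself is the classical statement that the discrete Fourier transform intertwines periodized position-space and momentum-space Gaussians.
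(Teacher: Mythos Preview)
Your approach via Poisson summation is sound and reaches the same endpoint as the paper, but by a different route. The paper treats the periodized Gaussian $G_{\Delta x}$ as a periodic function of a \emph{real} variable, computes its continuous Fourier series coefficients $a_l$ by integrating over one period (the trick $\sum_\alpha\int_0^N = \int_{\mathbb R}$ converts this to a Gaussian integral), and then restricts the resulting Fourier series to integer $x$, regrouping $\sum_{l\in\mathbb Z}$ as $\sum_{p=0}^{N-1}\sum_\alpha$. Your route collapses the double sum to $\sum_{y\in\mathbb Z}$ up front and then invokes Poisson summation directly. The two are equivalent in content (the paper's manipulation is Poisson summation in disguise), but the paper's version is slightly more self-contained while yours is cleaner if one is willing to quote Poisson.

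There is, however, a gap in your error analysis: you have misidentified where the exponentially small correction comes from. There is no ``pulling a single Gaussian back out of the $k$-sum'' on the momentum side --- the Poisson output $\sum_{k\in\mathbb Z}\hat g(k)$ \emph{is} the $\sum_\alpha$ on the right-hand side, exactly. What you have overlooked is that $\hat g(k)$ carries a phase $e^{-2\pi i[k-(p_0-p)/N]x_0}$, which splits as $e^{-2\pi i k x_0}\cdot e^{2\pi i p_0 x_0/N}\cdot e^{-2\pi i p x_0/N}$. The middle factor is global and the last is the desired per-$p$ phase, but the factor $e^{-2\pi i k x_0}$ is $k$-dependent and equals $1$ only when $x_0\in\mathbb Z$. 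For non-integer $x_0$ this extra phase on the $|k|\ge 1$ terms is precisely the discrepancy between what you compute and the stated right-hand side; those terms are suppressed by $e^{-(kN)^2/2\Delta p^2}$, which is the actual source of the ``exponentially small in $N$'' qualifier. This is exactly what the paper finds (its extra $e^{-2\pi i\alpha x_0}$), so once you track that phase your argument closes with the same tail estimate.
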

\begin{proof}
Define $G_{\Delta x}: \mathbb{R} \rightarrow \mathbb{C}$ such that
\begin{equation}
G_{\Delta x}(x)=\frac{1}{\sqrt{\Delta x\sqrt{\pi}}} \sum_{\alpha=-\infty}^\infty e^{\frac{2 \pi i p_0 x}{N}}e^{-\frac{(\alpha N+x-x_0)^2}{2\Delta x^2}}
\end{equation}

$G_{\Delta x}$ has a Fourier expansion
\begin{equation}
G_{\Delta x} (x)=\sum_{l=-\infty}^\infty a_l e^{\frac{2 \pi i l x}{N}}
\end{equation} 
with
\begin{eqnarray}
a_l=\frac{1}{N} \int_0^N e^{\frac{-2 \pi i l x}{N}} \frac{1}{\sqrt{\Delta x\sqrt{\pi}}} \sum_{\alpha=-\infty}^\infty e^{\frac{2 \pi i p_0 x}{N}}e^{-\frac{(\alpha N+x-x_0)^2}{2\Delta x^2}} dx=\\
\frac{1}{N\sqrt{\Delta x\sqrt{\pi}}} \int_0^N e^{-\frac{2 \pi i l x}{N}} \sum_{\alpha=-\infty}^\infty e^{\frac{2 \pi i p_0 x}{N}}e^{-\frac{(\alpha N+x-x_0)^2}{2\Delta x^2}} dx
\end{eqnarray}
Using the substitution $t=\alpha N +x-x_0$.
\begin{eqnarray}
=\frac{1}{N\sqrt{\Delta x\sqrt{\pi}}} \sum_{\alpha=-\infty}^\infty \int_{\alpha N-x_0}^{(\alpha+1)N-x_0} e^{-\frac{2 \pi i l (t-\alpha N+x_0)}{N}} e^{\frac{2 \pi i p_0 (t-\alpha N+x_0)}{N}}e^{-\frac{t^2}{2\Delta x^2}} dt\\
\fl =\frac{1}{N\sqrt{\Delta x \sqrt{\pi}}}\sum_{\alpha=-\infty}^\infty \int_{\alpha N-x_0}^{(\alpha+1)N-x_0 } e^{-\frac{2 \pi i l t}{N}}e^{2 \pi i l \alpha } e^{-\frac{2 \pi i l x_0 }{N}}e^{\frac{2 \pi i p_0 t}{N}}e^{-2 \pi i p_0 \alpha} e^{\frac{2 \pi i p_0 x_0}{N}}e^{-\frac{t^2}{2 \Delta x^2}}dt
\end{eqnarray}
Assuming $p_0=N/4$ is an integer,
\begin{eqnarray}
=\frac{1}{N\sqrt{\Delta x\sqrt{\pi}}} e^{-\frac{2 \pi i l x_0}{N}} e^{\frac{2 \pi i p_0 x_0}{N}} \int_{-\infty}^\infty e^{\frac{2 \pi i (l-p_0) t}{N}} e^{-\frac{t^2}{2\Delta x^2}} dt
\end{eqnarray}
Up to an irrelevant global phase
\begin{equation}
=\frac{\sqrt{2 \pi} \Delta x}{N\sqrt{\Delta x\sqrt{\pi}}}  e^{-\frac{2 \pi i l x_0}{N}} e^{-\frac{(l-p_0)^2}{2 \Delta p^2}}= \frac{1}{\sqrt{N}}\frac{1}{\sqrt{\Delta p \sqrt{\pi}}}e^{\frac{-2 \pi i l x_0}{N}} e^{-\frac{(l-p_0)^2}{2 \Delta p^2}}
\end{equation}
So, $G_{\Delta x}(x)$ has the form

\begin{equation}
\frac{1}{\sqrt{N}}\frac{1}{\sqrt{\Delta p \sqrt{\pi}}} \sum_{l=-\infty}^\infty e^{\frac{2 \pi i l x}{N}}  e^{\frac{-2 \pi i l x_0}{N}} e^{-\frac{(l-p_0)^2}{2 \Delta p^2}}
\end{equation}
when $x$ is an integer, this can be rewritten as:
\begin{equation}
G_{\Delta x}(x)=\frac{1}{\sqrt{N}} \sum_{p=0}^{N-1} e^{\frac{2 \pi i p x}{N}} \frac{1}{\sqrt{\Delta p\sqrt{\pi}}}\sum_{\alpha=-\infty}^\infty   e^{-2 \pi i \alpha x_0} e^{-\frac{2 \pi i p x_0}{N}} e^{-\frac{(\alpha N+p-p_0)^2}{2 \Delta p^2}}
\end{equation}
We have found a p sequence for which the inverse Fourier transform is the x sequence.   They must be Fourier pairs.

Note that we have proven 
\begin{eqnarray}
\frac{1}{\sqrt{\Delta x\sqrt{\pi}}}\sum_{x=0}^{N-1} \sum_{\alpha=-\infty}^\infty e^{\frac{2 \pi i p_0 x}{N}}e^{-\frac{(\alpha N+x-x_0)^2}{2\Delta x^2}} \ket{x}=\\
\frac{1}{\sqrt{\Delta p \sqrt{\pi}}} \sum_{p=0}^{N-1} \sum_{\alpha=-\infty}^\infty e^{-2 \pi i \alpha x_0} e^{-\frac{2 \pi i p x_0}{N}} e^{-\frac{(\alpha N+p-p_0)^2}{2\Delta p^2}}\ket{p}_m
\end{eqnarray}
The state on the right hand side is exponentially close to the state given in the statement of the theorem\footnote{Note that if $x_0$ was an integer, we would have exact equality.  However, we need the proof to be valid when $x_0$ is not an integer, since we will move our packets in very small steps.}.  The only difference is the extra phase $e^{-2 \pi i \alpha x_0}$, but the $|\alpha|\geq 1$ contributions are exponentially small since $p_0=N/4$, and $\Delta p \sim N^{1-\varepsilon}$.  For more explicit examples see our normalization proof below.
\end{proof}

With the analysis of a single packet on a single rail in hand, we can show the analogous result for a tensor product on some set of rails.  Let $\ket{\psi^x}$ be the position packet, and let $\ket{\psi^p}$ be the momentum packet, and suppose we are interested in the tensor of $m$ packets: $\ket{\psi_1^x}\ket{\psi_2^x}...\ket{\psi_m^x}$.  We know that $\ket{\psi^x}=\ket{\psi^p}+\ket{\varepsilon}$ for $\varepsilon=\braket{\varepsilon|\varepsilon}$ exponentially small in $N$.  We are interested in bounding:
\begin{eqnarray}
\left|\left|\ket{\psi_1^x}\ket{\psi_2^x}...\ket{\psi_m^x}-\ket{\psi_1^p}\ket{\psi_2^p}...\ket{\psi_m^p}\right|\right|=\\
\left|\left| (\ket{\psi_1^p}+\ket{\varepsilon})(\ket{\psi_2^p}+\ket{\varepsilon})...(\ket{\psi_m^p}+\ket{\varepsilon})-\ket{\psi_1^p}\ket{\psi_2^p}...\ket{\psi_m^p}  \right|\right|
\end{eqnarray}
\renewcommand{\kbldelim}{(}
\renewcommand{\kbrdelim}{)}
Using the triangle inequality,
\begin{equation}
\leq \sum_{k=1}^m \kbordermatrix{
    \mbox{} & \mbox{}\\
    \mbox{} & m \\ 
   \mbox{} & k
}  \sqrt{\braket{\psi_p|\psi_p}^{m-k} \varepsilon^k}
\end{equation}  
Given the results of the following section, this sum is exponentially small.

So far it has been shown that the tensor product of $m$ packets can be switched from momentum to position or vice versa while only incurring an exponentially small error.  We need to extend this to the case where we have a superposition over sets of packets in the dual rail encoding.  Suppose we have some state written in the CBD:
\begin{equation}
\ket{\psi}=\sum_j c_j \ket{\psi_j}
\end{equation}
where each $\ket{\psi}$ is the tensor product of $m$ packets on their appropriate rails and $m$ vacuum states.  $\ket{\psi_j^p}$ is used to denote a tensor of packets in the momentum representation, and $\ket{\psi_j^x}$ denotes a tensor of these packets in the position representation.  We wish to bound the difference
\begin{equation}
\left|\left|\sum_i c_i \left(\ket{\psi_i^x}-\ket{\psi_i^p} \right)\right|\right|=\left|\left|\sum_i c_i \ket{\eta^i}\right|\right|
\end{equation}
Observe that $\braket{\eta_i|\eta_j} \leq \delta_{ij} \braket{\eta_i|\eta_i}$, so we can use \cref{basis_bound}, and bound the sum with a bound on each state in the sum.  We have already shown $\braket{\eta_i|\eta_i}$ is exponentially small with $N$, so the sum must also be exponentially small with $N$.  So, we have that a superposition of Gaussian packets in the dual rail encoding written in the position basis is exponentially close to that same Gaussian written in the momentum basis assuming $2 \pi\Delta x \Delta p= N$


\subsubsection{Approximate Normalization}

Next we demonstrate that our packets so defined are very close to being normalized in the limit of large $N$.  This is important, since the real physical state will be normalized, and we will work mostly with slightly unnormalized states.  First we will show that a single packet on a single rail is very nearly normalized, next that a tensor product of such packets is nearly normalized, and finally that a superposition must also be nearly normalized.

\begin{restatable}[Approximate Normalization]{thm}{approx_normal}\label{approx_normal}
Define 
\begin{equation}
\ket{\psi}=\frac{1}{\sqrt{\Delta p \sqrt{\pi}}}\sum_{p=0}^{N-1} \sum_{\alpha=-\infty}^\infty e^{-\frac{2 \pi i x_0 p }{N}} e^{-\frac{(\alpha N+p-p_0)^2}{2 \Delta p^2}} \ket{p}
\end{equation}
We have that,
\begin{equation}
|\braket{\psi|\psi}| = 1+ O \left(\frac{1}{\Delta p}\right)
\end{equation}
\end{restatable}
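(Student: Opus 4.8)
The plan is to reduce the normalization of $\ket{\psi}$ to an explicit double sum over momenta, apply the discrete-to-integral comparison \cref{sum_bound} to each resulting exponential sum, and use the Gaussian integral to show the answer is $1+O(1/\Delta p)$. The cross terms between different "winding numbers" $\alpha$ will be exponentially suppressed because $p_0=N/4$ forces their oscillating phases, so the leading behaviour comes entirely from the $\alpha=\alpha'$ diagonal.

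First I would compute $\braket{\psi|\psi}$ directly. Since $\{\ket{p}_m\}_{p=0}^{N-1}$ is orthonormal, expanding the two sums gives
\begin{equation}
\braket{\psi|\psi}=\frac{1}{\Delta p\sqrt{\pi}}\sum_{p=0}^{N-1}\left|\sum_{\alpha=-\infty}^\infty e^{-\frac{2\pi i x_0 p}{N}}e^{-\frac{(\alpha N+p-p_0)^2}{2\Delta p^2}}\right|^2,
\end{equation}
and since the phase $e^{-2\pi i x_0 p/N}$ is common to all $\alpha$ and has modulus one, it drops out of the absolute value:
\begin{equation}
\braket{\psi|\psi}=\frac{1}{\Delta p\sqrt{\pi}}\sum_{p=0}^{N-1}\sum_{\alpha,\alpha'}e^{-\frac{(\alpha N+p-p_0)^2}{2\Delta p^2}}e^{-\frac{(\alpha' N+p-p_0)^2}{2\Delta p^2}}.
\end{equation}
Splitting into the diagonal part $\alpha=\alpha'$ and the off-diagonal part, the diagonal part is $\frac{1}{\Delta p\sqrt{\pi}}\sum_{\alpha}\sum_{p=0}^{N-1}e^{-(\alpha N+p-p_0)^2/\Delta p^2}$; reindexing $q=\alpha N+p$ shows the double sum is just $\sum_{q=-\infty}^{\infty}e^{-(q-p_0)^2/\Delta p^2}$ over all integers. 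Applying \cref{sum_bound} to the positive unimodal function $f(q)=e^{-(q-p_0)^2/\Delta p^2}$ (the tail-correction terms vanish since we sum over all of $\mathbb{Z}$) gives $\int_{-\infty}^{\infty}e^{-(q-p_0)^2/\Delta p^2}\,dq+O(1)=\Delta p\sqrt{\pi}+O(1)$, so the diagonal contributes $1+O(1/\Delta p)$.

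Next I would bound the off-diagonal terms. For $\alpha\neq\alpha'$, completing the square in $p$ writes the product of Gaussians as $e^{-(\alpha-\alpha')^2N^2/(4\Delta p^2)}$ times a single Gaussian in $p$ centered at $p_0-\frac{(\alpha+\alpha')N}{2}$ with width $\Delta p$. Summing the single Gaussian over $p$ and $\alpha+\alpha'$ again gives something $O(\Delta p)$ by \cref{sum_bound}, while the prefactor $e^{-(\alpha-\alpha')^2N^2/(4\Delta p^2)}$ summed over $\alpha-\alpha'\neq 0$ is, since $\Delta p\sim N^{2/3}$ (indeed $\Delta p=o(N)$), dominated by its first term $\sim e^{-N^2/(4\Delta p^2)}$, which is exponentially small in $N^{2/3}$. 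Multiplying by $\frac{1}{\Delta p\sqrt\pi}$ leaves an exponentially small contribution, negligible against $O(1/\Delta p)$. Combining the two parts yields $|\braket{\psi|\psi}|=1+O(1/\Delta p)$.

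The main obstacle is purely bookkeeping: carefully completing the square in the off-diagonal terms and verifying that the geometric-type sum over $\alpha-\alpha'$ is controlled by its leading term requires the hypothesis $\Delta p=o(N)$ (here $\Delta p\sim N^{2/3}$), which is exactly the regime in which the packet occupies a vanishing fraction of momentum space. One should also be mildly careful that $f(q)=e^{-(q-p_0)^2/\Delta p^2}$ genuinely has a single maximum so that \cref{sum_bound} applies, and that the $+f_{\max}$ and tail terms there contribute only $O(1)$ before division by $\Delta p$. Everything else is a routine Gaussian-integral evaluation.
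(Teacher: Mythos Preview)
Your argument is correct and slightly cleaner than the paper's. The paper isolates the single term $\alpha=\alpha'=0$ as the main contribution, then bounds the cross term $\alpha=0,\alpha'=1$ (and its symmetric partner) by invoking $p_0=N/4$ explicitly, and finally dismisses the remaining $|\alpha|,|\alpha'|\ge 1$ terms as exponentially small; only after that does it apply \cref{sum_bound} to the finite sum $\sum_{p=0}^{N-1} e^{-(p-p_0)^2/\Delta p^2}$, picking up genuine tail integrals in the lower bound. By contrast, your reindexing $q=\alpha N+p$ collapses the entire diagonal $\alpha=\alpha'$ into a single sum over $\mathbb{Z}$, so \cref{sum_bound} applies with no tail terms at all, and your completing-the-square handles every off-diagonal pair uniformly via the prefactor $e^{-(\alpha-\alpha')^2 N^2/(4\Delta p^2)}$. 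This buys you a proof that never needs the specific value $p_0=N/4$ and treats all cross terms in one stroke; the paper's approach is a bit more hands-on but makes the role of $p_0=N/4$ visible. Two small points worth flagging when you write it up: \cref{sum_bound} is literally stated for $\sum_{j=0}^{N-1}$, so you should remark that its proof extends verbatim to sums over $\mathbb{Z}$ (both tail integrals drop out); and when you sum the remaining Gaussian over $p$ and $\alpha+\alpha'$ at fixed $\alpha-\alpha'$, note that the parity constraint on $\alpha+\alpha'$ still yields a translate of a full integer sum (since $N/4\in\mathbb{Z}$ forces $N$ even), so the $O(\Delta p)$ bound goes through.
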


\begin{proof}
$\,$\\
\begin{eqnarray}
\fl\braket{\psi | \psi}= \nonumber \\
\fl \frac{1}{\Delta p \sqrt{\pi}}\left(\sum_{p'=0}^{N-1}\sum_{\alpha'=-\infty}^\infty e^{\frac{2 \pi i x_0 p }{N}} e^{-\frac{(\alpha' N+p-p_0)^2}{2 \Delta p^2}} \bra{p'} \right) \left( \sum_{p=0}^{N-1}\sum_{\alpha=-\infty}^\infty e^{-\frac{2 \pi i x_0 p }{N}} e^{-\frac{(\alpha N+p-p_0)^2}{2 \Delta p^2}} \ket{p}  \right)
\end{eqnarray}
\begin{equation}
= \frac{1}{\Delta p \sqrt{\pi}} \sum_{\alpha, \alpha'=-\infty}^\infty \sum_{p=0}^{N-1} e^{-\frac{(\alpha N+p-p_0)^2}{2\Delta p^2}} e^{-\frac{(\alpha' N+p-p_0)^2}{2\Delta p^2}}
\end{equation}
\begin{eqnarray} 
\fl =\frac{1}{\Delta p \sqrt{\pi}} \Bigg( \sum_{p=0}^{N-1} e^{-\frac{(p-p_0)^2}{\Delta p^2}} +2\sum_{p=0}^{N-1} e^{-\frac{(N+p-p_0)^2}{2\Delta p^2}} e^{-\frac{(p-p_0)^2}{2\Delta p^2}} + \nonumber \\
 \sum_{|\alpha|, |\alpha'|\geq 1}\sum_{p=0}^{N-1}  e^{-\frac{(\alpha N+p-p_0)^2}{2\Delta p^2}} e^{-\frac{(\alpha' N + p-p_0)^2}{2 \Delta p^2}} \Bigg) 
\end{eqnarray} 
We assume that $p_0=N/4$.  Then, 
\begin{eqnarray}
\fl \leq \frac{1}{\Delta p \sqrt{\pi}} \Bigg( \sum_{p=0}^{N-1} e^{-\frac{(p-p_0)^2}{\Delta p^2}} +2 e^{-\frac{\frac{9}{16}N^2}{2\Delta p^2}}\sum_{p=0}^{N-1} e^{-\frac{(p-p_0)^2}{2 \Delta p^2}} +\nonumber \\
\sum_{|\alpha|, |\alpha'|\geq 1}\sum_{p=0}^{N-1}  e^{-\frac{(\alpha N+p-p_0)^2}{2\Delta p^2}} e^{-\frac{(\alpha' N + p-p_0)^2}{2 \Delta p^2}} \Bigg)
\end{eqnarray}
Assuming $\Delta p=N^{1-\varepsilon}$ the second and third terms are exponentially small, and can be ignored.  Dropping the small terms and applying \cref{sum_bound}, we get to leading order
\begin{equation}
\leq \frac{1}{\Delta p \sqrt{\pi}}\left(1+\int_{-\infty}^\infty e^{-\frac{(p-p_0)^2}{\Delta p^2}} dp \right)\leq 1 + \frac{1}{\Delta p \sqrt{\pi}}
\end{equation}

To derive the lower bound, we start from equation $(67)$, drop the small terms and apply \cref{sum_bound} to get:
\begin{equation}
\braket{\psi | \psi } \geq \frac{1}{\Delta p \sqrt{\pi}} \left( \Delta p \sqrt{\pi} - \int_{-\infty}^{-1}e^{-\frac{(p-p_0)^2}{\Delta p^2}} dp + \int_{N}^\infty e^{-\frac{(p-p_0)^2}{\Delta p^2}} dp -1\right)
\end{equation}
We can upper bound the two integrals (which corresponds to  a lower bound on the negative integrals) and show that they are exponentially small with the observation that $e^{-x^2} \leq e^{-x}$.  The final asymptotic bound is
\begin{equation}
\braket{\psi | \psi} \geq 1-\frac{1}{\Delta p \sqrt{\pi}}
\end{equation}
\end{proof}

So, now suppose we have a tensor of packets in the momentum representation.  We can write the normalization constant (up to exponentially small corrections) as:
\begin{eqnarray}
\fl \sqrt{\braket{\psi_1^p|\psi_1^p}\braket{\psi_2^p|\psi_2^p}...\braket{\psi_m^p|\psi_m^p}}\leq \sqrt{\left(1+\frac{1}{\Delta p}\right)^m}=\Bigg[1+\sum_{k=1}^m \kbordermatrix{
    \mbox{} & \mbox{}\\
    \mbox{} & m \\ 
   \mbox{} & k
} \left(\frac{1}{\Delta p} \right)^k\Bigg]^{1/2} =\nonumber \\
\fl \sqrt{1+ O\left(\frac{m}{\Delta p} \right)}
\end{eqnarray}
In our final scaling, $\frac{m}{\Delta p} \approx \frac{1}{N^{1/3}}$ so we expect tensor products of these packets to be very near normalized.  

Just as in the last section, one more step is needed.  We need to check that a superposition over tensors of these packets will be nearly normalized.  Let us write the state in the CBD $\ket{\psi}=\sum_i c_i \ket{\psi^i}$.  Then, if $\braket{\psi^i|\psi^i}$ is the same for all $i$ and $\sum_i |c_i|^2=1$
\begin{equation}
\braket{\psi|\psi}=\braket{\psi^i|\psi^i}
\end{equation}
so the superposition is off from normalization by the same factor that the tensor of packets is.

There is one more important observation one should keep in mind as we continue with the proofs.  We need to observe that the ``translation unitary'' is well defined.  If we look at the states 

\begin{equation}
\ket{\phi_1}=\frac{1}{\sqrt{\Delta p \sqrt{\pi}}}\sum_{\alpha, p} e^{-\frac{2 \pi i p x_0}{N}} e^{-\frac{(\alpha N+p-p_0)^2}{2\Delta p^2}} \ket{p}_m
\end{equation}
and
\begin{equation}
\ket{\phi_2}=\frac{1}{\sqrt{\Delta p \sqrt{\pi}}}\sum_{\alpha, p} e^{-\frac{2 \pi i p (x_0+2t)}{N}} e^{-\frac{(\alpha N+p-p_0)^2}{2\Delta p^2}} \ket{p}_m
\end{equation}
it is not hard to see that $\braket{\phi_1|\phi_1}=\braket{\phi_2|\phi_2}$.  These states are un-normalized, but have the same vector norm.  We define $U_{2t}^{x_0}$ to be the unitary that moves $\ket{\phi_1}$ to $\ket{\phi_2}$ and maps the all spin down vector to itself.  One can then expand this unitary to be a unitary acting on the full set of states, it does not matter for us what the unitary does to the rest of the states.  In most of the paper the $x_0$ will not be explicitly written.  Note that $t$ can very well be a small real number, not necessarily an integer mod $N$.

\subsection{Dispersion and Transience}\label{subsec_6_3}

Here we will derive our dispersion bounds.  The essential idea is to write our state in the momentum basis.  In this  basis, the ring Hamiltonian is diagonal, so the exact time evolution of the quantum state can be described and the energy function can be taylor expanded.  The first order term will lead to translation of the packets, and the higher order terms will constitute some remainder whose effect we need to bound.

We make use of the same type of argument as the previous sections.  A single packet on a single rail propagates around the chain while only picking up a (polynomially) small error, which is extended to multiple chains in the dual rail encoding.  

\disp

\begin{proof}
Let $\ket{\psi}$ be a packet centered at $x_0$ in x space and $N/4$ in p space.  Time evolving, we get:
\begin{equation}
e^{-iH_{{ \rm ring}}^1t} \ket{\psi}=\frac{1}{\sqrt{\Delta p \sqrt{\pi}}} \sum_{p=0}^{N-1}\sum_{\alpha=-\infty}^\infty e^{-\frac{(\alpha N +p-p_0)^2}{2 \Delta p^2}} e^{-\frac{i 2 \pi p x_0}{N}}e^{-2 i t \cos \left[\frac{2 \pi p}{N} \right]}\ket{p}
\end{equation}
We let $\widetilde{p}=p-p_0$, and expand the energy function around $p_0=N/4$ to obtain:
\begin{equation}
\cos \left[\frac{2 \pi p} {N} \right]= \frac{2 \pi }{N} \widetilde{p}-\left(\frac{2 \pi}{N} \right)^3 \frac{1}{3!} \widetilde{p}^3+\left(\frac{2 \pi}{N} \right)^5 \frac{1}{5!} \widetilde{p}^5-...=\frac{2 \pi }{N} \widetilde{p} +R(p)
\end{equation}
The state becomes:
\begin{equation}
\fl e^{-iHt} \ket{\psi}=e^{-\frac{4 \pi i t p_0 }{N}}\frac{1}{\sqrt{\Delta p \sqrt{\pi}}} \sum_{p=0}^{N-1}\sum_{\alpha=-\infty}^\infty e^{-\frac{(\alpha N +p-p_0)^2}{2 \Delta p^2}} e^{-\frac{i 2 \pi p (x_0+2t)}{N}}e^{-2it R(p)}\ket{p}
\end{equation}
If it were not for the remainder term $R(p)$, this would exactly be the translated state we are interested in $\ket{\psi_{2t}}$ (up to an irrelevant global phase).  Subtracting the two gives
\begin{equation}
\fl e^{-iHt} \ket{\psi}-\ket{\psi_{2t}}=\frac{1}{\sqrt{\Delta p \sqrt{\pi}}}\sum_{p=0}^{N-1} \sum_{\alpha=-\infty}^\infty e^{-\frac{(\alpha N +p-p_0)^2}{2 \Delta p^2}} e^{-\frac{i 2 \pi p (x_0+2t)}{N}}(e^{-2it R(p)}-1)\ket{p}
\end{equation}
We want to bound the size of this state and show that it is (polynomially) small.  The terms with $|\alpha| \geq 1$ are negligible, since they lead to exponentially small contributions.

Dropping these terms and calculating the norm of this state, we get:
\begin{equation}
\left| \left| U \ket{\psi}-\ket{\psi_t} \right| \right|^2=\frac{1}{\Delta p \sqrt{\pi}} \sum_{p=0}^{N-1}e^{-\frac{(p-p_0)^2}{ \Delta p^2}} (e^{2itR(p)}-1)(e^{-2 i t R(p)}-1)
\end{equation}
\begin{eqnarray}
=\frac{1}{\Delta p \sqrt{\pi}} \sum_{p=0}^{N-1}e^{-\frac{(p-p_0)^2}{ \Delta p^2}} \sin \left[t R(p) \right]^2 \leq \nonumber \\
\frac{1}{\Delta p \sqrt{\pi}}\left( \sum_{p=0}^{\lceil N/2 \rceil} e^{-\frac{(p-p_0)^2}{\Delta p^2}} \sin \left[t R(p) \right]^2 +\sum_{p=\lceil N/2 \rceil+1}^{N-1} e^{-\frac{(p-p_0)^2}{\Delta p^2}} \right)
\end{eqnarray}
Recall that $p_0=N/4$, so each term in the second sum can be bounded by $e^{-\frac{(p-p_0)^2}{\Delta p^2}}$ evaluated at $N/2$.  This combined with the fact that $\sin(\theta)^2 \leq \theta^2$ gives:
\begin{equation}
\leq \frac{1}{\Delta p \sqrt{\pi}} \sum_{p=0}^{N/2}e^{-\frac{(p-p_0)^2}{ \Delta p^2}} \left[t R(p) \right]^2 +\frac{N}{2\Delta p \sqrt{\pi}} e^{-\frac{(N/4)^2}{\Delta p^2}}
\end{equation}
Assuming $\Delta p$ is  $O(N^{1-\varepsilon})$ the second term can be dropped (again it is exponentially small).  For the first sum, observe that $R(p)$ is an alternating convergent series whose magnitude (term by term) is decreasing (We split up the sum $\sum_{p=0}^{N-1}$ into two sums so that we can guarantee that the series is term by term decreasing).  So it must be smaller in magnitude than its first element.
\begin{equation}
\leq \frac{1}{\Delta p \sqrt{\pi}} \sum_{p=0}^{N/2}e^{-\frac{(p-p_0)^2}{ \Delta p^2}} \left[t \left(\frac{2 \pi}{N} \right)^3 \frac{1}{3!} \widetilde{p}^3  \right]^2
\end{equation}
We are summing a positive function.  Applying \cref{sum_bound} and dropping unimportant constants gives:

\begin{equation}
\leq \frac{t^2}{\Delta p N^6 }\left[ \int_{-\infty}^{\infty} e^{-\frac{\widetilde{p}^2}{\Delta p ^2}} \widetilde{p}^6 dp +\Delta p^6 \right]= \frac{t^2}{\Delta p N^6 }\left[\Delta p^7 +\Delta p^6 \right]
\end{equation}
to leading order
\begin{equation}
=\frac{t^2 \Delta p^6}{N^6}
\end{equation}

\end{proof}

Each packet picks up a phase of $e^{-\frac{4 \pi i \Delta t p_0}{N}}$ over a time interval $\Delta t$.  In the dual rail encoding each basis state is the tensor of $m$ packets, so every state picks up a phase of $e^{-\frac{4 \pi i m \Delta  t p_0}{N}}$.  Overall phase does not effect measurement statistics, so this can be ignored.

The corollary that extends this result to multiple rails is a simple consequence of \cref{thm_kitaev}:
\mult
\begin{proof}
Letting $H_j$ be the nearest neighbor ring Hamiltonian for the jth rail, the full Hamiltonian decomposes as $H_{{ \rm rings}}^m=H_1+...+H_m$.  Since $[H_i, H_j]=0$, $e^{-i(H_1+...+H_m)t}=e^{-iH_1 t} e^{-i H_2 t}...e^{-iH_m t}$.  Let $U_j$ translate ring j a distance $2t$ while acting as identity on the others.  Then, $U_{2t}=U_1 U_2...U_m$.  If we let $V_j=e^{-i H_j t}$ then directly applying \cref{thm:kitaev} we have 
\begin{equation}
\left| \left|\left(e^{-iH_{{ \rm rings}}^mt}-U_1U_2...U_m\right)\ket{\psi} \right| \right|=O\left(\frac{mt \Delta p^3}{N^3} \right)
\end{equation}
\end{proof}

The same kind of analysis can  be applied to bound the dispersion when we have a superposition in the dual rail encoding.  Suppose we have such a superposition written in the CBD: $\sum_i c_i \ket{\psi^i}$.  We are interested in bounding:
\begin{equation}
\sum_i c_i \left(e^{-i H_{{ \rm rings}}^{2m} t}-U_{2t} \right)\ket{\psi^i}=\sum_i c_i \ket{\eta^i}
\end{equation}
Suppose a particular $\ket{\psi^i}$ is supported on some fixed set of rails.  Each ring Hamiltonian $H_{{ \rm ring}}^1$ moves excitation to different partitions on the rail, but does not move excitations between rails.  It follows that $e^{-i H_{{ \rm ring}}^{2m}t}\ket{\psi_i}$ is supported on the same set of rails as $\ket{\psi_i}$.  Clearly the same property holds for $U_{2t}$, so we can conclude that $\braket{\eta^i|\eta^j}=0$ when $i \neq j$.  \cref{basis_bound} implies that a superposition can be bounded just as a single tensor of packets.  

So, we have shown that we can move a superposition of packets around the chain, while accruing error at most $O\left(\frac{mt\Delta p^3}{N^3} \right)$.  Notice that we could have also included a Hamiltonian which commutes with all of the rail Hamiltonians.  Say we had a set of gates that were extended to the whole chain.  Each of the gates would commute with the ring Hamiltonian, and, as long as they act on separate rings, they would also commute with each other.  In this case the resulting time evolution would just be translation and phase added to the packets.

\subsubsection{Transient Regime}
$\,$\\
As described in the paper, we need bounds on the so called ``transient regime'', where our packets are not well localized to any particular gate block.  We can leverage the fact that the the interactions are weak, so if the transient regime is small we expect to see very little effect on the packets.  For this, we use a result that bounds the matrix exponential of a complex Hermitian matrix when we add another small complex Hermitian matrix to it.  Suppose we have two complex matrices $A$ and $E$ and suppose that $E$ is small (in operator norm).  Then, the result is that $e^{(A+E)t}$ is very nearly $e^{At}$, up to an error of $O(||E||t)$.  We will need to define $U_{2t}^p$.  This is simply the translation unitary, with some added phase on the relevant basis states, depending on the current gate.

Now we will prove our transient bound:
\begin{restatable}[Bound for Transient Regime]{cor2}{transregime} \label{transregime2}
Let $H_{{ \rm ring}}^{2m}$ be the ring Hamiltonian, and let $\hat{H}$ be the gate Hamiltonians.  Suppose $\Phi$ is the maximum gate strength.  Then if $\ket{\psi}$ is a superposition of packets it holds that:
\begin{equation}
\left| \left|\left( e^{-i(H_{{ \rm ring}}^{2m}+\hat{H})t}-U_{2t}^p\right)\ket{\psi} \right| \right| =O\left(mt\Phi++\frac{m t \Delta p^3}{N^3}\right)
\end{equation}
\end{restatable}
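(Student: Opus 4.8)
The plan is to decompose the Hamiltonian $H_{\rm ring}^{2m}+\hat H$ as a sum of a ``nice'' part $H_{\rm ring}^{2m}+\hat H_{\rm full}$, where $\hat H_{\rm full}$ is the gate Hamiltonian artificially extended to span the entire ring so that it commutes with the ring Hamiltonian, plus an error term $E = \hat H - \hat H_{\rm full}$ consisting only of the ``missing'' couplings at the edges of the gate region (the transient part). The point is that $E$ is built from finitely many 1-local and 2-local terms, each of operator norm $O(\Phi)$, so $\|E\| = O(m\Phi)$ after accounting for all $2m$ rails (or, more carefully, $O(m)$ gate-edge terms each of strength $O(\Phi)$).

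First I would invoke \cref{loan_thing} (the van Loan matrix-exponential sensitivity bound) with $A = -i(H_{\rm ring}^{2m}+\hat H_{\rm full})$ and $E = -i(\hat H - \hat H_{\rm full})$, both skew-Hermitian. This gives
$$
\left\| e^{-i(H_{\rm ring}^{2m}+\hat H)t} - e^{-i(H_{\rm ring}^{2m}+\hat H_{\rm full})t}\right\| \le \|E\|\, t\, e^{\|E\|t} = O(m\Phi t),
$$
using that in the relevant time regime $\|E\|t = O(m\Phi t)$ is small so the exponential factor is $1+o(1)$. Applying this operator-norm bound to $\ket{\psi}$ (which is $O(1)$-normalized by \cref{approx_normal}) handles the transient contribution.

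Next I would handle $e^{-i(H_{\rm ring}^{2m}+\hat H_{\rm full})t}\ket{\psi}$. Since the extended gates commute with each other (they act on disjoint rings) and with the ring Hamiltonian, this factors as a product of the ring evolution and the diagonal phase evolution; writing $\ket{\psi}$ in the RBD so that each basis state is an eigenstate of $\hat H_{\rm full}$, the phase evolution is exactly the phase part of $U_{2t}^p$, and the ring evolution is handled by \cref{disp_multiring} (extended to the dual-rail subspace as remarked after that corollary), giving error $O(mt\Delta p^3/N^3)$. Here I use \cref{basis_bound} to pass from the per-basis-state bound to the superposition, since the RBD basis states stay mutually orthogonal under both the translation unitary and the ring evolution. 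Finally, combining the two pieces via the triangle inequality yields $O(m\Phi t + mt\Delta p^3/N^3)$, as claimed; one last application of \cref{thm_kitaev} or just the triangle inequality stitches the ``replace $\hat H$ by $\hat H_{\rm full}$'' step and the ``approximate the nice evolution by $U_{2t}^p$'' step together.

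The main obstacle I anticipate is bookkeeping the operator norm of $E$ correctly: one must verify that extending each gate to the full ring changes the Hamiltonian only by terms localized at the two edges of each gate block, that there are $O(m)$ such terms (one per rail or per pair of rails involved in a gate), and that each has norm $O(\Phi)$ — in particular that the CPHASE gate's nonlocal structure, once truncated as described in the main text, still contributes only $O(\Phi)$-norm edge terms rather than something that scales with the gate length. A secondary subtlety is ensuring that $U_{2t}^p$ is well-defined on the unnormalized packets (the translation unitary construction at the end of \cref{subsec_6_2}) and that the exponentially small errors from switching between position and momentum representations and from the approximate normalization are genuinely negligible next to the $O(mt\Phi)$ and $O(mt\Delta p^3/N^3)$ terms.
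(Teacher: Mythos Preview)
Your overall strategy is sound and leads to the stated bound, but it is not the paper's route, and your description of $E$ contains a genuine error.

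The paper does not extend $\hat H$ to the full ring. It applies \cref{loan_thing} with $A=-iH_{\rm ring}^{2m}$ and perturbation $-i\hat H$ itself, bounding $\|\hat H\|_V\le m\Phi$ by a dual-rail counting argument (on any basis state of $V$ at most $m$ of the 1- and 2-local gate terms act nontrivially, each contributing at most $\Phi$). This compares the full evolution to the bare ring evolution, which is then handled by \cref{disp_multiring}; but that only yields the phase-free translation $U_{2t}$, so the paper must separately bound the ``lost phase'' $\|(U_{2t}^p-U_{2t})\ket{\psi}\|=O(mt\Phi)$ in a final step. Your decomposition is arguably cleaner because it absorbs this last step: since each full-ring gate Hamiltonian commutes with $H_{\rm ring}^{2m}$, the ``nice'' evolution $e^{-i(H_{\rm ring}^{2m}+\hat H_{\rm full})t}$ is exactly $U_{2t}^p$ on RBD states up to the dispersion error (this is precisely the remark made at the end of the dispersion subsection, just before the transient discussion).

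The error in your write-up is the claim that $E=\hat H-\hat H_{\rm full}$ ``consists only of the missing couplings at the edges of the gate region.'' It does not: $E$ is (minus) the extension of each gate over the entire complement of its region, so for a single-qubit gate of length $L$ it contains $O(N-L)$ local terms, and for a CPHASE extended to all pairs it contains $O(N^2)$ two-local terms. Your ``$O(m)$ edge terms each of strength $O(\Phi)$'' counting is simply wrong as a description of $E$. The bound $\|E\|_V=O(m\Phi)$ is nonetheless true, but for exactly the reason the paper uses to bound $\|\hat H\|_V$: on any dual-rail basis state there is exactly one excitation per encoded qubit, so at most one term per gate acts nontrivially regardless of how many local terms the gate contains; hence $\|\hat H\|_V,\|\hat H_{\rm full}\|_V\le m\Phi$ and $\|E\|_V\le 2m\Phi$ by the triangle inequality. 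Replace your edge-counting justification with this argument and the proof goes through.
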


\begin{proof}
$H_{{ \rm ring}}^{2m}$ and $\hat{H}$ map $V \rightarrow V$.  Since $\ket{\psi}$ is nearly normalized, it holds upto a constant multiple that:
\begin{equation}\label{eq_724}
\left|\left|\left(e^{-i(H_{{ \rm ring}}^{2m}+\hat{H})t}-e^{-iH_{{ \rm ring}}^{2m} t} \right)\ket{\psi} \right| \right| \leq \left|\left|e^{-i(H_{{ \rm ring}}^{2m}+\hat{H})t}-e^{-iH_{{ \rm ring}}^{2m}t}  \right| \right|_V
\end{equation}
applying \cref{loan_thing} gives $\leq ||\hat{H}||_V t e^{||\hat{H}||_V t}$

In order to evaluate $||\hat{H}||_V$, consider the standard basis for V.  These are all spin configurations with exactly one excitation per every set of two rails corresponding to a qubit.  $\hat{H}$ is the sum of 1-local and 2-local terms.  On any particular basis state $\ket{x} \in V$, at most $m$ of these terms act on $\ket{x}$ to produce something that is nonzero.  Further, if two of these computational basis states are orthogonal before action by $\hat{H}$, they will still be orthogonal after action by $\hat{H}$.  We have therefore met the conditions required to apply \cref{basis_bound} (For two basis states $\ket{x}$ and $\ket{x'}$ we have that $\braket{x'|\hat{H}^{\dagger} \hat{H} | x} \leq (m \Phi)^2 \delta_{x, x'}$) and can thus conclude: $||\hat{H}|| \leq m \Phi$.

Applying \cref{loan_thing} to \cref{eq_724}, up to dispersion, we can ignore the effect of the packets entering the gates up to an error of $||\hat{H}t||_V=O(\Phi mt)$.  Applying our previous result on dispersion, \cref{disp_multiring}, we get an additional error of 
\begin{equation}
\frac{mt\Delta p^3}{N^3}=\frac{mt}{N^{1+3 \varepsilon}}
\end{equation}

Now we need to show that the phase we are ignoring (we are skipping over a portion of the gate with width $t$) is small.  Let us write $\ket{\psi}$ in the RBD.  
The lost phase is on the order of at most $e^{imt\Phi}$ for each of these states, and our error vectors are orthogonal so we can apply \cref{basis_bound}
\begin{equation}
||(e^{imt\Phi}-1)||=\left|\sin mt\Phi \right| \approx mt\Phi
\end{equation}
The lost phase is of the same order as the error calculated using \cref{loan_thing}.
\end{proof}

\subsection{Trotterization}\label{subsec_6_4}

We will first show that a ``far away'' Hamiltonian produces an exponentially small vector when acting on a packet:

\begin{restatable}{thm}{trans}\label{small_ham}
Let $\ket{\psi}$ be a superposition of packets in the dual rail encoding on $m$ rails located at some fixed position $x_0$.  Let $H$ be an ``undesireable'' Hamiltonian containing a polynomial number of interaction terms that are all at least a distance $d$ away.  Suppose that $H$ is made up of gate interactions described in the paper, and that all these interactions have polynomial strength.  Then, we have
\begin{equation}
\left|\left|H \ket{\psi}\right|\right|=O\left(poly(N) e^{-\frac{d^2}{2\Delta x^2}} \right)
\end{equation}
\end{restatable}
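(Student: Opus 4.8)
The plan is to pass to the position representation of the packets, decompose $H$ into its individual one- and two-local interaction terms, and show that each such term acts on the packet essentially by reading off a position amplitude that is exponentially small because it lives a distance at least $d$ from the packet's center; since there are only a polynomial number of terms, each of polynomial strength, this costs only a polynomial prefactor. Concretely, by the approximate discrete Heisenberg relation of \cref{subsec_6_2} the momentum-basis packet state $\ket{\psi}$ agrees with its position-basis expression $\ket{\psi^x}$ up to an exponentially small error vector, and since $H$ is a sum of $K = poly(N)$ terms of polynomial strength we have $\|H\| = poly(N)$; hence $\|H\ket{\psi}\| \le \|H\ket{\psi^x}\| + \|H\|\,\|\ket{\psi} - \ket{\psi^x}\|$ reduces the claim to bounding $\|H\ket{\psi^x}\|$. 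Writing $H = \sum_{k=1}^K H_k$ and applying the triangle inequality, it then suffices to prove $\|H_k\ket{\psi^x}\| = O\!\left(poly(N)\,e^{-d^2/2\Delta x^2}\right)$ for a single term and absorb the factor $K$.

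For one term, I would use that on the single-excitation sector of the rail(s) it touches, a $\zz$-type one-local gate term equals $\Phi_k$ times the projector $P^{(r)}_z$ onto ``the excitation on rail $r$ sitting at site $z$''; a CPHASE-type two-local term equals $\Phi_k\, P^{(r_1)}_a \otimes P^{(r_2)}_b$; and an $\xx$-type hopping term at site $j$ has operator norm at most $2\Phi_k$ times the projector onto the excitation of one of its two rails sitting at site $j$, since it annihilates every other basis state. By the definition of the packets being localized to at least a distance $d$, in each case there is a rail $r$ and a site $z$ with $(z - x_0)\bmod N \ge d$ for which $\|H_k\ket{\psi^x}\| \le 2\Phi_{\max}\,\|P^{(r)}_z\ket{\psi^x}\|$, where $\Phi_{\max} = poly(N)$ is the largest gate strength.

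It remains to bound the projected weight $\|P^{(r)}_z\ket{\psi^x}\|$. Expanding $\ket{\psi^x} = \sum_i c_i \ket{\psi_i}$ in the computational basis decomposition — each $\ket{\psi_i}$ a tensor of $m$ position-space packets centered at $x_0$ together with $m$ vacua, $\sum_i |c_i|^2 = 1$ — the vectors $P^{(r)}_z\ket{\psi_i}$ are pairwise orthogonal, since distinct $\ket{\psi_i}$ differ in which rail of some qubit carries its excitation and $P^{(r)}_z$ leaves that untouched. Hence by \cref{basis_bound} (here just the Pythagorean theorem) $\|P^{(r)}_z\ket{\psi^x}\| \le (1 + o(1))\max_i \|P^{(r)}_z\ket{\psi_i}\|$. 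When rail $r$ of $\ket{\psi_i}$ is a vacuum, $P^{(r)}_z\ket{\psi_i} = 0$; otherwise $\|P^{(r)}_z\ket{\psi_i}\|$ is $|A_z|$ times the product of the norms of the remaining $m - 1$ packets, where $A_z = \frac{1}{\sqrt{\Delta x\sqrt{\pi}}}\sum_{\alpha} e^{2\pi i p_0 z/N}\,e^{-(\alpha N + z - x_0)^2/2\Delta x^2}$ is the position amplitude at $z$. The condition $(z - x_0)\bmod N \ge d$ forces $|\alpha N + z - x_0| \ge d$ for every integer $\alpha$, so the sum over periodic images is dominated by its two nearest terms and $|A_z| = O(e^{-d^2/2\Delta x^2})$ (using $\Delta x\sqrt{\pi} \ge 1$), while the remaining packet norms multiply to $O(1)$ by \cref{approx_normal}. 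Collecting everything and multiplying by $K\Phi_{\max} = poly(N)$ gives the theorem.

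I expect the one genuinely structural point — rather than mere estimation — to be the orthogonality used in the last step. Without exploiting that $P^{(r)}_z$ respects the dual-rail computational-basis structure, a naive bound $\|P^{(r)}_z\ket{\psi^x}\| \le \sum_i |c_i|\,\|P^{(r)}_z\ket{\psi_i}\|$ would contribute a factor $\sum_i |c_i| \le 2^{m/2}$, which is super-polynomial in $N$ in our parameter regime and would wreck the bound; orthogonality is exactly what keeps the accumulated error at the level of a single packet's tail. The rest — verifying that each prefactor ($K$, the gate strengths, the normalization $1/\sqrt{\Delta x\sqrt{\pi}}$, the constant bounding the $\alpha$-sum, the $m - 1$ spectator packet norms) really is polynomially bounded — is routine bookkeeping, as is the initial reduction to the position representation.
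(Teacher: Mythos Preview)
Your proposal is correct and matches the paper's strategy: triangle inequality over the individual gate terms, pass to the position representation, exploit orthogonality of the encoded basis states under each term via \cref{basis_bound}, and read off the Gaussian tail amplitude at a site at least $d$ away. The only difference is that the paper handles the $\xx$-type hopping term in the RBD (the $\ket{\pm}$ basis, where that term acts diagonally) rather than your CBD-plus-projector reduction; your single-rail projector inequality is slightly off for the hopping term (one really needs $\|H_k\ket{\psi}\|^2 = \Phi_k^2\big(\|P^{(0)}_j\ket{\psi}\|^2 + \|P^{(1)}_j\ket{\psi}\|^2\big)$ since the term touches both rails), but this is cosmetic and the final estimate is identical.
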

\begin{proof}
Let $H=\sum_j H_j$.  Then
\begin{equation}
||H\ket{\psi}|| \leq \sum_j ||H_j \ket{\psi}||
\end{equation}
There are three cases to consider.  $H_j$ must be from an X gate ($\mathbf{X}\mathbf{X} +\mathbf{Y}\mathbf{Y}$ between 0 and 1 rails), a Z gate ($\mathbb{I}+\mathbf{Z}$ on a particular $1$ rail), or a CPHASE gate ($(\mathbb{I}+\mathbf{Z})\otimes (\mathbb{I}+\mathbf{Z})$ between 1 rails).  If $H_j$ comes from an X gate, say it is on the first qubit, then write the superposition in the RBD.  Write the state as 
\begin{equation}
\sum_{i\in [0, 2^{m-1}-1]} c_{+, i} \ket{+, i}+c_{-, i}\ket{-, i}
\end{equation}
where $\{+, -, i\}$ stands for the encodings of these states into packets.

It is easy to check that $\braket{+, i|H_j^\dagger H_j|-, i}=0$ so we can apply \cref{basis_bound} to the states $\{H_j \ket{\pm, i}\}$ and bound the superposition by a bound on each basis state.  We can write a basis state explicitly as
\begin{equation}
\left[\frac{1}{\sqrt{\Delta x \sqrt{\pi}}}\sum_{\alpha, x}e^{-\frac{(\alpha N+x-x_0)^2}{2\Delta x^2}}e^{\frac{2 \pi i p_0 x }{N}}\ket{\widetilde{x}}\right]\otimes ...
\end{equation}
where $\ket{\widetilde{x}}=\frac{\ket{x}\ket{{ \rm vac}} \pm \ket{{ \rm vac}}\ket{x}}{\sqrt{2}}$

Since $\ket{\psi^i}$ is a tensor product of packets which are all nearly normalized,
\begin{equation}
||H_j \ket{\psi^i}||=O\left(\left|\left|\frac{1}{\sqrt{\Delta x \sqrt{\pi}}}\sum_{\alpha, x}e^{-\frac{(\alpha N+x-x_0)^2}{2\Delta x^2}}e^{\frac{2 \pi i p_0 x }{N}}H_j\ket{\widetilde{x}} \right)\right|\right|
\end{equation}

$H_j\ket{\widetilde{x}}$ will be $0$ unless $(x-x_0)$ mod $N$ $\geq d$.  So, up to exponentially small contributions (dropping $|\alpha| \geq 1$), this is
\begin{equation}
O\left( \frac{\Phi N}{\sqrt{\Delta x}}e^{-\frac{d^2}{2\Delta x^2}}\right)
\end{equation}
The same analysis applies if $H_j$ is from a Z or CPHASE gate.  In the Z gate case $\ket{\widetilde{x}}$ is either $\ket{x}\ket{{ \rm vac}}$ or $\ket{{ \rm vac}}\ket{x}$.  In the CPHASE case, (suppose the CPHASE gate is on the first two qubits), we have to write down the tensor product of two packets:
\begin{equation}
\fl \left[\frac{1}{\sqrt{\Delta x \sqrt{\pi}}}\sum_{\alpha, x}e^{-\frac{(\alpha N+x-x_0)^2}{2\Delta x^2}}e^{\frac{2 \pi i p_0 x }{N}}\ket{\widetilde{x}_1}\right]\left[\frac{1}{\sqrt{\Delta x \sqrt{\pi}}}\sum_{\alpha, x}e^{-\frac{(\alpha N+x-x_0)^2}{2\Delta x^2}}e^{\frac{2 \pi i p_0 x }{N}}\ket{\widetilde{x}_2}\right]...
\end{equation}
where each $\ket{\widetilde{x}_i}$ is $\ket{x}\ket{{ \rm vac}}$ or $\ket{{ \rm vac}}\ket{x}$.  Then we have that $H_j\ket{\widetilde{x}_1} \ket{\widetilde{x}_2}$ is zero unless $\widetilde{x}_1$ or $\widetilde{x}_2$ is at least a distance $d$ away.  The resulting upper bound is the same, $O\left(poly(N) e^{-\frac{d^2}{2\Delta x^2}} \right)$ although this time the polynomial $\sim N^2 \Phi$
\end{proof}

\begin{wrapfigure}{r}{0.5\textwidth}
  \begin{center}
    \includegraphics[width=0.48\textwidth]{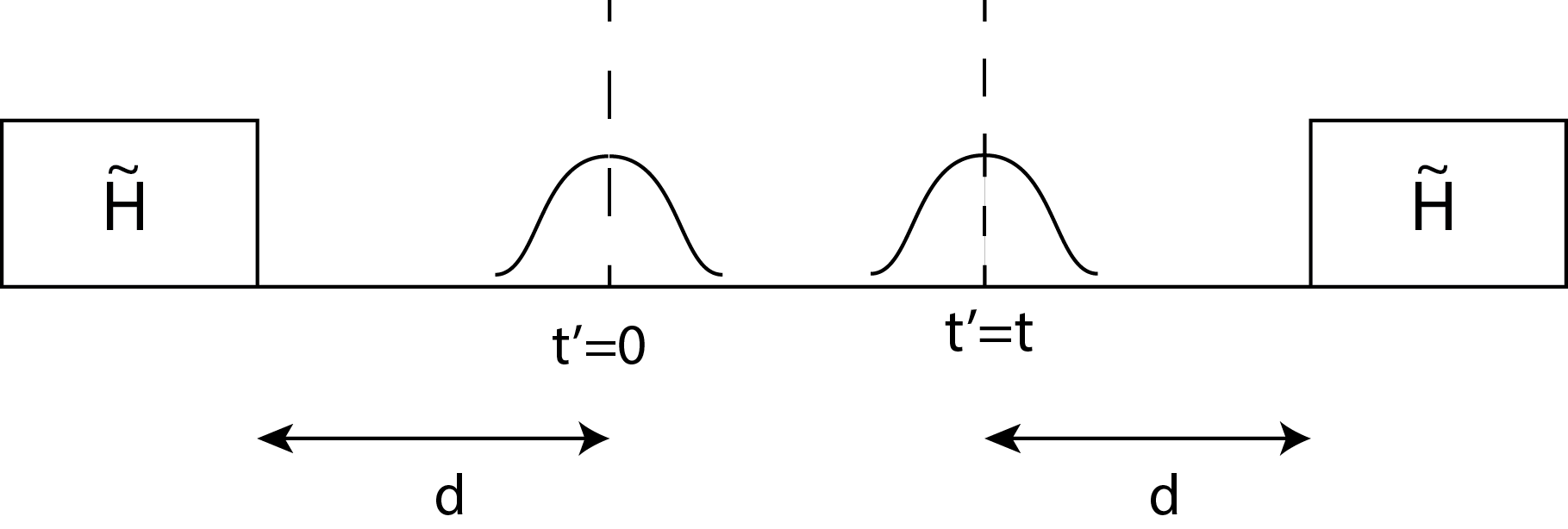}
  \end{center}
  \caption{Packets are localized to a distance d for the whole time evolution if they do not see any interactions in $\widetilde{H}$}\label{local_fig}
\end{wrapfigure}

Now we can proceed to our bounds based on Trotterization\cite{Lloyd1996}.  In the previous sections we covered how the ring Hamiltonians with some gate Hamiltonian on top can in effect propagate the packets and place some uniform phase on them.  We assumed that the gate Hamiltonians spanned the entire qubit ring.  Clearly this is not the setup we have in mind.  We want to have localized gates that implement encoded unitaries on our spin chains, and we want to send our pulses through a number of these gates to implement some quantum circuit.  In this section we will show that our local gates very nearly approximate gates that span the entire chain assuming that our packets are well localized.  We will break up the proof into two separate theorems.

First we will assume that the packets are localized inside a particular gate block, and that there are no other gate blocks.  We will then use this result to treat the more important case, where the packets are localized inside some gate block, and we are in the process of evaluating a quantum circuit.

Suppose we have some superposition of packets located at some position $x_0$, and suppose they are localized to at least a distance $d$.  We will denote the Hamiltonian corresponding only to the current local gates as the ``current Hamiltonian'' $H_{{ \rm current}}$.

In the following theorem we will make the assumption that the packets remain localized to a distance $d$ for $0 \leq 2 t' \leq 2 t$.  In the proof, we will have some undesirable interactions $\widetilde{H}$.  This means that all of the interactions in $\widetilde{H}$ are at least a distance $d$ away from $x_0+2t'$ where $0 \leq 2 t' \leq 2 t$.  While the packet is moving, each of the ``bad'' interactions remains at least a distance $d$ away.  Let us be more concrete for the CPHASE case, since this may be the most confusing part of the result.

\begin{wrapfigure}{r}{0.5\textwidth}
  \begin{center}
    \includegraphics[width=0.48\textwidth]{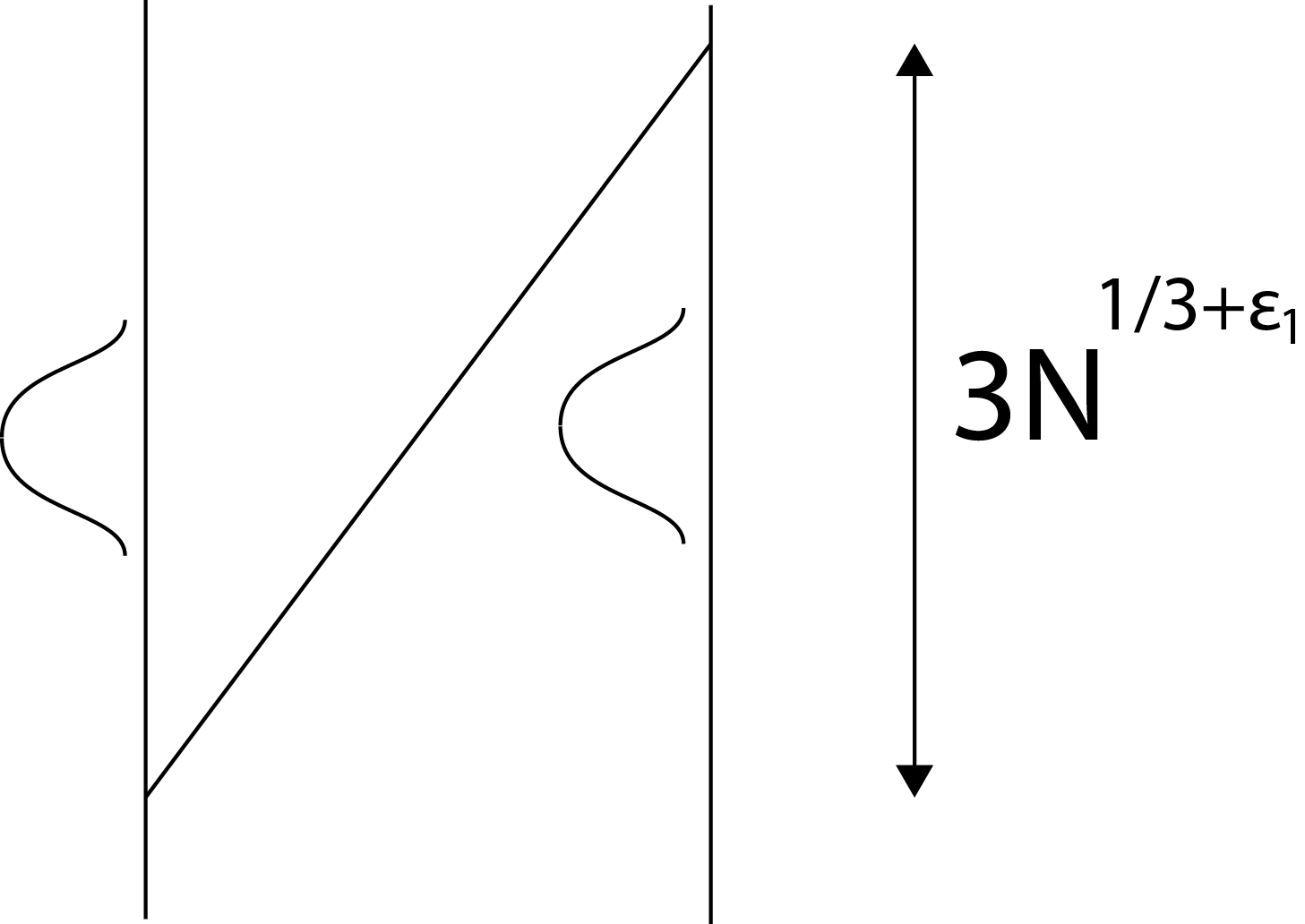}
  \end{center}
  \caption{The closest missing interaction}\label{dualrail}
\end{wrapfigure}

Assume there are two encoded qubits, and we are implementing a CPHASE gate on them.  Suppose the CPHASE gate is set of so that every point on the first 1 rail is connected to the nearest $3N^{1/3+\varepsilon_1}$ points on the second 1 rail.  Let the packet be localized at some point $x_0$ inside the CPHASE gate, suppose $x_0$ is at least a distance $3 N^{1/3+\varepsilon_1}$ for the gate entrance.  The ``closest'' missing 2-local interaction (the interaction where the farthest subsystem is closest to the packets) is the interaction between the two 1 rails that places the packets exactly in the middle (see figure).  In this case the missing interaction could have length $3N^{1/3+\varepsilon_1}+1$, so this interaction is at least a distance $\sim 1.5 N^{1/3+\varepsilon_1}$

Notice that as the packet traverses the gate, this missing interaction always stays at least a distance $\sim 1.5 N^{1/3+\varepsilon_1}$ from the center of the packets.  Each missing interaction can be thought of in this way.  Each missing interaction must be at least a distance $1.5 N^{1/3+\varepsilon_1}$ from the packet center at any given time.  As one subsystem for the 2-local interaction draws closer, the other one moves away.  In the following proof, $\widetilde{H}$ will be made up of interactions of this form.  


\begin{restatable}{thm}{Trotter1}\label{Trotter_1}
Suppose we have a superposition of packets $\ket{\psi}$ over $2m$ rails inside a particular gate block.  Suppose that for all $2 t'$ satisfying $0 \leq 2 t' \leq 2 t$, the translated packets are localized to at least a distance $d$ (see \cref{local_fig}).  Define $H_{{ \rm current}}$ to be the sum of the current gate Hamiltonian and the rail Hamiltonians.  Let $U_{2t}^p$ translate the packets a distance $2t$, as well as add an appropriate phase given the gate.  Then a sufficiently high degree positive polynomial $n(N)=N^{q_1}$:
\begin{equation}
\left|\left|\left(e^{-iH_{{ \rm current}}t}-U_{2t}^p \right)\ket{\psi} \right|\right|=O\left(poly(N) e^{-\frac{d^2}{2\Delta x^2}}+\frac{m t \Delta p^3}{N^3} +\frac{m^2 t^2}{n} \right)
\end{equation}
\end{restatable}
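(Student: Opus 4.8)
The plan is to split $H_{{\rm current}} = A - \widetilde{H}$, where $A = H_{{\rm rings}}^{2m} + \widehat{H}_{{\rm full}}$ is the rail Hamiltonian plus the \emph{idealized} current gate(s) extended to span the whole ring(s), and $\widetilde{H} = \widehat{H}_{{\rm full}} - \widehat{H}_{{\rm current}}$ collects exactly the interactions present in the full-chain gate but absent from the truncated gate that is actually used. By the discussion preceding the theorem, every term of $\widetilde{H}$ sits at chain-distance at least $d$ from $x_0 + 2t'$ for all $0 \le 2t' \le 2t$. Since $\widehat{H}_{{\rm full}}$ commutes with all rail Hamiltonians and is diagonal in the encoded basis, the remark closing \cref{subsec_6_3} together with \cref{disp_multiring} gives
\begin{equation}
\left|\left|\left(e^{-iAt}-U_{2t}^p\right)\ket{\psi}\right|\right| = O\left(\frac{mt\Delta p^3}{N^3}\right),
\end{equation}
so it suffices to bound $\left|\left|\left(e^{-iH_{{\rm current}}t}-e^{-iAt}\right)\ket{\psi}\right|\right|$.

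I would attack this by Trotterization. For a polynomial $n=n(N)=N^{q_1}$, the first-order product formula gives
\begin{equation}
e^{-iH_{{\rm current}}t} = \left(e^{-iAt/n}\,e^{i\widetilde{H}t/n}\right)^{n}+E_{{\rm Tr}},\qquad ||E_{{\rm Tr}}|| = O\left(\frac{||[A,\widetilde{H}]||\,t^2}{n}\right).
\end{equation}
On the dual-rail subspace $V$ one has $||H_{{\rm rings}}^{2m}||_V=O(m)$ and, because the gate terms are weak ($\Phi<1$), $||\widehat{H}_{{\rm full}}||_V,||\widetilde{H}||_V=O(m)$, so $||[A,\widetilde{H}]||=O(m^2)$ and $||E_{{\rm Tr}}||=O(m^2t^2/n)$. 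Next I would apply \cref{thm_kitaev} with $U_j=e^{-iAt/n}$ and $V_j=e^{-iAt/n}e^{i\widetilde{H}t/n}$, so that $\prod_{i=n}^{1}U_i=e^{-iAt}$ and $\prod_{i=n}^{1}V_i=(e^{-iAt/n}e^{i\widetilde{H}t/n})^n$. Since $\widetilde{H}$ commutes with $e^{i\widetilde{H}t/n}$, the right-hand side of \cref{thm_kitaev} is bounded by
\begin{equation}
\sum_{j=1}^{n}\left|\left|\left(\mathbb{I}-e^{i\widetilde{H}t/n}\right)e^{-iA(j-1)t/n}\ket{\psi}\right|\right| \;\le\; \frac{t}{n}\sum_{j=1}^{n}\left|\left|\widetilde{H}\,e^{-iA(j-1)t/n}\ket{\psi}\right|\right|.
\end{equation}

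The crucial point is that $e^{-iAs}\ket{\psi}$ remains a superposition of packets localized to chain-distance $d$ for every $0\le 2s\le 2t$. Writing $e^{-iAs}=e^{-i\widehat{H}_{{\rm full}}s}e^{-iH_{{\rm rings}}^{2m}s}$ (the two commute), the free evolution $e^{-iH_{{\rm rings}}^{2m}s}$ transports each packet to chain-position $x_0+2s$ with only a third-order dispersive spreading, which is $o(d)$ for our parameter choices, while $e^{-i\widehat{H}_{{\rm full}}s}$ merely multiplies encoded basis states by phases and, for the $e^{i\theta\xx}$ gate, rotates among the two co-located rails of a qubit, hence moves no amplitude away from chain-position $x_0+2s$. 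By hypothesis all interactions of $\widetilde{H}$ lie at chain-distance $\ge d$ from $x_0+2s$, so \cref{small_ham} applies to each vector $\widetilde{H}\,e^{-iA(j-1)t/n}\ket{\psi}$ and yields $O({\rm poly}(N)\,e^{-d^2/2\Delta x^2})$ uniformly in $j$. Summing the $n$ terms, the factor $1/n$ cancels against the count $n$, leaving a contribution $O({\rm poly}(N)\,e^{-d^2/2\Delta x^2})$ (the polynomial also swallowing the factor $t$). Adding the three pieces — the dispersion term $O(mt\Delta p^3/N^3)$, the Trotter term $O(m^2t^2/n)$, and the truncation term $O({\rm poly}(N)e^{-d^2/2\Delta x^2})$ — gives the claimed bound.

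I expect the main obstacle to be the claim just made, namely that the \emph{exactly} evolved state $e^{-iAs}\ket{\psi}$ (not merely its leading-order translated-Gaussian approximation) is exponentially concentrated within chain-distance $d$ of $x_0+2s$; \cref{small_ham} requires this as a genuine hypothesis, and since $||\widetilde{H}||$ and the prefactor in \cref{small_ham} are only polynomially small or large, one cannot afford the merely polynomial $L^2$ tail supplied directly by \cref{thm:disp}. What is needed instead is a pointwise tail estimate — bounding $\sum_{|x-x_0-2s|\ge d}$ of the squared amplitudes of $e^{-iH_{{\rm rings}}^{2m}s}\ket{\psi}$ — and this is exactly where the choice $p_0=N/4$ is essential, since it forces the dispersive spreading rate (third order in $1/N$) to be far smaller than the truncation scale $d=\Theta(N^{1/3+\varepsilon_1})$.
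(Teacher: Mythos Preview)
Your overall strategy matches the paper's: split $H_{\rm current}=A+\widetilde{H}$ with $A$ the full-chain idealized gate plus rings, Trotterize into $n$ steps, and use linear propagation of error together with \cref{small_ham} to kill the $\widetilde{H}$ pieces. The Trotter and dispersion contributions you identify are exactly the paper's.

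The one genuine difference is the reference you choose in the linear-propagation step, and it is precisely the obstacle you flag at the end. You compare the Trotter product to $e^{-iAt}$, so the intermediate vectors are the \emph{exactly evolved} states $e^{-iA(j-1)t/n}\ket{\psi}$. As you note, \cref{small_ham} is stated for genuine Gaussian packets; the exact evolutions carry dispersive tails controlled only in $L^2$ by \cref{thm:disp}, and since $\|\widetilde{H}\|_V$ is merely polynomially bounded this does not immediately give the exponential factor you need. You can rescue your version by writing $e^{-iAs}\ket{\psi}=\ket{\psi_{2s}}+\ket{\varepsilon_s}$ and bounding $\widetilde{H}\ket{\varepsilon_s}$ by $\|\widetilde{H}\|_V\,\|\ket{\varepsilon_s}\|$, but this introduces an extra cross term and, more importantly, is unnecessary.

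The paper sidesteps the issue by reversing the order of the two comparisons: it compares the Trotter product $\left(e^{-iAt/n}e^{-i\widetilde{H}t/n}\right)^n$ \emph{directly} to $U_{2t}^p=\left(U_{2t/n}^p\cdot\mathbb{I}\right)^n$ via \cref{thm_kitaev}. With this choice the intermediate states in the telescoping sum are $\prod_{i<j}U_{2t/n}^p\ket{\psi}=\ket{\psi_{2(j-1)t/n}}$, i.e.\ exact translated Gaussians with the appropriate gate phase. Then \cref{small_ham} applies verbatim to $\left(e^{-i\widetilde{H}t/n}-\mathbb{I}\right)\ket{\psi_{2t'}}$, and \cref{thm:disp} applies verbatim to $\left(e^{-iAt/n}-U_{2t/n}^p\right)\ket{\psi_{2t'}}$, with no need for any pointwise tail estimate on the dispersed evolution. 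The separate bound $\|(e^{-iAt}-U_{2t}^p)\ket{\psi}\|=O(mt\Delta p^3/N^3)$ that you front-load is then absorbed into the $n$ dispersion terms coming out of the telescoping sum. In short: take $U_j=U_{2t/n}^p$ (or $\mathbb{I}$) rather than $U_j=e^{-iAt/n}$ in your application of \cref{thm_kitaev}, and the obstacle you identified disappears.
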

\begin{proof}

\begin{figure}
	\centering
	\includegraphics[width=5in]{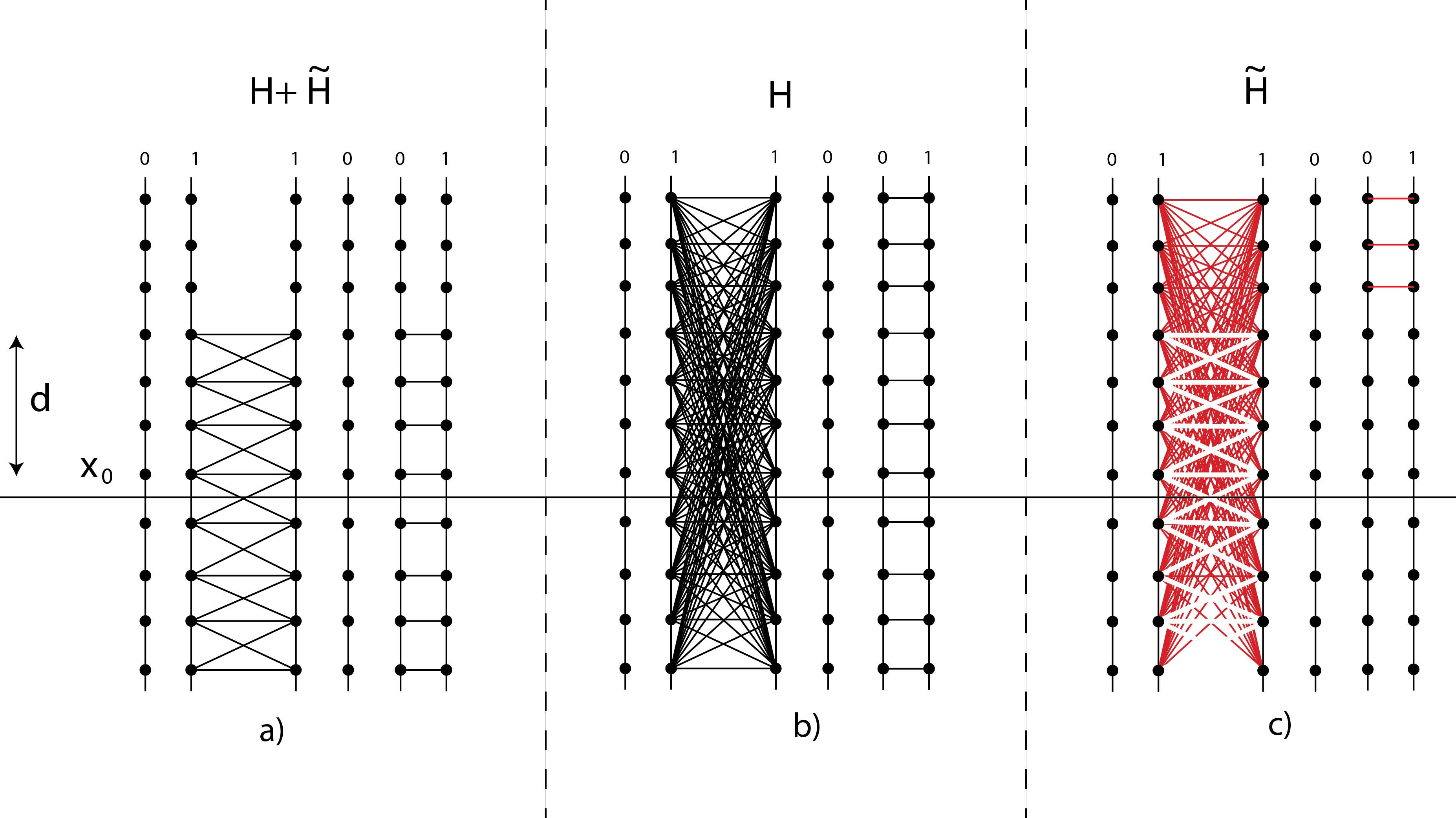}
	\caption[]{Pictorial representations of the relevant Hamiltonians for the proof.  In this circuit, we are implenting a CPHASE operation on the first two registers and an $\xx$ operation on the last one.  Note that the red lines are negative interaction terms.}\label{hamiltons}
\end{figure}

Let $H$ be the Hamiltonian obtained when the current gate Hamiltonians are extended to the whole chain.  Define $\widetilde{H}$ so that $H_{{ \rm current}}=H+\widetilde{H}$.  $\widetilde{H}$ contains negative correction terms that must be added to produce the actual dynamics (see \cref{hamiltons}).  Assuming $||H||_V \geq ||\widetilde{H}||_V$, if we Trotterize n times, it holds that:
\begin{equation}
\left|\left|\left(e^{-i(H+\widetilde{H})t}-\left(e^{-iHt/n}e^{-i \widetilde{H}t/n} \right)^n\right)\ket{\psi_0}\right|\right|=O \left(\frac{t^2 ||H||_V^2 e^{2||H||_V t/n}}{n} \right)
\end{equation}
$H$ consists of the ring Hamiltonian, and the extended gate Hamiltonians.  $H_{rings}$ has eigenvalues $2\cos(\frac{2 \pi p }{N})$, so $||H_{rings}^{2m}||_V \leq 2m$.  If $n$ is a large enough degree polynomial, $e^{||H|| t/n }\rightarrow 1$, so the above expression is:
\begin{equation}
O\left(\frac{m^2 t^2}{n} \right)
\end{equation}

Now we want to show that this Trotterization applies a phase and translates the packets, up to some approximation.  We want to bound:
\begin{equation}
\left|\left| \left(\left(e^{-iHt/n} e^{-i \widetilde{H}t/n} \right)^n-U_{2t}^p \right) \ket{\psi_0}\right|\right|
\end{equation}
Let $\ket{\psi_{2t'}}$ for $2t'\leq 2t$ be the same superposition, except translated and with added phase appropriate for the current gates.  We need bounds on 
\begin{equation}\label{eq_345}
\left|\left|\left(e^{-iHt/n}-U_{2t/n}^p\right)\ket{\psi_{2t'}}\right|\right|
\end{equation}
and on
\begin{equation}\label{eq_346}
\left|\left|\left( e^{-i\widetilde{H}t/n}-\mathbb{I}\right)\ket{\psi_{2t'}}  \right|\right|
\end{equation}

The distinction between the momentum basis and the position basis is important here.  \cref{eq_345} is bounded using the momentum basis, and \cref{eq_346} is bounded in the position basis.  Each time we switch from one to the other, we will incur some small error.  However, we will only need to switch $n=poly(N)$ times, so we will only incur an exponentially small error.

First recall that we have already shown that $e^{-iHt/n}$ is exactly $U_{2t/n}^p$ on relevant basis states, up to dispersion (by \cref{thm:disp}).  So, we can write:

\begin{equation}
\left|\left|\left(e^{-iHt/n}-U_{2t/n}^p\right)\ket{\psi_{2t'}}\right|\right|=O\left(\frac{m t \Delta p^3}{n N^3} \right)
\end{equation} 

Now we must treat the $\left|\left|(e^{-i \widetilde{H} t/n}-\mathbb{I})\ket{\psi_{2t'}}\right|\right|$ term.  Suppose that $\ket{\psi_{2t'}}$ is written in the RBD:
$$
\ket{\psi}=\sum_i c_i \ket{\psi_i}
$$
\noindent Once again we can verify that 
$$
\braket{\psi_i|\left(e^{-i\widetilde{H}t}-\mathbb{I} \right)^\dagger \left(e^{-i\widetilde{H}t}-\mathbb{I} \right)|\psi_j}=0
$$
when $i \neq j$, so we can bound the superposition with a bound on each relevant computational basis state.  

So now suppose that $\ket{\psi_{2t'}}$ is a computational basis state, and that it is at least a distance $d$ from the interactions in $\widetilde{H}$.  Suppose also that there are $|\widetilde{H}|=poly(N)$ many interactions in $\widetilde{H}$, all with only polynomial strength.  We can bound \cref{eq_346}:

\begin{equation}
\fl \left|\left|\left( e^{-i\widetilde{H} t/n}-\mathbb{I}\right)\ket{\psi_{2t'}} \right|\right|=\left|\left|\sum_{p=1}^\infty \frac{[-i\widetilde{H}t/n]^p}{p!}\ket{\psi_{2 t'}} \right|\right| =\left|\left|\sum_{p=1}^\infty \frac{[-i\widetilde{H}t/n]^p}{p!}\ket{\psi_{2t'}} \right|\right|_V
\end{equation}
\begin{equation}
\fl \leq ||\widetilde{H}\ket{\psi_{2t'}}||_V \sum_{p=1}^\infty \frac{(t/n)^p (||\widetilde{H}||_V)^{p-1}}{p!}=\frac{||\widetilde{H}\ket{\psi_{2t'}}||_V}{||\widetilde{H}||_V} \sum_{p=1}^\infty \frac{(t/n)^p (||\widetilde{H}||_V)^{p}}{p!}
\end{equation}
\begin{equation}
=\frac{||\widetilde{H} \ket{\psi}_{2t'}||_V}{||\widetilde{H}||_V} \left( e^{||\widetilde{H}||_V t/n}-1 \right)=O\left(poly(N)e^{|\widetilde{H}|\Phi t/n} e^{-\frac{d^2}{2\Delta x^2}} \right)
\end{equation}

\noindent by \cref{small_ham}.  Applying \cref{thm_kitaev} again, we get that:
\begin{equation}
\fl \left|\left|\left(\left(e^{-iHt/n} e^{-i\widetilde{H}t/n} \right)^n-U_{2t}^p\right)\ket{\psi_0}\right|\right|=O\left( poly(N)e^{poly(N)\Phi t/n}e^{-\frac{d^2}{2\Delta x^2}}+\frac{m t \Delta p^3}{N} \right)
\end{equation}
We will assume $n$ is a large enough polynomial in $N$ that $e^{poly(N)\Phi t/n}$ is small.  The added $e^{-\frac{d^2}{2\Delta x^2}}$ will always bring this term to zero in the limit of large $N$.  Adding in the trotterization term,
\begin{equation}
\left|\left|\left(e^{-iH_{{ \rm current}}t}-U_{2t}^p \right)\ket{\psi_0}\right|\right|=O\left(poly(N) e^{-\frac{d^2}{2\Delta x^2}}+\frac{m t \Delta p^3}{N^3}+\frac{m^2 t^2}{n}\right)
\end{equation}

\end{proof}

We will use this result to construct the following theorem, which treats the case we are interested in.
\Trotter

\begin{proof}
Note that the extra gate Hamiltonians commute with each other, since they act on different subsystems.  So, just like last time, assuming that $||H_{{ \rm current}}||_V \geq k ||H_j||_V$, it holds that:
\begin{eqnarray}
\fl \left|\left| \left( e^{-i(H_{{ \rm current}}+H_1+...+H_k)t}-\left( e^{-iH_{{ \rm current}}t/n'} e^{-i(H_1+H_2+...+H_k)t/n'} \right)^{n'}\right)\ket{\psi} \right|\right| \\ 
\fl =\left|\left| \left( e^{-i(H_{{ \rm current}}+H_1+...+H_k)t}-\left( e^{-iH_{{ \rm current}}t/n'} e^{-iH_1 t/n'} e^{-iH_2 t/n'}...e^{-iH_k t/n'} \right)^{n'}\right)\ket{\psi} \right|\right|\\
\fl \leq \left|\left| \left( e^{-i(H_{{ \rm current}}+H_1+...+H_k)t}-          \left( e^{-iH_{{ \rm current}}t/n'} e^{-iH_1 t/n'} e^{-iH_2 t/n'}...e^{-iH_k t/n'} \right)^{n'}\right) \right|\right|_V \label{eq_36}
\end{eqnarray}
\begin{equation}
=O\left(\frac{t^2||H_{{ \rm current}}||_V ^2 e^{2||H_{{ \rm current}}||_V t/n'}}{n'}\right)
\end{equation}
which can be seen by writing out the matrix exponentials and bounding terms of second order in $t/n'$

$\left|\left|H_{{ \rm current}}\right|\right|$ is at most $2m$, so this term is $O\left( \frac{t^2 m^2 e^{mt/n'}}{n'}\right)$.  We will eventually choose $mt/n' \rightarrow 0$ so this can be written as $O\left(\frac{t^2 m^2}{n'} \right)$.  

Now we want to show that these extra gate Hamiltonians can be ignored.  We want to bound the difference:
\begin{equation}
\left|\left|\left(\left( e^{-iH_{{ \rm current}}t/n'} e^{-iH_1 t/n'} ... e^{-iH_k t/n'} \right)^{n'}-U_{2t}^p \right)\ket{\psi_0} \right| \right|
\end{equation}
which can be written as:
\begin{equation}
\left|\left|\left(\left( e^{-iH_{{ \rm current}}t/n'} e^{-iH_1 t/n'} ... e^{-iH_k t/n'} \right)^{n'}-\left(U_{2t/n'}^p \mathbb{I} \mathbb{I}...\mathbb{I} \right)^{n'}\right)\ket{\psi_0} \right| \right|
\end{equation}
From \cref{Trotter_1}, we know that:
\begin{equation}
\fl \left|\left|\left(e^{-iH_{{ \rm current}}t/{n'}}-U_{2t/n'}^p\right)\ket{\psi_{2t'}}\right|\right|=O\left(poly(N) e^{-\frac{d^2}{2\Delta x^2}} +\frac{m t \Delta p^3}{n' N^3}+\frac{m^2 t^2 }{{n'}^2 n}\right)
\end{equation}

\noindent and just like in the proof of \cref{Trotter_1}:
\begin{equation}
\left|\left|\left( e^{-iH_j t/n'}-\mathbb{I}\right)\ket{\psi_{2t'}} \right|\right| \leq \frac{||H_j \ket{\psi}||_V}{||H_j||_V} \left( e^{||H_j||_V t/n'}-1 \right)
\end{equation}
Assuming $n$ and $n'$ are large enough
\begin{equation}
\left|\left|\left(e^{-i H_j t/n'}-\mathbb{I} \right)\ket{\psi_{2t'}} \right|\right| =O\left(poly(N) e^{-\frac{d^2}{2\Delta x^2}} \right)
\end{equation}
Applying \cref{thm_kitaev},
\begin{eqnarray}
\fl \left|\left|\left(\left(e^{-iHt/n'} e^{-iH_1t/n'} ... e^{-iH_k t/n'} \right)^{n'} - U_{2t}^p \right)\ket{\psi_0} \right|\right|=\nonumber\\
O\left(poly(N)e^{-\frac{d^2}{2\Delta x^2}}+\frac{m t \Delta p^3}{N^3}+\frac{m^2 t^2}{nn'} \right)
\end{eqnarray}
which then implies:
\begin{eqnarray}
\fl \left|\left|\left(e^{-i(H+H_1+...+H_k)t}-U_{2t}^p \right)\ket{\psi_0} \right|\right|=\nonumber\\
O\left(poly(N)e^{-\frac{d^2}{2\Delta x^2}}+\frac{m t \Delta p^3}{N^3}+\frac{m^2 t^2}{nn'} + \frac{t^2 m^2 }{n'}\right)
\end{eqnarray}

\end{proof}

\subsection{Big Picture}\label{subsec_6_5}

For clarity, we will include a note here on how exactly one would use these trotter theorems to derive our error bounds.  Suppose we have a quantum circuit with $g$ gate blocks on $m$ qubits.  Suppose our quantum circuit is designed to act on input $\ket{0}^{\otimes m}$, and let $U$ denote the circuit unitary.  Construct a spin network as described in the body of the paper to simulate this quantum circuit on a set of spin chains.  Let us say that each gate block has the same size on our spin rings (we can always make the interaction strength weaker if need be).  Denote the Hamiltonian for this spin network $H$, and let $T$ be the total time required for our packets to translate through the gate blocks, assuming they travel at the the group velocity.

Further, let $\ket{\widetilde{\psi_0}}$ be the quantum state corresponding to Gaussian packets initialized just outside the gate blocks centered on momentum $N/4$ heading toward the gate blocks.  Lastly, let $U_{{ \rm circuit}}$ be the encoded circuit unitary.  $U_{{ \rm circuit}}$ sends $\ket{\widetilde{\psi_0}}$ to the packet encoding of $U\ket{0}^{\otimes m}$, placed after all the gates.

Our end goal is to show that 
\begin{equation}\label{2355}
{ \rm Error}=\left|\left|\left(e^{-iH T}-U_{{ \rm circuit}}\right)\ket{\widetilde{\psi_0}}\right|\right|
\end{equation}
is small.  First, we must replace $\ket{\widetilde{\psi_0}}$ by the slightly unnormalized set of packets used in our proofs.  Let $\ket{\psi_0}$ be such an unnormalized state.  We have:
\begin{eqnarray}
\fl \left|\left|\left(e^{-iHT}-U_{{ \rm circuit}}\right)(\ket{\widetilde{\psi_0}}+\ket{\psi_0}-\ket{\psi_0})\right|\right| \leq \left|\left|\left(e^{-iHT}-U_{{ \rm circuit}}\right)\ket{\psi_0}\right|\right|+\left|\left|\ket{\widetilde{\psi_0}}-\ket{\psi_0}\right|\right|=  \nonumber \\
\left|\left|\left(e^{-iHT}-U_{{ \rm circuit}}\right)\ket{\psi_0}\right|\right|+O\left(\sqrt{\frac{m}{\Delta p}} \right)
\end{eqnarray}

Our strategy is to break up $e^{iHT}$ and $U_{{ \rm circuit}}$ into small pieces and apply \cref{thm_kitaev}.  Suppose that it takes $t_1$ for the packets to translate through the first block, $t_2$ for the packets to translate through the second block, etc.  Let us break up $U_{{ \rm circuit}}$ into $U_g U_{g-1}...U_1$ where each $U_i$ translates the packets from the beginning of gate block $i$ to the end of gate block $i$ as well as applies an appropriate phase.  More precisely, let us say that each $U_i$ translates the packets so that they are centered on the first interactions for the next block.  \cref{2355} can then be written as 
\begin{equation}
\fl \left|\left|\left(e^{-i H t_g}e^{-i H t_{g-1}}...e^{-i H t_1}-U_g U_{g-1}...U_1\right)\ket{\psi_0}\right|\right|  \leq \sum_{j=1}^g \left|\left|(e^{-iHt_j}-U_j)\prod_{k=1}^j U_k \ket{\psi_0}  \right|\right|
\end{equation}
We can break up $U_j$ further into two transient regions and the interior region, and break up $t_j$ into three different times accordingly.  
\begin{equation}
=\sum_{j=1}^g \left|\left| \left( e^{-i H t_j^{{ \rm trans}}}e^{-i H t_j^{{ \rm int}}}e^{-i H t_j^{{ \rm trans}}}-U_j^{{ \rm trans}}U_j^{{ \rm int}}U_j^{{ \rm trans}} \right)\prod_{k=1}^j U_k \ket{\psi_0} \right|\right|
\end{equation}

Now we can leverage \cref{thm_kitaev} to upper bound this quantity further:
\begin{eqnarray}
\leq \sum_{j=1}^g\left|\left|\left(e^{-iH t_j^{{ \rm trans}}}-U_j^{{ \rm trans}}\right) U_j^{{ \rm int}} U_j^{{ \rm trans}}\prod_{k=1}^j U_k\ket{\psi_0}\right|\right|+\nonumber\\ 
\fl \left|\left|\left(e^{-iH t_j^{{ \rm int}}}-U_j^{{ \rm int}}\right)  U_j^{{ \rm trans}}\prod_{k=1}^j U_k \ket{\psi_0}\right|\right|+
\left|\left|\left(e^{-iH t_j^{{ \rm trans}}}-U_j^{{ \rm trans}}\right) \prod_{k=1}^j U_k \ket{\psi_0}\right|\right| 
\end{eqnarray}

The first and third terms are similar, so they are taken care of in the same way using the transient bound.  Define $V_j^{{ \rm trans}}$ to be the translation operator without the added phase from the gate.  The third term, for instance, can be written as
\begin{eqnarray}
\left|\left|\left(e^{-i H t_j^{{ \rm trans}}}-U_j^{{ \rm trans}}+V_j^{{ \rm trans}}-V_j^{{ \rm trans}} \right)\prod_{k=1}^j U_k\ket{\psi_0}\right|\right| \leq \\
\left|\left|\left(e^{-i H t_j^{{ \rm trans}}}-V_j^{{ \rm trans}} \right)\prod_{k=1}^j \ket{\psi_0}\right|\right|+\left|\left|\left(V_j^{{ \rm trans}}-U_j^{{ \rm trans}}\right)\prod_{k=1}^j U_k\ket{\psi_0}\right|\right|
\end{eqnarray}
The first term is exactly our transient bound and the second term is the ``lost phase'' (see the proof of the transient bound), both of which have the same order.  

The second term is exactly the Trotter theorem.  We let $H=H_1+H_2$ and rewrite the second term in equation (128) as:
\begin{equation}
\fl \left|
\left|
\left[\left(
e^{-i H_1 t_j^{{ \rm int}}/n}e^{-i H_2 t_j^{{ \rm int}}/n}
\right)^n-\left(e^{-i H_1 t_j^{{ \rm int}}/n}e^{-i H_2 t_j^{{ \rm int}}/n}\right)^n+e^{-iH t_j^{{ \rm int}}}-U_j^{{ \rm int}}\right]\prod_{k=1}^j U_k \ket{\psi_0}\right|\right|
\end{equation}
\begin{eqnarray}
\fl \leq \left|\left|\left(
e^{-i H_1 t_j^{{ \rm int}}/n}e^{-i H_2 t_j^{{ \rm int}}/n}
\right)^n-e^{-iH t_j^{{ \rm int}}}\right|\right|+\nonumber \\
\left|\left|   \left[\left(
e^{-i H_1 t_j^{{ \rm int}}/n}e^{-i H_2 t_j^{{ \rm int}}/n}
\right)^n-U_j^{{ \rm int}}\right]\prod_{k=1}^j U_k \ket{\psi_0}\right|\right|
\end{eqnarray}
The first term is bounded using the standard Trotter type bound, the second term is bounded using \cref{thm_kitaev} and the fact that the ``bad'' interactions are far away from the packets at every point in time.

\end{document}